\definecolor{chicago-maroon}{RGB}{128,0,0}
\definecolor{northwestern-purple}{RGB}{82,0,99}
\providecommand{\keywords}[1]{\textbf{Keywords:} #1}
\definecolor{ForestGreen}{RGB}{34,139,34}
\tikzset{
	>=stealth',
	true/.style={
		rectangle,
		draw=black, very thick,
		text width=6.5em,
		minimum height=2em,
		text centered,
		fill=gray, opacity = 0.5},
	punkt/.style={
		rectangle,
		rounded corners,
		draw=black, very thick,
		text width=6.5em,
		minimum height=2em,
		text centered},
	est/.style={
		circle,
		draw=black, very thick,
		text centered},
	shade/.style={
		circle,
		draw=black, very thick, fill=gray!50,
		text centered},
	weight/.style={
		circle,
		draw=black, very thick,
		text width=6.5em,
		minimum height=2em,
		text centered},
	pil/.style={
		->,
		thick,
		shorten <=2pt,
		shorten >=2pt,},
	double/.style={
		<->,
		thick,
		shorten <=2pt,
		shorten >=2pt,},
	dash/.style={
		dashed,
		thick,
		shorten <=2pt,
		shorten >=2pt,},
	dashdouble/.style={
		<->,
		dashed,
		thick,
		shorten <=2pt,
		shorten >=2pt,}
}
\newcolumntype{C}[1]{>{\centering\arraybackslash}p{#1}}
\def\IIFF{\mathbb{IF}}
\def\var{\mathsf{var}}
\def\cov{\mathsf{cov}}
\def\bias{\mathsf{bias}}
\def\op{\mathsf{op}}
\renewcommand{\[}{\left[}
\renewcommand{\hat}{\widehat}
\renewcommand{\tilde}{\widetilde}
\theoremstyle{plain}
\newtheorem{theorem}{Theorem}
\newtheorem{lemma}{Lemma}
\newtheorem{proposition}{Proposition}
\theoremstyle{definition}
\newtheorem{assumption}{Assumption}
\newtheorem{conjecture}{Conjecture}
\newtheorem{problem}{Problem}
\newtheorem{remark}{Remark}
\newtheorem*{remark*}{Remark}
\newcommand{\diff}{{\mathrm d}}
\newcommand\indep{\protect\mathpalette{\protect\independenT}{\perp}}
\def\independenT#1#2{\mathrel{\rlap{$#1#2$}\mkern2mu{#1#2}}}
\newcommand{\diag}{\mathsf{diag}}
\newcommand{\sff}{{\mathsf{f}}}
\newcommand{\calF}{{\mathcal{F}}}
\renewcommand{\tilde}{\widetilde}
\renewcommand{\hat}{\widehat}
\newcommand{\bmu}{{\bm{u}}}
\newcommand{\bbE}{{\mathbb{E}}}
\newcommand{\bbP}{{\mathbb{P}}}
\newcommand{\bbR}{{\mathbb{R}}}
\def\ubar#1{\underline{\sbox\tw@{$#1$}\dp\tw@\z@\box\tw@}}
\def\leftarrowCirc{\hbox{$\leftarrow$}\kern-1.5pt\hbox{$\circ$}}
\def\Circrightarrow{\hbox{$\circ$}\kern-1.5pt\hbox{$\rightarrow$}}
\def\Circleftarrow{\hbox{$\circ$}\kern-1.5pt\hbox{$\leftarrow$}}
\def\rightarrowCirc{\hbox{$\rightarrow$}\kern-1.5pt\hbox{$\circ$}}
\def\aipw{\mathsf{aipw}}
\def\b{\mathrm{b}}
\def\bH{\mathbf{H}}
\def\bI{\mathbf{I}}
\def\bM{\mathbf{M}}
\def\bO{\mathbf{O}}
\def\bX{\mathbf{X}}
\def\bo{\mathbf{o}}
\def\bt{\mathbf{t}}
\def\bv{\mathbf{v}}
\def\bx{\mathbf{x}}
\def\by{\mathbf{y}}
\def\bbeta{\bm{\beta}}
\def\bmu{\bm{\mu}}
\def\bSigma{\bm{\Sigma}}
\newcommand{\cmark}{\ding{51}}%
\newcommand{\xmark}{\ding{55}}%
\def\bSig\mathbf{\Sigma}
\def\hat{\widehat}
\def\sff{\mathsf{d}}
\def\bias{\mathrm{bias}}
\def\var{\mathrm{var}}
\def\op{\mathrm{op}}
\def\cov{\mathrm{cov}}
\def\diag{\mathrm{diag}}
\def\diff{\mathrm{d}}
\def\bmu{\bm{\mu}}
\def\adj{\mathsf{adj}}
\def\db{\mathsf{db}}
\def\unadj{\mathsf{unadj}}
\def\aipw{\mathsf{aipw}}
\def\plim{\mathrm{plim}}
\def\bSigma{\bm{\Sigma}}
\def\tilde{\widetilde}
\def\trace{\mathsf{tr}}
\titlespacing*{\section}{0pt}{0.5\baselineskip}{0.2\baselineskip}
\titlespacing*{\subsection}{0pt}{0.4\baselineskip}{0.15\baselineskip}
\renewcommand{\b}{\mathrm{b}}
\def\mytitle{\Large Covariate Adjustment in Randomized Experiments Motivated by Higher-Order Influence Functions}
\begin{document}

\title{\mytitle}

\author[1]{\small Sihui Zhao\thanks{E-mail: \href{shzhao0115@sjtu.edu.cn}{shzhao0115@sjtu.edu.cn}}}
\author[2, 3]{\small Xinbo Wang\thanks{E-mail: \href{cinbo_w@sjtu.edu.cn}{cinbo\_w@sjtu.edu.cn}}}
\author[1, 3]{\small Lin Liu\orcidlink{0000-0002-9883-7962}\thanks{E-mail: \href{linliu@sjtu.edu.cn}{linliu@sjtu.edu.cn}}}
\author[4]{\small Xin Zhang\thanks{E-mail: \href{xin.zhang6@pfizer.com}{xin.zhang6@pfizer.com} The last two authors are alphabetically ordered.}}

\affil[1]{\small School of Mathematical Sciences, Institute of Natural Sciences, and MOE-LSC, Shanghai Jiao Tong University}
\affil[2]{\small Department of Bioinformatics and Biostatistics, School of Life Sciences, Shanghai Jiao Tong University}
\affil[3]{\small SJTU-Yale Joint Center for Biostatistics and Data Science, Shanghai Jiao Tong University}
\affil[4]{\small Data Sciences and Analytics, Pfizer Inc}
    
\date{\small\today}
    
\maketitle

\begin{abstract}
{\small Higher-Order Influence Functions (HOIF), developed in a series of papers over the past twenty years, is a fundamental theoretical device for constructing rate-optimal causal-effect estimators from observational studies. However, the value of HOIF for analyzing well-conducted randomized controlled trials (RCT) has not been explicitly explored. In the recent U.S. Food and Drug Administration and European Medicines Agency guidelines on the practice of covariate adjustment in analyzing RCT, in addition to the simple, unadjusted difference-in-mean estimator, it was also recommended to report the estimator adjusting for baseline covariates via a simple parametric working model, such as a linear model. However, when the number of baseline covariates $p$ is large, the recommendation is somewhat murky. In this paper, we show that HOIF-motivated estimators for the treatment-specific mean have significantly improved statistical properties compared to popular adjusted estimators in practice when $p$ is relatively large relative to the sample size $n$. We also characterize the conditions under which the HOIF-motivated estimator improves upon the unadjusted one. More importantly, we demonstrate that several state-of-the-art adjusted estimators proposed recently can be interpreted as particular HOIF-motivated estimators, thereby placing these estimators in a more unified framework. Numerical and empirical studies are conducted to corroborate our theoretical findings. An accompanying \texttt{R} package can be found on \href{https://cran.r-project.org/web/packages/HOIFCar/index.html}{CRAN}.}
\end{abstract}

\keywords{\small Covariate adjustment, Randomized clinical trials, Higher-order influence functions}

\doublespacing
\normalsize

\begin{bibunit}[plainnat]

\section{Introduction}
\label{sec:intro}

Evidence from randomized clinical trials (RCT) is widely regarded as the gold standard for evaluating treatment effects in comparative effectiveness research. Complete randomization, along with its extensions, such as covariate-adaptive randomization and rerandomization \citep{pocock1975sequential, morgan2012rerandomization, ma2024new}, relieves analysts' burden of justifying the unconfoundedness assumption. Furthermore, the true propensity score in RCT is known to data analysts \citep{aronow2025nonparametric}, offering another important advantage over observational studies.

The most straightforward RCT designs include Completely Randomized Experiments (CRE) and Bernoulli sampling, in which treatments are randomly assigned without leveraging any covariate information. Under CRE or Bernoulli sampling, the Difference-in-Mean estimator, referred to as \emph{the unadjusted estimator} henceforth, is unbiased, $\sqrt{n}$-consistent, centered and asymptotically normal (CAN) for the average treatment effect (ATE), where $n$ denotes the sample size. 
However, modern trials typically collect multiple baseline measurements.
It is now widely recognized that, both theoretically and empirically, adjusting for baseline covariates (especially those prognostic factors affecting potential outcomes) in either the design stage or the analysis stage can yield greater efficiency than the unadjusted estimator (see e.g. \citet{yang2001efficiency, zhang2008improving, lin2013agnostic, ma2020statistical, zhao2021covariate, ma2022regression, ye2023toward} and references therein). In this paper, we focus only on the adjusting for baseline covariates in the analysis stage and primarily on CRE. But we will mention parallel results under Bernoulli sampling when doing so facilitates understanding.

Covariate adjustment methods in RCT have a long history in statistics \citep{freedman2008regressiona, freedman2008regressionb}. Adjustment using linear or simple parametric working models \citep{ma2022regression, ye2023toward} has been endorsed in the latest statistical analysis guidelines issued by the U.S. Food and Drug Administration (FDA) \citep{FDA2023}.
The rationale for the improved efficiency through covariate adjustment, even with a likely misspecified linear working model, can be clearly articulated within the superpopulation framework. 
As noted in \citet{richardson2014causal}, the semiparametric theory developed in \citet{robins1994estimation} motivated key works on covariate adjustment in RCT \citep{tsiatis2008covariate, moore2009covariate}.
Specifically, the theory indicates that the Augmented Inverse Probability Weighting (AIPW) estimator, rather than the Inverse Probability Weighting (IPW) estimator, achieves the smallest possible asymptotic variance, more precisely the Semiparametric Variance Bound (SVB), when the outcome model is correctly specified. 
The SVB of the ATE is characterized by its (first-order) influence function, a central concept in semiparametric theory. 
The variance reduction property of the AIPW estimator, constructed based on the influence function, can be explained geometrically: the augmented term subtracted off the IPW estimator can be viewed as a \emph{projection} of the IPW estimator onto a specific subspace, thereby reducing variance through the contracting norm property of the projection. 
This intuition holds even with a misspecified linear model.

In CRE, the unadjusted estimator coincides with the IPW estimator, while the adjusted estimator (see \eqref{adjusted estimator} for its specific form) is algebraically equivalent to the AIPW estimator, using an estimated linear working model for the outcome regression. 
 Thus, the previous geometric reasoning shows variance reduction for the adjusted over unadjusted estimator in fixed $p$ (dimension of baseline covariates), large $n$ regimes. Section~\ref{sec:hoif} elaborates this intuition.

With  technical advancements, modern RCT  routinely collect multidimensional baseline covariates. However, per the recent FDA guideline, when $p$ is large relative to $n$, the best practice to adjust for covariates in the analysis stage becomes murky. The recent literature has therefore gradually turned to the development of methods adjusting for higher-dimensional covariates. Since we primarily focus on the case using linear working models, we highlight some of the most relevant papers. \citet{ma2022regression} and \citet{ye2023toward} demonstrated that the estimator that adjusts for baseline covariates $\bx$ and the interaction between $\bx$ and treatment $t$ in a linear working model using OLS is CAN and more efficient or as efficient as the unadjusted estimator in various designs when $p$ is fixed and $n \rightarrow \infty$, in the superpopulation framework. \citet{jiang2025adjustments} showed that the same results hold for this OLS-based estimator when $p = o (\sqrt{n})$; but when $\sqrt{n} \lesssim p \lesssim n$, a true linear outcome model needs to be further assumed, an assumption often considered unduly strong in RCTs.
\citet{lei2021regression} also considered covariate adjustment in a linear working model, under CRE and the design-based (or randomization-based) framework, which allows $p = O (n^{2 / 3})$ up to log-factors. Unlike \citet{jiang2025adjustments}, they do not assume that the linear working model is correctly specified when $p \gtrsim \sqrt{n}$. More recently,
\citet{lu2025debiased} devised a debiased estimator that improves on that of \citet{lei2021regression}: their estimator is never less efficient than, and can sometimes be more efficient than the unadjusted estimator when $p=o(n)$. \citet{lu2025debiased} note that \citet{chang2024exact} constructed a similar but exactly unbiased estimator earlier, but only with theoretical results for fixed $p$.

Over the past two decades, higher-order influence functions (HOIF) have been developed as a generalization of classical semiparametric theory to construct rate-optimal estimators for parameters like the ATE from observational data \citep{robins2008higher}. 
Given the significant role of classical semiparametric theory in covariate adjustment, a natural inquiry arises: \emph{can the HOIF of the ATE also inform the development of covariate adjustment methods in RCT?} In this paper, we provide an affirmative answer to this question. 
For ease of presentation, we focus solely on the treatment-specific mean in the treatment arm (and, by symmetry, the control arm) instead of the ATE. 
In our accompanying \texttt{R} package available on \href{https://cran.r-project.org/web/packages/HOIFCar/index.html}{CRAN}, the corresponding point and interval estimators of ATE are also available.

\subsection*{Main contributions and organization}

The main contributions of our paper are summarized below:
\begin{enumerate}[leftmargin=0.5cm,topsep=0.25pt]
\item 
We demonstrate that a HOIF-motivated adjusted estimator has improved statistical properties over standard adjusted or unadjusted estimators. 
The theoretical analyses 
are ``conceptually simple'', involving only elementary calculations. We also develop a variance estimator of this 
estimator, deferred to Appendix~\ref{app:variance estimators} due to space limitation. This new variance estimator is recommended because of the improved coverage when $n$ is small, based on empirical observations from our simulation studies.


\item Our impression is that the HOIF theory remains elusive even among statisticians, so, in Section~\ref{sec:hoif}, 
we review it in an accessible manner to garner more interest from statisticians. The more important point we hope to make is not to propose a new estimator. Instead,
we show that several state-of-the-art adjusted estimators mentioned above are specific HOIF-motivated estimators, placing them in a more unified framework; see Table~\ref{t:one}.

\item We develop an accompanying user-friendly \texttt{R} package that is available on \href{https://cran.r-project.org/web/packages/HOIFCar/index.html}{CRAN}, which delivers both point and interval estimators for ATE and treatment/control-specific means.
\end{enumerate}

The rest of the paper is organized as follows. Section \ref{sec:setup} introduces the basic setup. Section~\ref{sec:main results} presents a variety of HOIF-motivated adjusted estimators and their statistical properties (Sections~\ref{sec:hoif results} and \ref{sec:variety}), and also draws connections to other state-of-the-art estimators (Section~\ref{sec:understanding}). Simulation studies and real data analysis are carried out in Section~\ref{sec:simulations}. We conclude the paper in Section \ref{sec:conclusions}. The Appendix contains supplementary technical and empirical results.

\section{Notation and Basic Setup}
\label{sec:setup}

\subsection*{Notation}

In this paper, we denote sample size as $n$ and  baseline covariates dimension as $p$. 
Design-based quantities have superscript ``$\sff$'' to distinguish from superpopulation frameworks; e.g., $\bbE^{\sff}$, $\bias^{\sff}$, and $\var^{\sff}$ for expectation, bias, variance. Superscripts are omitted when unambiguous.
We also adopt the common asymptotic and stochastic asymptotic notation, including $O (\cdot)$, $o (\cdot)$, $O_{\bbP} (\cdot)$, $o_{\bbP} (\cdot)$, 
with $\bbP$ the true distribution.
For square matrix $\bM$, $\trace(\bM)$ is the trace, $\Vert \bM \Vert_{\op}$ the operator norm, and $\bM^{-}$ the inverse or pseudoinverse. Vector norm $\Vert \bv \Vert$ is the $\ell_{2}$-norm, and $i \in [n]$ denotes $i=1,\dots,n$.

For the $n \times p$ covariate matrix $\bX \coloneqq (\bx_{1}, \cdots, \bx_{n})^{\top}$ with $\bx_{i} \in \bbR^{p}$ for $i\in [n]$, let $\bar{\bx} \coloneqq n^{-1} \sum_{i = 1}^{n} \bx_{i} \in \bbR^{p}$ 
be the row-wise average vector. Following standard practice in covariate adjustment in randomized experiments, we center the covariate/design matrix by $\bar{\bx}$ 
to obtain $\bX_{c} \coloneqq \left( \bx_{1} - \bar{\bx}, \cdots, \bx_{n} - \bar{\bx} \right)^{\top}$. 
We then define the $n \times n$ ``hat'' projection matrix 
$\bH \coloneqq \bX_{c} \hat{\bSigma}^{-} \bX_{c}^{\top} \equiv \left( H_{i, j}, 1 \leq i, j \leq n \right), \text{ where } \hat{\bSigma} \coloneqq \bX_{c}^{\top} \bX_{c}.$

In the random design setting of the superpopulation framework, 
denote $\bmu \coloneqq \bbE \bx$ and $\bSigma \coloneqq n \bbE (\bx - \bmu) (\bx - \bmu)^{\top}$. 
For any vector $\bv = (v_{1}, \cdots, v_{n})^{\top}$ of length $n$, define 
\begin{align*}
V_{n} (v) \coloneqq \frac{1}{n} \sum_{i = 1}^{n} v_{i}^{2} - \frac{1}{n (n - 1)} \sum_{1 \leq i \neq j \leq n} v_{i} v_{j} \equiv \frac{1}{n - 1} \sum_{i = 1}^{n} \left( v_{i} - \bar{v} \right)^{2},
\end{align*}
as the sample variance of $\bv$, where $\bar{v} \coloneqq n^{-1} \sum_{i = 1}^{n} v_{i}$.

\subsection*{Basic Setup}

Throughout this paper, we  
observe the  data matrix: $\bO \in \bbR^{n \times (p + 2)} \coloneqq \left( \bo_{1}, \cdots, \bo_{n} \right)^{\top}$, where $ \bo_{i} \coloneqq \left( \bx_{i}^{\top}, t_{i}, y_{i} \right)^{\top} \in \bbR^{p + 2}$ for $i \in [n]$, with $t$ and $y$ denoting the treatment indicator and the outcome.
Let $\bt \coloneqq (t_{1}, \cdots, t_{n})^{\top}$ and $\by \coloneqq (y_{1}, \cdots, y_{n})^{\top}$. We always assume that $p < n$ and $\lim_{n \rightarrow \infty} p / n = \alpha \in [0, 1)$
without further mentioning this assumption. Our result thus covers the relatively more challenging ``proportional asymptotic'' regime at least for $p < n$. Without loss of generality, we take $t \in \{0, 1\}$ and $y \in \bbR$. In RCTs, $\bt$ is determined by exogenous randomization, hence under the investigator's control. In particular, we mainly consider the CRE, which assign $n_{1}$ out of $n$ subjects uniformly into the treatment group ($t = 1$), and the remaining $n_{0}$ subjects into the control group. Let $\pi_{1} \coloneqq n_{1} / n$ and $\pi_{0} \coloneqq 1 - \pi_{1}$ 
be the treatment and control proportions respectively.

Denoting the potential outcome vector under treatment as $\by (1) = (y_{1} (1), \cdots, y_{n} (1))^{\top}$ and under control as $\by (0) = (y_{1} (0), \cdots, y_{n} (0))^{\top}$. By the standard consistency assumption, the observed outcome and potential outcomes are connected by $y \equiv t y (1) + (1 - t) y (0)$. In RCT, randomization licenses the use of observables $\bO$ to identify certain causal quantities defined via potential outcomes. In this paper, we assume :
\begin{assumption}[Randomization]
\label{as:randomization}
$\bt$ is assigned via CRE or the Bernoulli sampling, so $\bt \indep \{\bx, \by (0), \by (1)\}$.
\end{assumption}

For example, if one is interested in the treatment-specific mean $\bar{\tau} \coloneqq \bbE^{\sff} y (1) \equiv \frac{1}{n} \sum_{i = 1}^{n} y_{i}(1)$,
one can identify $\bar{\tau}_{1}$ via
$\hat{\tau}_{\unadj} \coloneqq \frac{1}{n_{1}} \sum_{i = 1}^{n} t_{i} y_{i} \equiv \frac{1}{n} \sum_{i = 1}^{n} \frac{t_{i}}{\pi_{1}} y_{i}.$
$\hat{\tau}_{\unadj}$ is often called \textit{the unadjusted estimator} or \textit{the IPW estimator}. It is easy to see that $\hat{\tau}_{\unadj}$ is unbiased for $\bar{\tau}_{1}$. One can similarly define the design-based control specific mean and the ATE, together with their corresponding unadjusted or IPW estimators. Without loss of generality, we only consider the treatment specific mean $\bar{\tau}$ for two reasons. First, all the results hold for the control specific mean by symmetry. Second, treatment and control specific means are more primitive parameters than the ATE.

Occasionally, we also consider the superpopulation framework, under which the observed data is drawn i.i.d. from a common probability distribution $\bbP$:
$\bo_{i} \overset{\rm i.i.d.}{\sim} \bbP, \,\,\,\, i \in [n].$
In the superpopulation framework, we only consider the Bernoulli sampling of the treatment assignment vector, i.e. $\{t_{i}\}_{i = 1}^{n} \overset{\rm i.i.d.}{\sim} \mathrm{Bernoulli} (\pi_{1})$. The superpopulation treatment specific mean is denoted as $\tau \coloneqq \bbE y (1)$.

\section{HOIF-Motivated Covariate Adjustment: Statistical Intuition}
\label{sec:hoif}

Our main theoretical results (in Section \ref{sec:main results}) are stated under the design-based framework. However, in this section, we first explain the main intuition of using HOIF to construct adjusted estimator of $\tau$ under the superpopulation framework. In our own opinion, for most parts, the \emph{statistical intuition} gathered from the superpopulation framework can be carried over to the design-based framework.
While familiar to HOIF experts, this section aims to interest practitioners in RCT analysis.

Guarded by randomization, the unadjusted estimator $\hat{\tau}_{\unadj}$ has already fulfilled the following desiderata:
\begin{itemize}[leftmargin=0.5cm,topsep=0.25pt]
\item It is model-free, unbiased and has variance of order $1 / n$ under certain 
regularity conditions on $\bX$ and $\by$ (see Assumption \ref{as:regularity conditions} later);

\item It is CAN and the variance is easy to estimate.
\end{itemize}

However, $\hat{\tau}_{\unadj}$ 
fails to leverage the information of $\bx$.
One convincing argument for using $\bx$ comes from \textit{semiparametric theory}. It says that the variance of the following random variable, referred to as the (efficient) first-order influence function of $\tau$, characterizes the SVB of any Regular and Asymptotic Linear estimator of $\tau$:
\begin{equation}
\label{EIF}
\dot{\tau}_{1, \tau} \equiv \dot{\tau}_{1, \tau} (\bo) \coloneqq \frac{t}{\pi_{1}} y - \left( \frac{t}{\pi_{1}} - 1 \right) \b (\bx) - \tau,
\end{equation}
where $\b (\cdot) \coloneqq \bbE (y | \bx = \cdot, t = 1)$. 
This motivates the AIPW estimator:
\begin{equation}
\label{aipw}
\hat{\tau}_{\aipw} \coloneqq \frac{1}{n} \sum_{i = 1}^{n} \frac{t_{i}}{\pi_{1}} y_{i} - \left( \frac{t_{i}}{\pi_{1}} - 1 \right) \hat{\b} (\bx_{i}),
\end{equation}
where $\hat{\b}$ estimates $\b$ 
using parametric models or machine learning algorithms \citep{bannick2025general}. 
If $\hat{\b}$ is consistent for $\b$ and $\bbP$-Donsker (intuitively speaking, sufficiently stable), 
$\hat{\tau}_{\aipw}$ attains the SVB asymptotically.

However, since $\b$ is not under the investigator's control, $\hat{\tau}_{\aipw}$ may 
inflate the asymptotic variance when $\plim_{n \rightarrow \infty} \hat{\b} \neq \b$. 
Fortunately, covariate adjustment via a possibly misspecified linear working model with the least square estimator still guarantees efficiency improvement over $\hat{\tau}_{\unadj}$ when $p$ is fixed \citep{lin2013agnostic, ma2022regression, ye2023toward}. An oracle version of 
this adjusted estimator is:
\begin{equation}
\label{adj}
\tilde{\tau}_{\adj} \coloneqq \frac{1}{n} \sum_{i = 1}^{n} \frac{t_{i}}{\pi_{1}} y_{i} - \left( \frac{t_{i}}{\pi_{1}} - 1 \right) (\bx_{i} - \bmu)^{\top} \bbeta,
\end{equation}
where $\bbeta \coloneqq n \bSigma^{-} \cdot \bbE ((\bx - \bmu) \frac{t}{\pi_{1}} (y - \tau))$ is the population projection of 
$y (1) - \tau$ onto the linear span of 
$\bx - \bmu$. 
$\tilde{\tau}_{\adj}$ has the same form as the AIPW estimator with linear outcome regression. 
The following result is immediate and well known \citep{robins1994estimation}. 
\begin{lemma}
\label{lem:svb}
Under Assumption~\ref{as:randomization}, we have $\var (\hat{\tau}_{\unadj}) = \dfrac{1}{n} \var \left\{ \dfrac{t}{\pi_{1}} y \right\}$ and
\begin{align*}
\var (\tilde{\tau}_{\adj}) = \frac{1}{n} \left( \var \left\{ \frac{t}{\pi_{1}} y \right\} - \var \left\{ \left( \frac{t}{\pi_{1}} - 1 \right) (\bx - \bmu)^{\top} \bbeta \right\} \right) \leq \var (\hat{\tau}_{\unadj}).
\end{align*}
If $(\bx_{i} - \bmu)^{\top} \bbeta$ in $\tilde{\tau}_{\adj}$ is replaced by the true outcome regression function $\b (\cdot)$, the variance of $\tilde{\tau}_{\adj}$ attains the \normalfont{SVB}.
\end{lemma}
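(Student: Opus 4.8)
The plan is to work in the super-population framework with i.i.d.\ units and $\mathrm{Bernoulli}(\pi_1)$ treatment assignment (the setting of this section, under which the stated $\tfrac1n\var(\cdot)$ form is the relevant one). Throughout I abbreviate $A \coloneqq \tfrac{t}{\pi_1}y$, $g \coloneqq (\bx-\bmu)^\top\bbeta$, and $B \coloneqq (\tfrac{t}{\pi_1}-1)g$, so that the per-unit summands of $\hat\tau_{\unadj}$ and $\tilde\tau_{\adj}$ are $A$ and $A-B$. The first identity is immediate: since the $n$ summands are i.i.d., $\var(\hat\tau_{\unadj}) = \tfrac1n\var(A)$ and likewise $\var(\tilde\tau_{\adj}) = \tfrac1n\var(A-B)$, so it remains to show $\var(A-B) = \var(A) - \var(B)$.

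First I would record two elementary consequences of the assumptions. By Assumption~\ref{as:consistency} together with $t^2=t$ and $t(1-t)=0$, we have $A = \tfrac{t}{\pi_1}y(1)$, which depends on the unit only through $(t,y(1),\bx)$; and since $t \indep (y(1),\bx)$ under Assumption~\ref{as:randomization}, both $y(1)$ and $g=g(\bx)$ are independent of $t$. Taking expectations and using $\bbE[\tfrac{t}{\pi_1}-1]=0$ and $\bbE[g]=\bbE[(\bx-\bmu)^\top\bbeta]=0$ gives $\bbE[B]=0$. Expanding $\var(A-B)=\var(A)-2\cov(A,B)+\var(B)$, the entire claim reduces to the single identity $\cov(A,B)=\var(B)$.

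The crux, and the only non-mechanical step, is this identity, which I would prove by factoring out the treatment indicator and then invoking the projection property of $\bbeta$. Using $\bbE[B]=0$ and the independence of $t$,
\[
\cov(A,B) = \bbE\!\left[y(1)\,g\right]\,\bbE\!\left[\tfrac{t}{\pi_1}\!\left(\tfrac{t}{\pi_1}-1\right)\right], \qquad \var(B) = \bbE\!\left[g^2\right]\,\bbE\!\left[\left(\tfrac{t}{\pi_1}-1\right)^2\right],
\]
and a direct Bernoulli computation shows both $t$-factors equal $\pi_0/\pi_1$. Hence it suffices to prove $\bbE[y(1)\,g]=\bbE[g^2]$. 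This is exactly the normal equation for the population least-squares coefficient: simplifying $\bbE[(\bx-\bmu)\tfrac{t}{\pi_1}(y-\tau)]=\bbE[(\bx-\bmu)(y(1)-\tau)]$ by independence as above, the definition of $\bbeta$ becomes $\bbeta=(\bbE[(\bx-\bmu)(\bx-\bmu)^\top])^{-}\,\bbE[(\bx-\bmu)(y(1)-\tau)]$, and the standard fact that the cross-covariance $\bbE[(\bx-\bmu)(y(1)-\tau)]$ lies in $\ran(\bbE[(\bx-\bmu)(\bx-\bmu)^\top])$ ensures the Moore--Penrose inverse reproduces the normal equations even when $\bSigma$ is singular, yielding $\bbE[(\bx-\bmu)\{(y(1)-\tau)-g\}]=0$. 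Dotting with $\bbeta$ and using $\bbE[g]=0$ gives $\bbE[g\,y(1)]=\bbE[g^2]$, as required. Thus $\cov(A,B)=\var(B)$, so $\var(A-B)=\var(A)-\var(B)$, and the inequality $\var(\tilde\tau_{\adj})\le\var(\hat\tau_{\unadj})$ follows since $\var(B)\ge 0$.

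For the final claim, replacing $g$ by $\b(\bx)$ turns the per-unit summand of $\tilde\tau_{\adj}$ into $\tfrac{t}{\pi_1}y-(\tfrac{t}{\pi_1}-1)\b(\bx)$, which is precisely $\dot\tau_{1,\tau}+\tau$ for the efficient first-order influence function in \eqref{EIF}; its variance is therefore $\tfrac1n\var(\dot\tau_{1,\tau})$, i.e.\ attains the SVB by the characterization of the efficient influence function recalled before \eqref{EIF}. The only place I expect to need genuine care is the singular-$\bSigma$/pseudoinverse justification of the normal equations; everything else is bookkeeping with Bernoulli moments and the independence of $t$.
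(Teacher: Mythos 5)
Your proposal is correct and rests on exactly the same key identity as the paper's proof, namely $\cov\bigl(\tfrac{t}{\pi_1}y,\,(\tfrac{t}{\pi_1}-1)(\bx-\bmu)^\top\bbeta\bigr)=\var\bigl((\tfrac{t}{\pi_1}-1)(\bx-\bmu)^\top\bbeta\bigr)$, i.e.\ that the augmentation term is the $L_2(\bbP)$-projection of $\tfrac{t}{\pi_1}y$ onto the span of $(\tfrac{t}{\pi_1}-1)(\bx-\bmu)^\top\bm{b}$. The only difference is that you actually verify this identity (independence factorization, the $\pi_0/\pi_1$ Bernoulli moments, and the normal equations with the pseudoinverse handled via the range condition) where the paper simply asserts it and appeals to norm contraction of projections; your added care on the singular-$\bSigma$ case is sound.
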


See Appendix \ref{app:lemma1} for the proof of Lemma \ref{lem:svb}.
When the dimension $p$ of the baseline covariates is small compared to the sample size $n$, one can compute a feasible estimator $\hat{\tau}_{\adj, 1}^{\dag}$ by estimating $\bbeta$ with ordinary least squares (OLS) between $t (y - \bar{\tau}) / \pi_{1}$ and $\bx - \bmu$ with $\bmu$ replaced by $\bar{\bx}$ and $\bar{\tau}$ replaced by $\hat{\tau}_{\unadj}$:
\begin{equation}
\label{adjusted estimator}
\begin{split}
\hat{\bbeta}_{c} & \coloneqq \left( \sum_{l = 1}^{n} (\bx_{l} - \bar{\bx}) (\bx_{l} - \bar{\bx})^{\top} \right)^{-1} \sum_{j = 1}^{n} (\bx_{j} - \bar{\bx}) \frac{t_{j} (y_{j} - \hat{\tau}_{\unadj})}{\pi_{1}}, \\
\hat{\tau}_{\adj, 1}^{\dag} & \coloneqq \hat{\tau}_{\unadj} - \frac{1}{n} \sum_{i = 1}^{n} \left( \frac{t_{i}}{\pi_{1}} - 1 \right) (\bx_{i} - \bar{\bx})^{\top} \hat{\bbeta}_{c} \equiv \hat{\tau}_{\unadj} - \frac{1}{n} \sum_{i = 1}^{n} \sum_{j = 1}^{n} \left( \frac{t_{i}}{\pi_{1}} - 1 \right) H_{i, j} \frac{t_{j} (y_{j} - \hat{\tau}_{\unadj})}{\pi_{1}}.
\end{split}
\end{equation}
We also define an alternative adjusted estimator not centering $y$ that will appear later:
\begin{equation}
\label{adjusted estimator 1}
\hat{\tau}_{\adj, 1} \coloneqq \hat{\tau}_{\unadj} - \frac{1}{n} \sum_{i = 1}^{n} \sum_{j = 1}^{n} \left( \frac{t_{i}}{\pi_{1}} - 1 \right) H_{i, j} \frac{t_{j} y_{j}}{\pi_{1}}.
\end{equation}
Written in the form of \eqref{adjusted estimator} or \eqref{adjusted estimator 1}, one can view the adjusted estimator by linear working models as augmenting the unadjusted estimator with a second-order $V$-statistic. The corresponding least square regression coefficients $\hat{\bbeta}$ is defined similarly to $\hat{\bbeta}_{c}$ except for not centering $y$. To directly see the potential negative impact of the augmented $V$-statistic, its mean is, under the design-based framework and CRE,
$- \frac{\pi_{0}}{\pi_{1}} \frac{n - 2}{n (n - 1)} \sum_{i = 1}^{n} H_{i, i} y_{i} (1) = O \left( p / n \right),$
under certain regularity conditions (e.g. Assumption \ref{as:regularity conditions} later) on $\bX$ and $\by$. The above derivation explains why the usual adjusted estimator $\hat{\tau}_{\adj, 1}^{\dag}$ or $\hat{\tau}_{\adj, 1}$ may hurt statistical inference when $p$ is close to $n$. The bias of $\hat{\tau}_{\adj, 1}^{\dag}$ can be similarly shown to also be of order $O (p / n)$.

\subsection*{The Role of HOIF and a Review}

To explain how the theory of HOIF directly leads to an improved estimator, we first notice that $\hat{\tau}_{\aipw}$ is unbiased and also assume that $\hat{\b}$ is ``sufficiently independent'' from the sample $\bO$. In the derivation below, we might write 0 redundantly as $\frac{1}{\pi_{1}}  - \frac{1}{\pi_{1}}$:
\begin{equation*}
\label{bias}
\bias (\hat{\tau}_{\aipw}) = \bbE \left[ t \left( \frac{1}{\pi_{1}} - \frac{1}{\pi_{1}} \right) \left( \b (\bx) - \hat{\b} (\bx) \right) \right] \equiv \bbE \left[ \pi_{1}^{\frac{1}{2}} \left( \frac{1}{\pi_{1}} - \frac{1}{\pi_{1}} \right) \pi_{1}^{\frac{1}{2}} \left( \b (\bx) - \hat{\b} (\bx) \right) \right].
\end{equation*}
The HOIF theory, in a nutshell, is to approximate the above bias of $\hat{\tau}_{\aipw}$ by first choosing a set of $k$-dimensional transformations of $\bx$, $\bar{\phi}_{k}(\bx) = (\phi_{1} (\bx), \cdots, \phi_{k} (\bx))^{\top}$. $\bar{\phi}_{k}$ is often chosen by some background knowledge on the space the residual $\b - \hat{\b}$ may lie in. Here we simply take $\phi$ as $\bar{\phi}_{k}(\bx) \equiv \bx - \bmu$. Next, we project the two residuals above, $\mathsf{res}_{1} \coloneqq \pi_{1}^{1 / 2} \left( \frac{1}{\pi_{1}} - \frac{1}{\pi_{1}} \right)$ and $\mathsf{res}_{2} \coloneqq \pi_{1}^{1 / 2} \left( \b (\bx) - \hat{\b} (\bx) \right)$ onto the linear space spanned by $\pi_{1}^{1 / 2} (\bx - \bmu)$:
\begin{align*}
\widetilde{\mathsf{res}}_{1} & \coloneqq \pi_{1}^{1 / 2} (\bx - \bmu)^{\top} \left\{ \bbE [\pi_{1} (\bx - \bmu) (\bx - \bmu)^{\top}] \right\}^{-1} \bbE \left[ \pi_{1} (\bx - \bmu) \left( \frac{1}{\pi_{1}} - \frac{1}{\pi_{1}} \right) \right], \\
\widetilde{\mathsf{res}}_{2} & \coloneqq \pi_{1}^{1 / 2} (\bx - \bmu)^{\top} \left\{ \bbE [\pi_{1} (\bx - \bmu) (\bx - \bmu)^{\top}] \right\}^{-1} \bbE \left[ \pi_{1} (\bx - \bmu) (\b (\bx) - \hat{\b} (\bx)) \right].
\end{align*}
The weight $\pi_{1}^{1 / 2}$ is chosen to ensure that the unknown $\b$ appeared in the above expectation can be replaced by the observed $y$. Armed with the above projections of the residuals, we can decompose $\bias (\hat{\tau}_{\aipw})$ into two components by Pythagorean theorem:
\begin{equation*}
\begin{split}
\bias (\hat{\tau}_{\aipw}) \equiv & \ \bbE \left[ \mathsf{res}_{1} \cdot \mathsf{res}_{2} \right] = \widetilde{\bias} (\hat{\tau}_{\aipw}) + \widetilde{\bias}^{\perp} (\hat{\tau}_{\aipw}) \equiv 0, \text{ where } \\
\widetilde{\bias} (\hat{\tau}_{\aipw}) \coloneqq & \ \bbE \left[ \widetilde{\mathsf{res}}_{1} \cdot \widetilde{\mathsf{res}}_{2} \right] \equiv 0 \\
\equiv & \ \bbE \left[ \pi_{1} \left( \frac{1}{\pi_{1}} - \frac{1}{\pi_{1}} \right) (\bx - \bmu)^{\top} \right] \left\{ \bbE [\pi_{1} (\bx - \bmu) (\bx - \bmu)^{\top}] \right\}^{-1} \bbE \left[ (\bx - \bmu) \pi_{1} (\b (\bx) - \hat{\b} (\bx)) \right] \\
= & \ \bbE \left[ \left( \frac{t}{\pi_{1}} - 1 \right) (\bx - \bmu)^{\top} \right] \left\{ \bbE \left[ (\bx - \bmu) (\bx - \bmu)^{\top} \right] \right\}^{-1} \bbE \left[ (\bx - \bmu) \frac{t (y - \hat{\b} (\bx))}{\pi_{1}} \right].
\end{split}
\end{equation*}

Setting $\hat{\b} \equiv 0$, so $\hat{\tau}_{\unadj} \equiv \hat{\tau}_{\aipw}$, the gist of the HOIF theory is to estimate the ``projected bias'' $\widetilde{\bias} (\hat{\tau}_{\aipw}) \equiv \widetilde{\bias} (\hat{\tau}_{\unadj})$ by a \emph{second-order $U$-statistic} as follows:
\begin{align}
\widehat{\IIFF}_{\unadj, 2, 2} & \coloneqq \frac{1}{n (n - 1)} \sum_{1 \leq i \neq j \leq n} \left( \frac{t_{i}}{\pi_{1}} - 1 \right) (\bx_{i} - \bar{\bx})^{\top} \left\{ \frac{1}{n - 1} \sum_{l = 1}^{n} (\bx_{l} - \bar{\bx}) (\bx_{l} - \bar{\bx})^{\top} \right\}^{-1} (\bx_{j} - \bar{\bx}) \frac{t_{j} y_{j}}{\pi_{1}} \nonumber \\
& \equiv \frac{1}{n} \sum_{1 \leq i \neq j \leq n} \left( \frac{t_{i}}{\pi_{1}} - 1 \right) H_{i, j} \frac{t_{j} y_{j}}{\pi_{1}}. \label{IF22}
\end{align}
Here we adopt the $\widehat{\IIFF}$ notation first introduced in \citet{robins2008higher} because $\widehat{\IIFF}_{\unadj, 2, 2}$ is an estimator of the \emph{second-order influence function} of $\tilde{\bias} (\hat{\tau}_{\unadj})$.

With $\widehat{\IIFF}_{\unadj, 2, 2}$, one can construct the following covariate adjusted estimator:
\begin{equation}
\label{our estimator}
\hat{\tau}_{\adj, 2} \coloneqq \hat{\tau}_{\unadj} - \widehat{\IIFF}_{\unadj, 2, 2},
\end{equation}
which is the main estimator that we study in Section \ref{sec:main results}. Though $\widehat{\IIFF}_{\unadj, 2, 2}$ and $\hat{\tau}_{\adj, 2}$ are motivated under the superpopulation framework, the way we tacitly estimate the precision matrix $\left\{ \bbE \left[ (\bx - \bmu) (\bx - \bmu)^{\top} \right] \right\}^{-1}$ in $\widehat{\IIFF}_{\unadj, 2, 2}$ happens to be the ``correct'' choice under the design-based framework, as we condition on $\bX$. It is also worth noting that $\hat{\tau}_{\adj, 2}$ can be viewed as a ``diagonal/trace-free'' version of $\hat{\tau}_{\adj, 1}$. 

As for the variance of $\hat{\tau}_{\adj, 2}$, following well-established statistical theory of HOIF estimators, we have $\var (\hat{\tau}_{\adj, 2}) = O \left( \frac{1}{n} + \frac{p}{n^{2}} \right)$, where the first factor $1 / n$ is attributed to $\hat{\tau}_{\unadj}$ and the second factor $p / n^{2}$ comes from the second-order $U$-statistic $\widehat{\IIFF}_{\unadj, 2, 2}$. When $p = o (n)$, covariate adjustment by $\hat{\tau}_{\adj, 2}$ never hurts the asymptotic variance (in fact, in Theorem~\ref{thm:HOIF-CRE, design-based}, $\hat{\tau}_{\adj, 2}$ may reduce the asymptotic variance if $p = o (n)$). This follows from the ``guiding principle'' \citep{liu2017semiparametric} that if either the propensity score or the outcome regression is consistently estimated, correcting bias by adding the second-order $U$-statistic does not inflate the asymptotic variance.

Even if $p = O (n)$, the variance 
maintains the parametric $1 / n$ rate. 
However, demonstrating that  $\hat{\tau}_{\adj, 2}$ actually improves efficiency for $p = O (n)$ requires more careful analysis, deferred to Section \ref{sec:main results}. 
Intuitively, since $\widehat{\IIFF}_{\unadj, 2, 2}$ estimates a $L_{2} (\bbP)$-projection of $\frac{t}{\pi_{1}} y$, efficiency gains are possible even when $p$ is close to $n$.
This fact has been known and explicitly noted in \citet{liu2020nearly, liu2023hoif}. We hope that this review stimulates more interest in the HOIF theory by those developing statistical methods on randomized experiments.

\begin{remark}
\label{rem:diff}
We briefly compare $\widehat{\IIFF}_{\unadj, 2, 2}$ with estimators from \cite{liu2020nearly, liu2020rejoinder, liu2023hoif}.
Their original proposal addresses observational studies with unknown propensity scores, using $(\bx_{i} - \bar{\bx})^{\top} \hat{\bSigma}_{1}^{-1} (\bx_{j} - \bar{\bx})$ instead of $H_{i, j}$ to protect against model misspecification, where
\begin{equation}
\label{tSigma}
\hat{\bSigma}_{1} \coloneqq \sum_{l = 1}^{n} \frac{t_{l}}{\pi_{1}} (\bx_{l} - \bar{\bx}) (\bx_{l} - \bar{\bx})^{\top}
\end{equation}
Since we consider CRE, the form in \eqref{our estimator} is preferable. 
In fact, it is much more difficult to analyze the statistical properties of $\hat{\tau}_{\adj, 2}$ under observational studies and the superpopulation framework. There one has to control the difference between the sample and population precision matrices. Without imposing strong structural assumptions on the propensity score and outcome regression, higher-order $U$-statistics are needed to remove the bias due to estimating the precision matrix of a very large dimension. Finally, we discuss the choice of the transformation $\bar{\phi}_{k}$, largely ignored above. 
Choosing $\bar{\phi}_{k}(\bx) = \bx - \bmu$ makes $\hat{\tau}_{\adj, 2}$ correct the \emph{own observation bias} of $\hat{\tau}_{\adj, 1}$.
If the analyst has better knowledge of the $\bx \rightarrow y (1)$ mechanism, selecting nonlinear transformations of  $\bx$ better reflect this mechanism may further improve efficiency.
\end{remark}

\section{HOIF-Motivated Covariate Adjustment: Theoretical Results}
\label{sec:main results}

To state our main theoretical results on $\hat{\tau}_{\adj, 2}$, we need to further impose certain regularity conditions on the observed data $\bO$. It is worth noting that to make our exposition more accessible, we choose the following easier-to-interpret regularity conditions on the data instead of the mathematically weaker conditions considered in the mainstream literature on design-based inference \citep{lei2021regression, lu2025debiased}.
\begin{assumption}[Regularity conditions on $\bO$]
\label{as:regularity conditions}
The following regularity condition is occasionally imposed on the observed data $\bO$: there exists an $n$-independent universal constant $B > 0$ such that
$\max \left\{ \Vert y (1) \Vert_{\infty}, \frac{n}{p} \Vert H \Vert_{\infty} \right\} \leq B,$
and $\hat{\bSigma} \equiv \sum_{i = 1}^{n} (\bx_{i} - \bar{\bx}) (\bx_{i} - \bar{\bx})^{\top}$ is invertible.
\end{assumption}

\begin{remark}
\label{rem:gadget}
As will be seen in Appendix \ref{app:gadgets}, under Assumption \ref{as:regularity conditions}, by directly looking at the decomposition of variance formula into a sum of various ``gadgets'', one can immediately tell the bias or the variance order (should be no greater than $O (1 / n)$) after covariate adjustment. In Appendix \ref{app:gadgets}, we will explain that the statistical orders of the ``gadgets'' can be easily deduced based on simple statistical intuition.
\end{remark}

\subsection{Statistical Properties of HOIF-Motivated Estimators in RCT}
\label{sec:hoif results}

We state our main theoretical results under CRE in the theorem below. The corresponding results for the Bernoulli sampling will be commented in Remark \ref{rem:Bernoulli} that follows.

\begin{theorem}
\label{thm:HOIF-CRE, design-based}
Under CRE (Assumption~\ref{as:randomization}) and the design-based framework, we have the following theoretical guarantees on the HOIF estimator $\hat{\tau}_{\adj, 2}$:
\begin{enumerate}[leftmargin=1cm,topsep=0.25pt]
\item The bias of $\hat{\tau}_{\adj, 2}$ has the following form:
\begin{equation}
\label{bias, HOIF-CRE, design-based}
\begin{split}
\bias^{\sff} \left( \hat{\tau}_{\adj, 2} \right) & = \frac{\pi_{0}}{\pi_{1}} \frac{1}{n (n - 1)} \sum_{1 \leq i \neq j \leq n} H_{i, j} y_{j} (1) = - \frac{\pi_{0}}{\pi_{1}} \frac{1}{n (n - 1)} \sum_{i = 1}^{n} H_{i, i} y_{i} (1).
\end{split}
\end{equation}
In addition, if Assumption \ref{as:regularity conditions} holds, then $\bias^{\sff} \left( \hat{\tau}_{\adj, 2} \right) = O \left( \frac{\pi_{0}}{\pi_{1}} \frac{\alpha}{n} \right).$

\item The variance of $\hat{\tau}_{\adj, 2}$ has the following form: Under Assumption \ref{as:regularity conditions},
\begin{align}
& \ \var^{\sff} \left( \hat{\tau}_{\adj, 2} \right) = \nu^{\sff} + \mathsf{Rem} = O \left( \frac{1}{n} \left\{ 1 + \frac{p}{n} \right\} \right), \label{var, HOIF-CRE, design-based}
\end{align}
where the exact form of $\mathsf{Rem} = o (1 / n)$ can be deduced from the proof of this claim in Appendix \ref{app:hoif results} and the main term $\nu^{\sff}$ reads as follows:
\begin{equation}
\label{the variance}
\begin{split}
\nu^{\sff} \coloneqq & \ \underbrace{\left( \frac{\pi_{0}}{\pi_{1}} \right) \frac{1}{n} V_{n} \left[ y_{i} (1) - \sum_{j \neq i} H_{j, i} y_{j} (1) \right]}_{\eqqcolon \, \nu^{\sff}_{1} \, = \, O \left( \frac{1}{n} \right)} \\
& + \underbrace{\left( \frac{\pi_{0}}{\pi_{1}} \right)^{2} \frac{1}{n} \left\{ \frac{1}{n} \sum_{i = 1}^{n} H_{i, i} (1 - H_{i, i}) y_{i} (1)^{2}  + \frac{1}{n} \sum_{1 \leq i \neq j \leq n} H_{i, j}^{2} y_{i} (1) y_{j} (1) \right\}}_{\eqqcolon \, \nu^{\sff}_{2} \, = \, O \left( \frac{1}{n} \frac{p}{n} \right)}.
\end{split}
\end{equation}
When $p = o (n)$, $\nu^{\sff}$ can be further simplified to 
$\nu^{\sff}_{1} = \frac{\pi_{0}}{\pi_{1}} \frac{1}{n} V_{n} \left[ y_{i} (1) - \sum_{j \neq i} H_{j, i} y_{j} (1) \right].$
\end{enumerate}
\end{theorem}

According to the second assertion of Theorem \ref{thm:HOIF-CRE, design-based}, if $p = o (n)$, $\hat{\tau}_{\adj, 2}$ always attains smaller asymptotic variance (after scaled by $n$) than $\hat{\tau}_{\unadj}$; however, if $p = O (n)$, then the \emph{iff condition} for $\hat{\tau}_{\adj, 2}$ to enjoy improved asymptotic efficiency (after scaled by $n$) compared to 
$\hat{\tau}_{\unadj}$ is simply
\begin{equation}
\label{criterion}
\begin{split}
\nu^{\sff} - \var^{\sff} (\hat{\tau}_{\unadj}) \leq 0 \Longleftrightarrow \var^{\sff} (\hat{\tau}_{\unadj}) - \nu^{\sff}_{1} \geq \nu^{\sff}_{2}.
\end{split}
\end{equation}
Since the criterion for asymptotic efficiency improvement is fully characterized, one can easily conduct numerical experiments to determine whether $\hat{\tau}_{\adj, 2}$   has smaller asymptotic variance than  $\hat{\tau}_{\unadj}$ for given baseline covariates $\bX$ and potential outcomes $\by (1)$.

\begin{remark}[Interpreting $\hat{\tau}_{\adj, 2}$ as a leave-one-out regression adjustment estimator]\leavevmode
\label{rem:projection}
The first term $\nu^{\sff}_{1}$ of $\nu^{\sff}$ in \eqref{the variance} constitutes the sample variance of $y_{i} (1) - \sum_{j \neq i} H_{j, i} y_{j} (1)$ instead of $y_{i} (1)$, as in the variance of $\hat{\tau}_{\unadj}$. The term being subtracted off, $\sum_{j \neq i} H_{j, i} y_{j} (1)$, can be represented as 
$\sum_{j \neq i} H_{j, i} y_{j} (1) \equiv (\bx_{i} - \bar{\bx})^{\top} \hat{\bm{\beta}}_{-i},$
where $\hat{\bm{\beta}}_{-i} \coloneqq \hat{\bSigma}^{-} \sum_{j \neq i} (\bx_{j} - \bar{\bx}) y_{j} (1)$, which is the leave-one-out coefficient estimator for the linear projection of the potential outcomes $y_{i} (1)$ onto the linear span of $\bx_{i} - \bar{\bx}$, resembling the construction in \citet{wu2018loop}. In an ongoing work, we apply this idea of using leave-one-out regression adjustment in broader settings, where the outcome regression is fit by a working generalized linear model.
\end{remark}

\begin{remark}
\label{rem:Bernoulli}
The proof of Theorem \ref{thm:HOIF-CRE, design-based} is in 
Appendix \ref{app:hoif results}. Under the Bernoulli sampling 
with $t_{i} \overset{\rm i.i.d.}{\sim} \mathrm{Bernoulli} (\pi_{1})$ for $i \in [n]$, we immediately have $\bias^{\sff} (\hat{\tau}_{\adj, 2}) = 0$.
\end{remark}

The next result shows that statistical guarantees of $\hat{\tau}_{\adj, 2}$ parallel to those under the design-based framework in Theorem \ref{thm:HOIF-CRE, design-based} continue to hold under the i.i.d. superpopulation framework. We mention in passing that this result has been obtained in previous works by one of the authors of this paper \citep{liu2017semiparametric, liu2020nearly, liu2023hoif}, so we omit the proof.

\begin{proposition}
\label{thm:HOIF-CRE, superpopulation}
Under CRE (Assumption \ref{as:randomization}) and the superpopulation framework, the following hold:
\begin{enumerate}[leftmargin=1cm,topsep=0.25pt]
\item The bias of $\hat{\tau}_{\adj, 2}$ has the following form: $\bias (\hat{\tau}_{\adj, 2}) = \bbE (\hat{\tau}_{\adj, 2} - \tau) = - \dfrac{\pi_{0}}{\pi_{1}} \dfrac{1}{n - 1} \bbE \left[ H_{1, 1} \b (\bx_{1}) \right]$. In addition, if Assumption \ref{as:regularity conditions} holds, then $\bias (\hat{\tau}_{\adj, 2}) = O \left( \dfrac{\pi_{0}}{\pi_{1}} \dfrac{\alpha}{n} \right)$.

\item The variance of $\hat{\tau}_{\adj, 2}$ has the following order: Under Assumption \ref{as:regularity conditions}, further suppose that $\hat{\bSigma}$ has bounded eigenvalues, $\var (\hat{\tau}_{\adj, 2}) = O \left( \dfrac{1}{n} \left\{ 1 + \dfrac{p}{n} \right\} \right)$.
\end{enumerate}
\end{proposition}

\begin{remark}[On the asymptotic distribution of $\hat{\tau}_{\adj, 2}$]
\label{rem:clt}
Under Assumptions \ref{as:randomization}--\ref{as:regularity conditions} and the superpopulation framework, if we additionally suppose that there exists a constant $\sigma^{2} > 0$ such that
\begin{equation}
\label{super stable}
\lim_{n \rightarrow \infty} n \cdot \var (\hat{\tau}_{\adj, 2}) \rightarrow \sigma^{2}, 
\end{equation}
we have $\frac{\hat{\tau}_{\adj, 2} - \tau}{\sqrt{\var (\hat{\tau}_{\adj, 2})}} \rightsquigarrow N (0, 1).$
The Gaussian limiting distribution of $\hat{\tau}_{\adj, 2}$ follows from three main steps: (i) showing that the bias of $\hat{\tau}_{\adj, 2}$ is $o (1 / \sqrt{n})$; (ii) showing that the variance of $\hat{\tau}_{\adj, 2}$ is $O (1 / n)$; and (iii) invoking the CLT of second-order $U$-statistics (Corollary 1.2 of \citet{bhattacharya1992class}). We note that, as mentioned in numerous places in \citet{liu2020nearly}, the asymptotic distribution of $\hat{\tau}_{\adj, 2}$ can be obtained by applying Corollary 1.2 of \citet{bhattacharya1992class}. 
In the design-based framework, \eqref{super stable} can be replaced by
\begin{equation}
\label{stable}
\lim_{n \rightarrow \infty} n \cdot \nu^{\sff} \rightarrow \sigma^{2}, 
\end{equation}
and one needs to further adapt the proof technique in \citet{bhattacharya1992class} to L\'{e}vy's martingale CLT by adapting the proof technique in \citet{bhattacharya1992class}, or using results in \citet{koike2023high} as in \citet{lu2025debiased}. Conditions~\ref{super stable} and \ref{stable} essentially require that the variance scaled by $n$ has a limit as $n \rightarrow \infty$; see Appendix \ref{app:clt} for details.
\end{remark}

\begin{remark}[Semiparametric efficiency under the superpopulation framework]
\label{rem:emp}
Under the superpopulation framework, 
 when  $p$ is fixed and one imposes smoothness assumption on $\b$, 
say H\"{o}lder smooth with smoothness index $s > 0$, then it is easy to see that $\hat{\tau}_{\adj, 2}$, but with $\bx - \bar{\bx}$ replaced by $\bar{\phi}_{k}(\bx) = (\phi_{1} (\bx), \cdots, \phi_{k} (\bx))$, where $\bar{\phi}_{k}(\cdot)$ denotes low-degree polynomial transformations up to degree $k \asymp \log \log n$, achieves the SVB.
\end{remark}

\subsection{A Variety of HOIF-Motivated Estimators}
\label{sec:variety}

In the previous section, we have demonstrated that the HOIF theory motivates an adjusted estimator $\hat{\tau}_{\adj, 2}$ that (1) reduces the bias of the adjusted estimator $\hat{\tau}_{\adj, 1}^{\dag}$ and (2) has asymptotic variance 
no greater than $\hat{\tau}_{\unadj}$ whenever $p = o (n)$, 
sometimes more efficient than $\hat{\tau}_{\unadj}$ when $p = O (n)$ and $p < n$. 
We now propose  other HOIF-motivated estimators in the vicinity of $\hat{\tau}_{\adj, 2}$, with slightly different statistical properties.

To motivate alternatives, 
we consider a special scenario where the potential outcomes are constant, i.e. $y_{i} (1) \equiv c, i \in [n]$.
Without loss of generality, we take $c \equiv 1$. Here both 
$\hat{\tau}_{\unadj} \equiv 1$ and 
$\hat{\tau}_{\adj, 1}^{\dag} \equiv 1$ are \emph{error-free}, but $\hat{\tau}_{\adj, 2}$ fails to be \emph{error-free} in the extreme scenario of homogeneous potential outcomes.
To restore the \emph{error-free} property in such a case, one could remove the diagonal/trace part from the error-free adjusted estimator $\hat{\tau}_{\adj, 1}^{\dag}$ defined in \eqref{adjusted estimator} instead of $\hat{\tau}_{\adj, 1}$ defined in \eqref{adjusted estimator 1}:
\begin{equation}
\label{error free HOIF}
\hat{\tau}_{\adj, 2}^{\dag} \coloneqq \hat{\tau}_{\unadj} - \widehat{\IIFF}_{\unadj, 2, 2}^{\dag}, \text{ where } \widehat{\IIFF}_{\unadj, 2, 2}^{\dag} \coloneqq \frac{1}{n} \sum_{1 \leq i \neq j \leq n} \left( \frac{t_{i}}{\pi_{1}} - 1 \right) H_{i, j} \frac{t_{j} (y_{j} - \hat{\tau}_{\unadj})}{\pi_{1}}.
\end{equation}
It is natural to conjecture that the asymptotic variance of $\hat{\tau}_{\adj, 2}^{\dag}$ should be the same as that of $\hat{\tau}_{\adj, 2}$ in \eqref{the variance}, except that $y_{i} (1)$ is replaced by $y_{i} (1) - \bar{\tau}$ for all $i \in [n]$, i.e.
\begin{align*}
& \var^{\sff} (\hat{\tau}_{\adj, 2}^{\dag}) = \ \left( \frac{\pi_{0}}{\pi_{1}} \right) \frac{1}{n} V_{n} \left[ (y_{i} (1) - \bar{\tau}) - \sum_{j \neq i} H_{j, i} (y_{j} (1) - \bar{\tau}) \right] \\
& + \left( \frac{\pi_{0}}{\pi_{1}} \right)^{2} \frac{1}{n} \left\{ \frac{1}{n} \sum_{i = 1}^{n} H_{i, i} (1 - H_{i, i}) (y_{i} (1) - \bar{\tau})^{2}  + \frac{1}{n} \sum_{1 \leq i \neq j \leq n} H_{i, j}^{2} (y_{i} (1) - \bar{\tau}) (y_{j} (1) - \bar{\tau}) \right\} + o (n^{-1}).
\end{align*}

Despite being \emph{error-free} in the case of homogeneous potential outcomes, similar to $\hat{\tau}_{\adj, 2}$, $\hat{\tau}_{\adj, 2}^{\dag}$ is not unbiased under CRE. If one insists on constructing a \emph{bias-free} adjusted estimator under CRE, the following estimator can be constructed, again building on $\hat{\tau}_{\adj, 2}$:
\begin{equation}
\label{bias free HOIF}
\hat{\tau}_{\adj, 3} \coloneqq \hat{\tau}_{\adj, 2} + \frac{\pi_{0}}{\pi_{1}} \frac{1}{n (n - 1)} \sum_{i = 1}^{n} H_{i, i} \frac{t_{i} y_{i}}{\pi_{1}}.
\end{equation}
In Appendix \ref{app:variety}, we show that $\hat{\tau}_{\adj, 3}$ is unbiased under CRE, but biased under the Bernoulli sampling. The following proposition characterizes the asymptotic variance of $\hat{\tau}_{\adj, 3}$. Of course, a similar strategy can be employed to also completely remove the bias of $\hat{\tau}_{\adj, 2}^{\dag}$, which we do not further pursue.

\begin{proposition}
\label{prop:bias free HOIF variance}
Under Assumptions \ref{as:randomization} -- \ref{as:regularity conditions}, the following holds: 
$\var^{\sff} (\hat{\tau}_{\adj, 3}) = \var^{\sff} (\hat{\tau}_{\adj, 2}) + o \left( \frac{1}{n} \right).$
\end{proposition}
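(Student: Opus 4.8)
The plan is to write $\hat{\tau}_{\adj, 3} = \hat{\tau}_{\adj, 2} + D$, with the correction term read off from \eqref{bias free HOIF},
\begin{align*}
D \coloneqq \frac{\pi_{0}}{\pi_{1}} \frac{1}{n(n-1)} \sum_{i=1}^{n} H_{i,i} \frac{t_{i} y_{i}}{\pi_{1}},
\end{align*}
and then expand the variance bilinearly,
\begin{align*}
\var^{\sff}(\hat{\tau}_{\adj, 3}) = \var^{\sff}(\hat{\tau}_{\adj, 2}) + 2\,\cov^{\sff}(\hat{\tau}_{\adj, 2}, D) + \var^{\sff}(D).
\end{align*}
It then suffices to show that both of the last two terms are $o(1/n)$, so the whole argument reduces to bounding $\var^{\sff}(D)$ and then dispatching the cross term by Cauchy--Schwarz.

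First I would compute $\var^{\sff}(D)$ directly. By Consistency (Assumption \ref{as:consistency}) I replace $t_i y_i$ by $t_i y_i(1)$, so that, with the potential outcomes and covariates held fixed, $D = \frac{\pi_0}{\pi_1^2}\frac{1}{n(n-1)}\sum_i a_i t_i$ is a weighted sum of treatment indicators with deterministic weights $a_i \coloneqq H_{i,i} y_i(1)$. Under CRE the indicators obey $\var^{\sff}(t_i)=\pi_0\pi_1$ and $\cov^{\sff}(t_i,t_j)=-\pi_0\pi_1/(n-1)$ for $i\neq j$, giving the closed form
\begin{align*}
\var^{\sff}\Big(\sum_{i} a_i t_i\Big) = \pi_0\pi_1\Big[\frac{n}{n-1}\sum_{i} a_i^2 - \frac{1}{n-1}\Big(\sum_{i} a_i\Big)^2\Big].
\end{align*}
Using Assumption \ref{as:regularity conditions}, i.e. $H_{i,i}\le\Vert H\Vert_\infty\le Bp/n$ and $\Vert y(1)\Vert_\infty\le B$, together with the trace identity $\sum_i H_{i,i}=\trace(\hat\bSigma^{-}\hat\bSigma)=p$, I obtain $\sum_i a_i^2 = O(p^2/n)$ and $(\sum_i a_i)^2=O(p^2)$, hence $\var^{\sff}(\sum_i a_i t_i)=O(p^2/n)$. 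Multiplying by the squared prefactor $(\pi_0/\pi_1^2)^2\,(n(n-1))^{-2}=O(n^{-4})$ yields $\var^{\sff}(D)=O(p^2/n^5)=O(1/n^3)=o(1/n)$, where the last step uses $p<n$.

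The cross term needs no exact expression. Cauchy--Schwarz gives $|\cov^{\sff}(\hat{\tau}_{\adj,2}, D)|\le \sqrt{\var^{\sff}(\hat{\tau}_{\adj,2})\,\var^{\sff}(D)}$, and since Theorem \ref{thm:HOIF-CRE, randomization-based} already supplies $\var^{\sff}(\hat{\tau}_{\adj,2})=O(1/n)$ (recall $p=O(n)$) while the previous step gives $\var^{\sff}(D)=O(1/n^3)$, the product is $O(n^{-4})$ and so $|\cov^{\sff}(\hat{\tau}_{\adj,2}, D)|=O(n^{-2})=o(1/n)$. Substituting both bounds into the variance decomposition completes the proof.

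The only genuinely technical point, which I would treat as the main obstacle, is the bound $\var^{\sff}(D)=o(1/n)$: it relies on simultaneously exploiting the per-entry delocalization estimate $H_{i,i}=O(p/n)$ and the aggregate trace identity $\sum_i H_{i,i}=p$ inside the CRE variance formula, so that the two competing pieces $\frac{n}{n-1}\sum_i a_i^2$ and $\frac{1}{n-1}(\sum_i a_i)^2$ are each shown to be $O(p^2/n)$ rather than larger. Everything else is routine bookkeeping, and the cross term is handled abstractly without any further computation.
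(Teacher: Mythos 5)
Your proof is correct, and it takes a genuinely shorter route than the paper's. The paper's proof of Proposition \ref{prop:bias free HOIF variance} in Appendix \ref{app:variety} uses the same decomposition $\hat{\tau}_{\adj,3} = \hat{\tau}_{\adj,2} + \hat{\alpha}$ (your $D$) and also computes $\var^{\sff}(\hat{\alpha})$ exactly — obtaining $\left(\frac{\pi_0}{\pi_1}\right)^3\frac{1}{n(n-1)^3}\{\sum_i H_{i,i}^2 y_i(1)^2 - \frac{1}{n}\trace^2(\hat{\bSigma}_y\hat{\bSigma}^-)\}$, which agrees with your closed form — but it then handles the cross term by expanding $\cov^{\sff}(\hat{\tau}_{\adj,2}, \hat{\alpha})$ into $\cov^{\sff}(\hat{\tau}_{\unadj},\hat{\alpha})$ and $\cov^{\sff}(\hat{\IIFF}_{\unadj,2,2},\hat{\alpha})$ and computing each exactly via the CRE moment formulas of Lemma \ref{lem:var intermediate}, a lengthy calculation involving sums over distinct index triples. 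You replace all of that with a single Cauchy--Schwarz bound $|\cov^{\sff}(\hat{\tau}_{\adj,2},D)|\le\sqrt{\var^{\sff}(\hat{\tau}_{\adj,2})\var^{\sff}(D)} = O(n^{-2})$, which suffices because the proposition only asserts an $o(1/n)$ remainder. Your order bookkeeping is right: $H_{i,i}\le Bp/n$, $\sum_i H_{i,i}=p$ and $\|y(1)\|_\infty\le B$ give $\var^{\sff}(D)=O(p^2/n^5)=O(1/n^3)$, and $\var^{\sff}(\hat{\tau}_{\adj,2})=O(1/n)$ follows from Theorem \ref{thm:HOIF-CRE, randomization-based} since $p<n$. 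What the paper's longer computation buys is the exact form of the covariance terms (consistent with its stated preference for exact formulas with approximation as an afterthought); what your argument buys is brevity and robustness — it would survive any modification of $\hat{\IIFF}_{\unadj,2,2}$ that preserves the $O(1/n)$ variance of $\hat{\tau}_{\adj,2}$.
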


In words, one could completely remove the bias of $\hat{\tau}_{\adj, 2}$ due to CRE \emph{for free asymptotically}. The proof can be found in Appendix \ref{app:variety}. The conclusion of Proposition \ref{prop:bias free HOIF variance} directly implies the asymptotic normality of $\hat{\tau}_{\adj, 3}$ under the same conditions as in Remark \ref{rem:clt}. Using a similar strategy, one can also construct an exactly unbiased version of $\hat{\tau}_{\adj, 2}^{\dag}$ under CRE, denoted as $\hat{\tau}_{\adj, 3}^{\dag}$:
\begin{align*}
\hat{\tau}_{\adj, 3}^{\dag} \coloneqq \hat{\tau}_{\adj, 2}^{\dag} - 2 \frac{\pi_{0}}{\pi_{1}} \left( 1 - \frac{\pi_{0}}{\pi_{1}} \frac{1}{n - 1} \right) \frac{1}{n - 2} \left\{ \frac{p}{n} \hat{\tau}_{\unadj} - \frac{1}{n} \sum_{i = 1}^{n} H_{i, i} \frac{t_{i} y_{i}}{\pi_{1}} \right\}.
\end{align*}
The reason for $\hat{\tau}_{\adj, 3}^{\dag}$ will become immediately clear once we reveal the bias of $\hat{\tau}_{\adj, 2}^{\dag}$ under CRE in Proposition~\ref{prop:db} in the next subsection.

\subsection{HOIF-Motivated Estimators: A Unifying Theme of Recent Proposals}
\label{sec:understanding}

As mentioned in the Introduction, HOIF-motivated estimators unify several recently proposed adjusted estimators.
Unlike the OLS-based adjusted estimator $\hat{\tau}_{\adj, 1}^{\dag}$ or $\hat{\tau}_{\adj, 1}$, these estimators are CAN and has guarantee efficiency gains 
over $\hat{\tau}_{\unadj}$ even when $p \gtrsim \sqrt{n}$. Here, we will illustrate that HOIF-motivated estimators unify those from: 
\citet{lei2021regression}, \citet{lu2025debiased}, \citet{chang2024exact}, and also \citet{jiang2025adjustments}.

We start with \citet{lu2025debiased}, in which the following debiased adjusted estimator of $\bar{\tau}$ was proposed:
\begin{equation}
\label{debiased estimator}
\hat{\tau}_{\db} \coloneqq \hat{\tau}_{\adj, 1}^{\dag} + \frac{\pi_{0}}{\pi_{1}} \frac{1}{n} \sum_{i = 1}^{n} \frac{t_{i}}{\pi_{1}} H_{i,i} \left( y_{i} - \hat{\tau}_{\unadj} \right).
\end{equation}
As pointed out in \citet{lu2025debiased}, the estimator proposed in \citet{chang2024exact} is similar to $\hat{\tau}_{\db}$ but is exactly unbiased under CRE, and thus we denote the estimator in \citet{chang2024exact} as $\hat{\tau}_{\db}^{u}$.

The following lemma immediately classifies $\hat{\tau}_{\db}$ and $\hat{\tau}_{\db}^{u}$ as HOIF-motivated estimators. The proof is deferred to Appendix \ref{app:understanding}.

\begin{lemma}
\label{lem:alternative}
The following algebraic equivalences hold: $\hat{\tau}_{\db} \equiv \hat{\tau}_{\adj, 2}^{\dag}$ and $\hat{\tau}_{\db}^{u} \equiv \hat{\tau}_{\adj, 3}^{\dag}$.
\end{lemma}


\begin{remark}
\label{rem:algebraic difference}
The sole difference between $\hat{\tau}_{\adj, 2}$ and $\hat{\tau}_{\adj, 2}^{\dag}$, and now also $\hat{\tau}_{\db}$, is that the former does not center $y_{i}$ by $\hat{\tau}_{\unadj}$. In a slightly different vein, one can also consider $\hat{\tau}_{\unadj}$ as an ``adjusted'' estimator using the linear working model with \emph{only the intercept term but no baseline covariates}:
$\hat{\tau}_{\unadj} \equiv \frac{1}{n} \sum_{i = 1}^{n} \frac{t_{i}}{\pi_{1}} \left( y_{i} - \hat{\tau}_{\unadj} \right) + \hat{\tau}_{\unadj}.$
Then by a similar line of reasoning to that of Section \ref{sec:hoif}, we should use a second-order $U$-statistic to estimate the following ``projected bias'':
\begin{align*}
\bbE \left[ \left( \frac{t}{\pi_{1}} - 1 \right) (\bx - \bmu)^{\top} \right] (n^{-1} \bSigma)^{-} \bbE \left[ (\bx - \bmu) \frac{t}{\pi_{1}} (y - \hat{\tau}_{\unadj}) \right],
\end{align*}
leading to the statistic $\widehat{\IIFF}_{\unadj, 2, 2}^{\dag}$ defined in \eqref{error free HOIF}. We hope that this algebraic equivalence sheds some new light on the bias \& variance reduction ``mechanism'' of $\hat{\tau}_{\db}$ to readers.
\end{remark}

Next, we present the statistical properties of $\hat{\tau}_{\adj, 2}^{\dag}$, and equivalently, $\hat{\tau}_{\db}$.

\begin{proposition}
\label{prop:db}
Under Assumptions~\ref{as:randomization}--\ref{as:regularity conditions}, we have the following theoretical guarantees on $\hat{\tau}_{\adj, 2}^{\dag}$ and equivalently $\hat{\tau}_{\db}$:
\begin{enumerate}[leftmargin=1cm,topsep=0.25pt]
\item The design-based bias has the following form:
\begin{equation}
\label{bias, db, design-based}
\begin{split}
\bias^{\sff} \left( \hat{\tau}_{\adj, 2}^{\dag} \right) & \equiv \bias^{\sff} \left( \hat{\tau}_{\db} \right)  = 2 \frac{\pi_{0}}{\pi_{1}^{2}} \frac{n_{1} - 1}{n - 1} \frac{1}{n - 2} \left\{ \frac{p}{n} \bar{\tau} + \frac{1}{n} \sum_{1 \leq i \neq j \leq n} H_{i, j} y_{j} (1) \right\} \\
& = 2 \frac{\pi_{0}}{\pi_{1}} \left( 1 - \frac{\pi_{0}}{\pi_{1}} \frac{1}{n - 1} \right) \frac{1}{n - 2} \left\{ \frac{p}{n} \bar{\tau} - \frac{1}{n} \sum_{i = 1}^{n} H_{i, i} y_{i} (1) \right\}.
\end{split}
\end{equation}
In addition, if Assumption \ref{as:regularity conditions} holds, then $\bias^{\sff} \left( \hat{\tau}_{\adj, 2}^{\dag} \right) \equiv \bias^{\sff} \left( \hat{\tau}_{\db} \right) = O \left( \frac{\pi_{0}}{\pi_{1}} \frac{\alpha}{n} \right).$

\item The design-based variance has the following approximation under Assumption \ref{as:regularity conditions}:
\begin{equation}
\label{var, db, design-based}
\var^{\sff} \left( \hat{\tau}_{\adj, 2}^{\dag} \right) \equiv \var^{\sff} \left( \hat{\tau}_{\db} \right) = \nu^{\sff\dag} + o (n^{-1}),
\end{equation}
\begin{align*}
\text{where }
& \nu^{\sff\dag}  \coloneqq  \ \underbrace{\left( \frac{\pi_{0}}{\pi_{1}} \right) \frac{1}{n} V_{n} \left[ (y_{i} (1) - \bar{\tau}) - \sum_{j \neq i} H_{j, i} (y_{j} (1) - \bar{\tau}) \right]}_{\eqqcolon \, \nu^{\sff}_{\db, 1}} \\
& + \underbrace{\left( \frac{\pi_{0}}{\pi_{1}} \right)^{2} \frac{1}{n} \left\{ \frac{1}{n} \sum_{i = 1}^{n} H_{i, i} (1 - H_{i, i}) (y_{i} (1) - \bar{\tau})^{2}  + \frac{1}{n} \sum_{1 \leq i \neq j \leq n} H_{i, j}^{2} (y_{i} (1) - \bar{\tau}) (y_{j} (1) - \bar{\tau}) \right\}}_{\eqqcolon \, \nu^{\sff}_{\db, 2}}.
\end{align*}

\end{enumerate}
\end{proposition}
 
The proof of Proposition \ref{prop:db} can be found in Appendix \ref{app:understanding}. The asymptotic normality of $\hat{\tau}_{\db}$ can be obtained in a fashion similar to that of $\hat{\tau}_{\adj, 2}$, and therefore omitted. It is now also clear why $\hat{\tau}_{\adj, 3}^{\dag}$ is exactly unbiased under CRE. We also refer to \citet{lu2025debiased} for analogous results on ATE. Unlike \citet{lu2025debiased}, we, in fact, characterize the exact variance of $\hat{\tau}_{\db}$ or $\hat{\tau}_{\adj, 2}^{\dag}$, and the asymptotic variance is simply a corollary. We do not further consider the statistical properties of $\hat{\tau}_{\db}$ under the superpopulation framework.

\begin{remark}
\label{rem:var comparison}
We briefly compare the variances of $\hat{\tau}_{\text{adj}, 2}$ and $\hat{\tau}_{\text{adj}, 2}^{\dag}$ (equivalent to $\hat{\tau}_{\text{db}}$). Equations \eqref{var, HOIF-CRE, design-based} and \eqref{var, db, design-based} show that their asymptotic variances differ only in whether potential outcomes are centered by $\bar{\tau}$. The advantage of $\hat{\tau}_{\text{adj}, 2}^{\dag}$ emerges when potential outcomes are homogeneous with a small coefficient of variation. Although centering is common in practice, Appendix \ref{app:var comparison} provides computer-assisted examples where centering increases asymptotic variance. 
\end{remark}

We are left to discuss the connection of the estimators proposed in \citet{lei2021regression} and \citet{jiang2025adjustments} to HOIF-motivated estimators. The estimator proposed in \citet{lei2021regression}, denoted by $\hat{\tau}_{\rm ld}$, differs from $\hat{\tau}_{\db}$ \eqref{debiased estimator} only in the way $\hat{\bbeta}_{c}$ in $\hat{\tau}_{\adj, 1}^{\dag}$ (see \eqref{adjusted estimator}) is computed. Instead of using $\hat{\bSigma}$ in $\hat{\bbeta}_{c}$, $\hat{\tau}_{\rm ld}$ estimates $\hat{\Sigma}$ using covariates only in the treated group, although the covariate distributions in the two groups should be the same by design. $\hat{\tau}_{\rm ld}$ remains CAN if $p = O (n^{2 / 3})$ up to a log-factor \citep{lei2021regression}, and the stronger dependence on the dimension $p$ is a result of the mismatch between using an estimated $\hat{\bSigma}_{1}$ \eqref{tSigma} in $\hat{\tau}_{\adj, 1}^{\dag}$ and using the true $\hat{\bSigma}$ when removing the ``diagonal''. The extra term of $\hat{\tau}_{\rm ld}$ over $\hat{\tau}_{\adj, 2}^{\dag}$ contains an error of order $\Vert \hat{\bSigma}^{-1} - \hat{\bSigma}_{1}^{-1} \Vert_{\op} = O_{\bbP} (p^{1 / 2} / n^{3 / 2})$ up to log-factors by matrix Bernstein inequality \citep{tropp2015introduction}, resulting in $p = O (n^{2 / 3})$; see the end of Appendix~\ref{app:understanding} for derivations. Finally, \citet{jiang2025adjustments} directly uses the $V$-statistic $\hat{\tau}_{\adj, 1}^{\dag}$, because they either assume $p = o (\sqrt{n})$ or assume that the linear working model for the outcome regression is correctly specified.

We now summarize our findings in this section in Table~\ref{t:one} below, which, in our opinion, is the most important message in our paper.

\begin{table}[htbp]
\centering
\caption{The connection between HOIF-motivated estimators and other recent proposals. Here $\tilde{o}(\cdot)\coloneqq o(\cdot/(\log n)^{1/3})$. In the first row, the extra term between $\hat{\tau}_{\rm ld}$ and $\hat{\tau}_{\adj, 2}^{\dag}$ can be found in the end of Appendix~\ref{app:understanding}.}
\label{t:one}
\begin{tabular}{cccc}
\Xhline{3\arrayrulewidth}
\textbf{Estimator} & \textbf{Correspondence} & \textbf{Linear Model} & \textbf{Dimension} \\ 
\Xhline{3\arrayrulewidth}
\addlinespace
\multirowcell{2}{$\hat{\tau}_{\mathrm{ld}}$ \\ {\scriptsize\citet{lei2021regression}}} 
& \multirowcell{2}{$\equiv \hat{\tau}_{\mathrm{adj}, 2}^{\dag} + \mathrm{rem}$}
& \multirowcell{2}{\xmark}
& \multirowcell{2}{$p = \tilde{o} \left( n^{2/3} \right)$} \\ [20pt]
\hline
\addlinespace 
\addlinespace 
\multirowcell{2}{$\hat{\tau}_{\mathrm{db}}$ \\ {\scriptsize\citet{lu2025debiased}}} 
&  \multirowcell{2}{$\equiv \hat{\tau}_{\mathrm{adj}, 2}^{\dag}$}
& \multirowcell{2}{\xmark}
& \multirowcell{2}{$p = o(n)$} \\[20pt]
\hline
\addlinespace
\multirowcell{2}{$\hat{\tau}_{\mathrm{db}}^{u}$ \\ {\scriptsize\citet{chang2024exact}}} 
& \multirowcell{2}{$\equiv \hat{\tau}_{\mathrm{adj}, 3}^{\dag}$} 
& \multirowcell{2}{\xmark} 
& \multirowcell{2}{$p = o(n)$} \\[20pt]
\hline
\addlinespace
\multirowcell{3}{$\hat{\tau}_{\mathrm{adj}, 1}^{\dag}$ \\ {\scriptsize\citet{ma2022regression, ye2023toward}} \\ {\scriptsize\citet{jiang2025adjustments}, and etc.}} 
& \multirowcell{3}{\shortstack{\small V-statistic \\ \small version of \normalsize$\hat{\tau}_{\adj, 2}^{\dag}$}} 
& \multirowcell{3}{\shortstack{\cmark \\ \\ \xmark}} 
& \multirowcell{3}{\shortstack{$p = o(n)$ \\ $p = o(\sqrt{n})$}} \\[30pt]
\addlinespace
\Xhline{3\arrayrulewidth}
\end{tabular}
\end{table}

\begin{remark}
\label{rem:others}
We finally comment on the connection of our work with two recent preprints: \citet{abadie2025unbiased} and \citet{song2025neumann}. \citet{abadie2025unbiased} use the ridge inverse $\hat{\bSigma}_{\lambda}^{-1} = (\bX^{\top} \bX + \lambda \bI)^{-1}$ instead of the non-regularized inverse $\hat{\bSigma}^{-1}$ to improve the numerical stability of the estimator. In the implementation of our proposed estimator, we in fact use the generalized inverse, which is also numerically stable and can be viewed as the limit of $\hat{\bSigma}_{\lambda}^{-1}$ as $\lambda \rightarrow 0$. From our analysis, the adjusted estimator will always have negligible bias compared to the sampling variability if a $U$-statistic is used instead of a $V$-statistic and $\bt$ is not used to compute the inverse Gram matrix. Therefore, we conjecture that as long as $\lambda$ is appropriately chosen such that $\hat{\bSigma}_{\lambda}^{-1} - \hat{\bSigma}^{-1}$ is sufficiently close, we can conclude that the ridge-regularized adjusted estimator of \citet{abadie2025unbiased} has an asymptotic variance sometimes smaller than and never greater than that of $\hat{\tau}_{\unadj}$. Regarding the estimator proposed in \citet{song2025neumann}, since they estimate $\hat{\bSigma}^{-1}$ by $\hat{\bSigma}_{t}^{-1}$ as in \citet{lei2021regression}, as already suggested in \citet{liu2020nearly}, additional corrections of the estimation bias is needed to relax the requirement on $p$, as done in \citet{liu2023hoif} or initially in \citet{liu2020nearly}. It should be noted that \citet{song2025neumann} focused on the design-based framework, so the true Gram matrix $\hat{\bSigma}$ is known and bias correction using Neumann series is simplified compared to \citet{liu2023hoif} or \citet{liu2020nearly}, because they assume the existence of a (possibly fictitious) superpopulation so the true $\bSigma$ is unknown.
\end{remark}

\section{Simulation Studies and Real Data Application}
\label{sec:simulations}


In this section, we conduct numerical experiments and real data analysis to explore our theoretical findings regarding the design-based bias and variance formulas for $\hat{\tau}_{\unadj}$, $\hat{\tau}_{\adj, 2}$, $\hat{\tau}_{\adj, 2}^{\dag}$ and $\hat{\tau}_{\adj, 3}$.
The accompanying \texttt{R} code for replicating our simulations is available at \href{https://github.com/Cinbo-Wang/HOIF-Car}{the GitHub repository}, which includes both the exact bias and variance formulas, as well as the asymptotic versions presented in our paper.

\subsection{Simulation studies}
\label{sec:faux sim}

We first generate a very large data matrix $\mathcal{X} \in \bbR^{N\times N}$ with $N = 5000$, with each row $\mathcal{X}_i \sim t_3(0,\Sigma)$, where $\Sigma_{k,l}=0.1^{|k-l|},k,l\in[N]$. Then we generated two types of exogenous noise $\mathcal{\epsilon}_0\in \bbR^{N}$: 
univariate $t_3$ distribution and “worst-case residual” \citep{lei2021regression}): $\mathcal{\epsilon}_0=\text{Scale}\left((\mathbb{I}_N - \bH)(H_{1,1},\cdots,H_{N,N})^\top\right),$ where $\text{Scale}(a_i)\coloneqq \left(a_i-\bar{a}\right)/\left(\sum_{i=1}^{n}(a_i-\bar{a})^2\right)^{1/2}$.
Also, we let $\bbeta_j = (-1)^j/\sqrt{j},j\in[N]$. We vary the following in our experiments:
\begin{itemize}[leftmargin=0.5cm,topsep=0.25pt]
    \item \textbf{covariates} $\bX$: sample size $n \in \{ 50,100,500,1000\}$, and covariate dimension $p = \lceil n \cdot \alpha \rceil$ where $\alpha \in \{ 0.05,\dots,0.7\}$. Then we choose $\bX \coloneqq \mathcal{X}_{1:n,1:p} \in \bbR^{n \times p}$.
    \item \textbf{potential outcome model}: linear $f(\bx)=\bx^{\top} \bbeta_{[p]}$, and nonlinear $f(\bx)=\text{sign}\left(\bx^{\top} \bbeta_{[p]}\right)|\bx^{\top} \bbeta_{[p]}|^{\frac{1}{2}} + \sin(\bx^{\top} \bbeta_{[p]})$.  Here, $\bbeta_{[p]}$ represents the first $p$ elements of $\bbeta$.
    \item \textbf{exogenous error distribution} $\epsilon$: for both $t_3$ and ``worst-case residual'' types, we  choose the first $n$ elements in $\epsilon_0$.
    \item \textbf{signal size} $ \gamma\in \{1,2\}$: the potential outcomes are generated according to $y_i (1) = 1 + f(\bx_i) + \epsilon_i\cdot \sqrt{V_n\left(f(\bX)\right)/V_n(\epsilon)} / \sqrt{\gamma}$ for $i\in [n]$.
    \item \textbf{treatment assignment ratio} $\pi_1 \in \{\frac{1}{2},\frac{2}{3}\}$: once the pre-treatment variables $\{\left(\bx_i,y_i(1)\right)\}_{i=1}^{n}$ are generated, we fix them and randomly assign $\lceil n\pi_1\rceil$ samples to the treatment arm. (When $\pi_1=\frac{2}{3}$,  the ``true'' $\pi_1$ we used for estimation is $\lceil n\pi_1\rceil / n $.) 
\end{itemize}

The relative efficiencies  of different estimators, based on their exact and approximate variances, benchmarked by $\var^{\sff} (\hat{\tau}_{\unadj})$, are presented in Figure \ref{fig:oracle_var_main}. Overall, $\hat{\tau}_{\adj, 2}$ and $\hat{\tau}_{\adj, 2}^{\dag}$($\hat{\tau}_{\db}$) exhibit very similar relative efficiencies; however, in the nonlinear outcome model setting, there are cases where $\hat{\tau}_{\adj, 2}^{\dag}$ demonstrates greater efficiency. As indicated in Remark \ref{rem:var comparison}, $\hat{\tau}_{\adj, 2}^{\dag}$  is expected to  be more efficient when CoV$^{2}$ is small, which is confirmed by the results shown in Figure \ref{fig:cv2_variance_ratio_main}.
\begin{figure}[H]
    \centering
    \includegraphics[width=\linewidth]{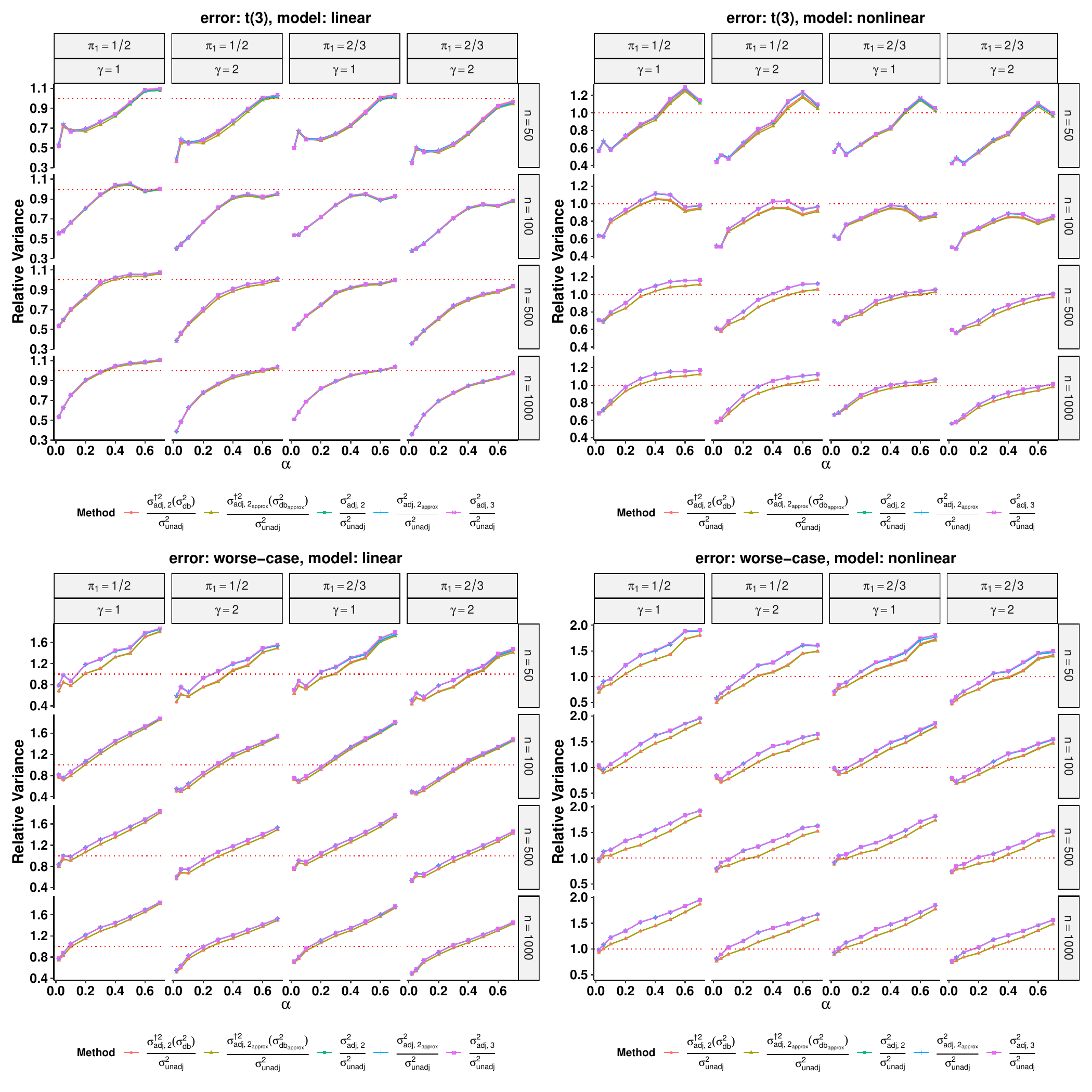}
    \caption{Relative efficiencies of $\hat{\tau}_{\adj, 2}^{\dag}$ (or equivalently $\hat{\tau}_{\db}$), $\hat{\tau}_{\adj, 2}$, $\hat{\tau}_{\adj, 3}$ based on exact and approximate formula.}
    \label{fig:oracle_var_main}
\end{figure}

\begin{figure}[H]
    \centering
    \includegraphics[width=0.6\linewidth]{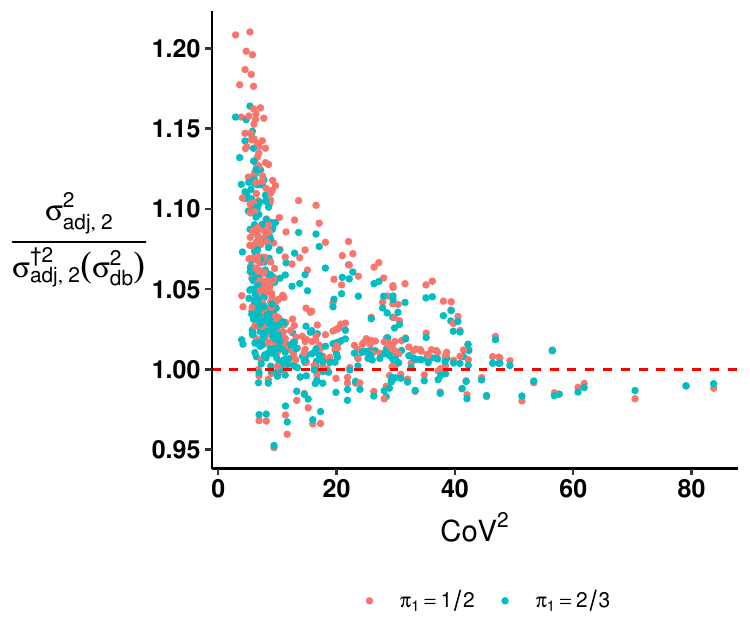}
    \caption{CoV$^{2}$ vs. $\var^{\sff} (\hat{\tau}_{\adj, 2}) / \var^{\sff} (\hat{\tau}_{\adj, 2}^{\dag})$. Each point represents a particular simulation setting in Figure \ref{fig:oracle_var_main}; only the settings with CoV$^{2} \leq 100$ are shown here.}
    \label{fig:cv2_variance_ratio_main}
\end{figure}

Furthermore, we conduct realistic simulations by actually drawing treatment assignments. Once $\{\bx_i, y_i(1)\}_{i = 1}^{n}$ are generated, they are fixed, and random treatment assignments are drawn repeatedly from CRE, with the Monte Carlo repetition size $K = 20000$. The resulting biases and variances are shown in the Appendix \ref{app:sim} and the overall message is similar.

\subsection{Real Data Application}
\label{app:real}

We next apply our proposed estimators to  
 data from the NIDA-CTN-0030 trial, which tested whether adding individual drug counseling to buprenorphine/naloxone treatment and standard medical management (SMM) improved outcomes for prescription opioid dependence. We focused on the first phase, where patients were randomized to either SMM or enhanced medical management (EMM), stratified by heroin use and chronic pain history. For simplicity, we treated the design as CRE by including the randomization stratum as baseline covariates.


The outcome of interest is the proportion of positive urine laboratory results among all tests. We include the following baseline covariates: randomization stratum, age, sex, and baseline urine test results. Missing data were replaced by medians. Following \citet{wang2023model}, outcomes were considered missing after two consecutive missed tests. We constructed a design matrix $\bX \in \bbR^{n \times p}$ with $n=587, p=6$, an outcome vector $\by \in \bbR^{n}$, and a treatment assignment vector (296 controls, 291 treated). Our target is the treatment-specific mean $\bar{\tau}$.

\begin{table}
\begin{center}
\caption{Treatment effect estimates of the proportion of positive urine laboratory results among all tests, using data from NIDA-CTN-0030. The four approaches considered are (a) $\hat{\tau}_{\unadj}$, (b) $\hat{\tau}_{\adj,2}$, (c)  $\hat{\tau}_{\adj,2}^{\dag}$ and (d) $\hat{\tau}_{\adj,3}$. All values are multiplied by 10. The superscript ``c" refers to the conservative standard error estimates.}
\label{app_CTN30}
\begin{tabular}{cccccc}
\Xhline{3\arrayrulewidth}
\textbf{Method} & \textbf{Estimate} & \textbf{SE} & \textbf{$95\%$-CI} & \textbf{SE$^{c}$} & \textbf{$95\%$-CI$^{c}$} \\ 
\Xhline{3\arrayrulewidth}
$\hat{\tau}_{\unadj}$ & $1.1836$ & $0.0372$ & $(1.1108, 1.2564)$ & NA & NA \\ 
\hline
$\hat{\tau}_{\adj,2}$ & $1.1837$ & $0.0327$ & (1.1196, 1.2489) & 0.0328 & (1.1195, 1.2479)  \\ 
\hline
$\hat{\tau}_{\adj,2}^{\dag}$ & $1.1755$ & $0.0321$ & (1.1126, 1.2383) & 0.0318 & (1.1131, 1.2379) \\ 
\hline
$\hat{\tau}_{\adj,3}$ & $1.1838$ & $0.0327$ & (1.1196, 1.2489) & 0.0328 & (1.1196, 1.2480)  \\ 
\Xhline{3\arrayrulewidth}
\end{tabular}
\end{center}
\end{table}

In Table \ref{app_CTN30}, we present the point estimate, along with the non-conservative and conservative estimates of standard deviation, and the $95\%$ confidence interval for each of the four different estimators, excluding $\hat{\tau}_{\unadj}$. Compared to other adjusted estimators, the point estimate of $\hat{\tau}_{\adj,2}$  and the bias-free $\hat{\tau}_{\adj,3}$ that we proposed are closer to the unadjusted estimator $\hat{\tau}_{\unadj}$. Furthermore, their standard errors are also lower than $\hat{\tau}_{\unadj}$.

\section{Concluding Remarks}
\label{sec:conclusions}
In this paper, we demonstrate that HOIF-motivated estimators are natural candidate treatment effect estimators adjusting for high-dimensional baseline covariates in RCT. We prove statistical properties of the HOIF-motivated estimator under CRE or Bernoulli sampling and the design-based framework, which, to the best of our knowledge, is new in the HOIF literature. More importantly, we show that HOIF-motivated estimators place several recently proposed treatment effect estimators adjusting for high-dimensional baseline covariates in a unified framework. This result further consolidates the role of HOIF as a useful template to construct ``good'' estimators from first principles, without resorting to clever tricks. Finally, there are two main future directions that worth pursuing. First, it is of theoretical interest to use the design-based Riesz representation theory of \citet{harshaw2022design} to justify that the HOIF-motivated estimators are indeed HOIF in the design-based framework. Second, it is of practical interest to extend our estimators to designs beyond CRE.

\section*{Acknowledgments}

Section~\ref{sec:hoif} of this paper is partly motivated by a conversation with Oliver Dukes, who kindly suggested to LL that a somewhat pedagogical/elementary review of HOIF could be useful to practitioners. The authors would also like to express their sincere gratitude to Yujia Gu and \href{https://maweiruc.github.io/}{Wei Ma} for enlightening discussions and Muluneh Alene Addis and Kelly Van Lancker for spotting an issue in the variance estimator in the first version of the R package \texttt{HOIFCar}.


%
\putbib[Master]

\end{bibunit}

\appendix

\begin{appendices}

\begin{bibunit}[plainnat]

\section{Proofs of Lemma in Section \ref{sec:hoif}}
\label{app:lemma1}
Here, we  prove Lemma \ref{lem:svb}:
\begin{proof}
The variance reduction part crucially relies on the following observation:
\begin{align*}
\var \left\{ \left( \frac{t}{\pi_{1}} - 1 \right) (\bx - \bmu)^{\top} \bbeta \right\} & = \cov \left\{ \frac{t}{\pi_{1}} y, \left( \frac{t}{\pi_{1}} - 1 \right) (\bx - \bmu)^{\top} \bbeta \right\} \\
& = \cov \left\{ \frac{t}{\pi_{1}} (y - \tau), \left( \frac{t}{\pi_{1}} - 1 \right) (\bx - \bmu)^{\top} \bbeta \right\},
\end{align*}
which is equivalent to saying that the augmented term $\left( \frac{t}{\pi_{1}} - 1 \right) (\bx - \bmu)^{\top} \bbeta$ is an $L_{2} (\bbP)$-projection of $\frac{t}{\pi_{1}} y$ (or equivalently $\frac{t}{\pi_{1}} (y - \tau)$) onto the following space of mean-zero random variables $\{(\frac{t}{\pi_{1}} - 1) (\bx - \bmu)^{\top} \bm{b}: \bm{b} \in \bbR^{p}\}$. Since $L_{2} (\bbP)$-projection contracts $L_{2} (\bbP)$-norm, which corresponds to the variance, $\tilde{\tau}_{\adj}$ has a reduced variance.
\end{proof}

\section{Proofs of Theorems in Section \ref{sec:main results}}
\label{app:main results}

In the derivation below, we also frequently use the following notation: $\hat{\bSigma}_{y} \coloneqq \bX_{c}^{\top} (\by (1) \circ \bX_{c})$, 
where $\circ$ denotes the Hadamard product.

\subsection{Proofs of Theorems in Section \ref{sec:hoif results}}
\label{app:hoif results}

\allowdisplaybreaks

We first prove Theorem \ref{thm:HOIF-CRE, design-based}.

\begin{proof} 
We first characterize the bias of $\hat{\tau}_{\adj, 2}$ in the design-based framework. It is trivial to see that $\hat{\tau}_{\unadj}$ is unbiased so we only need to compute the expectation of $\widehat{\IIFF}_{\unadj, 2, 2}$ using Lemma \ref{lem:CRE}:
\begin{align*}
\bbE^{\sff} (\widehat{\IIFF}_{\unadj, 2, 2}) & = \frac{1}{n} \sum_{1 \leq i \neq j \leq n} \frac{\bbE (t_{i} t_{j})}{\pi_{1}^{2}} H_{i, j} y_{j} (1) - \frac{1}{n} \sum_{1 \leq i \neq j \leq n} \frac{\bbE (t_{j})}{\pi_{1}} H_{i, j} y_{j} (1) \\
& = - \frac{1}{n} \sum_{1 \leq i \neq j \leq n} H_{i, j} y_{j} (1) \left\{ 1 - \frac{1}{\pi_{1}} \frac{n \pi_{1} - 1}{n - 1} \right\} \\
& = \frac{\pi_{0}}{\pi_{1}} \frac{1}{n (n - 1)} \sum_{1 \leq i \neq j \leq n} H_{i, j} y_{j} (1).
\end{align*}

Next, we compute the variance of $\hat{\tau}_{\adj, 2}$ by using the results from Lemma \ref{lem:HOIF-var} below.
\begin{align*}
& \ \var^{\sff} (\hat{\tau}_{\adj, 2}) \\
= & \ \var^{\sff} (\hat{\tau}_{\unadj}) + \var^{\sff} (\widehat{\IIFF}_{\unadj, 2, 2}) - 2 \cov^{\sff} (\hat{\tau}_{\unadj}, \widehat{\IIFF}_{\unadj, 2, 2}) \\
= & \ \frac{\pi_{0}}{\pi_{1}} \frac{1}{n} V_{n} (y (1)) - \frac{\pi_{0}}{\pi_{1}} \frac{1}{n^{2}} \sum_{1 \leq i \neq j \leq n} H_{i, j} (1 + 2 H_{j, j}) y_{i} (1) y_{j} (1) + \frac{\pi_{0}}{\pi_{1}} \frac{1}{n^{2}} \sum_{i = 1}^{n} H_{i, i} \left( \frac{1}{\pi_{1}} - \frac{\pi_{0} + 1}{\pi_{1}} H_{i, i} \right) y_{i} (1)^{2} \\
& + \left( \frac{\pi_{0}}{\pi_{1}} \right)^{2} \frac{1}{n^{2}} \trace \left( \left( \hat{\bSigma}_{y} \hat{\bSigma}^{-} \right)^{2} \right) - \frac{\pi_{0}}{\pi_{1}} \frac{1}{n^{3}} \trace^{2} \left( \hat{\bSigma}_{y} \hat{\bSigma}^{-} \right) - 2 \bar{\tau} \cdot \frac{\pi_{0}}{\pi_{1}} \frac{1}{n^{2}} \trace \left( \hat{\bSigma}_{y} \hat{\bSigma}^{-} \right) + \mathsf{Rem},
\end{align*}
where $\mathsf{Rem} = \mathsf{Rem}_{1} - 2 \mathsf{Rem}_{2}$. The exact forms of $\mathsf{Rem}_{1}$ and $\mathsf{Rem}_{2}$ can be deduced from the proof of Lemma \ref{lem:HOIF-var} and both are of order $o (1 / n)$ under Assumption \ref{as:regularity conditions}. As in the main text, denoting the dominating term by $\nu^{\sff}$, this term can be further simplified as follows:
\begin{align*}
\nu^{\sff} \coloneqq & \ \frac{\pi_{0}}{\pi_{1}} \frac{1}{n} V_{n} (y (1)) - \frac{\pi_{0}}{\pi_{1}} \frac{1}{n^{2}} \sum_{1 \leq i \neq j \leq n} H_{i, j} (1 + 2 H_{j, j}) y_{i} (1) y_{j} (1) \\
& + \frac{\pi_{0}}{\pi_{1}} \frac{1}{n^{2}} \sum_{i = 1}^{n} H_{i, i} \left( \frac{1}{\pi_{1}} - \frac{\pi_{0} + 1}{\pi_{1}} H_{i, i} \right) y_{i} (1)^{2} + \left( \frac{\pi_{0}}{\pi_{1}} \right)^{2} \frac{1}{n^{2}} \sum_{i = 1}^{n} \sum_{j = 1}^{n} y_{i} (1) H_{i, j}^{2} y_{j} (1) \\
& - \frac{\pi_{0}}{\pi_{1}} \frac{1}{n} \left( \frac{1}{n} \sum_{i = 1}^{n} H_{i, i} y_{i} (1) \right)^{2} - 2 \frac{\pi_{0}}{\pi_{1}} \frac{1}{n} \bar{\tau} \cdot \left( \frac{1}{n} \sum_{i = 1}^{n} H_{i, i} y_{i} (1) \right) \\
= & \ \left( \frac{\pi_{0}}{\pi_{1}} \right) \frac{1}{n} \left\{ \frac{1}{n} \sum_{i = 1}^{n} y_{i} (1)^{2} - \frac{1}{n} \sum_{1 \leq i \neq j \leq n} y_{i} (1) H_{i, j} (1 + H_{i, i} + H_{j, j}) y_{j} (1) \right\} \\
& + \left( \frac{\pi_{0}}{\pi_{1}} \right) \left\{ 1 + \left( \frac{\pi_{0}}{\pi_{1}} \right) \right\} \frac{1}{n^{2}} \sum_{i = 1}^{n} H_{i, i} (1 - H_{i, i}) y_{i} (1)^{2} - \frac{\pi_{0}}{\pi_{1}} \frac{1}{n} \left( \frac{1}{n} \sum_{i = 1}^{n} (1 + H_{i, i}) y_{i} (1) \right)^{2} \\
& + \left( \frac{\pi_{0}}{\pi_{1}} \right)^{2} \frac{1}{n^{2}} \left\{ \sum_{i = 1}^{n} \sum_{j = 1}^{n} H_{i, j}^{2} y_{i} (1) y_{j} (1) - \sum_{i = 1}^{n} H_{i, i}^{2} y_{i} (1)^{2} \right\} \\
= & \ \left( \frac{\pi_{0}}{\pi_{1}} \right) \frac{1}{n} \left\{ \frac{1}{n} \sum_{i = 1}^{n} \{(1 + H_{i, i}) y_{i} (1)\}^{2} - \left( \frac{1}{n} \sum_{i = 1}^{n} (1 + H_{i, i}) y_{i} (1) \right)^{2} \right\} \\
& + \left( \frac{\pi_{0}}{\pi_{1}} \right) \frac{1}{n} \left\{ \frac{1}{n} \sum_{i = 1}^{n} (- H_{i, i} - 2 H_{i, i}^2) y_{i} (1)^{2} - \frac{1}{n} \sum_{1 \leq i \neq j \leq n} y_{i} (1) H_{i, j} (1 + H_{i, i} + H_{j, j}) y_{j} (1) \right\} \\
& + \left( \frac{\pi_{0}}{\pi_{1}} \right)^{2} \frac{1}{n^{2}} \left\{ \sum_{i = 1}^{n} \sum_{j = 1}^{n} H_{i, j}^{2} y_{i} (1) y_{j} (1) + \sum_{i = 1}^{n} H_{i, i} (1 - 2 H_{i, i}) y_{i} (1)^{2} \right\} \\
= & \ \left( \frac{\pi_{0}}{\pi_{1}} \right) \frac{1}{n} \left\{ \frac{1}{n} \sum_{i = 1}^{n} \{(1 + H_{i, i}) y_{i} (1)\}^{2} - \left( \frac{1}{n} \sum_{i = 1}^{n} (1 + H_{i, i}) y_{i} (1) \right)^{2} \right\} \\
& + \left( \frac{\pi_{0}}{\pi_{1}} \right) \frac{1}{n} \left\{ - \frac{1}{n} \sum_{i = 1}^{n} \sum_{j = 1}^{n} y_{i} (1) H_{i, j} (1 + H_{i, i} + H_{j, j}) y_{j} (1) \right\} \\
& + \left( \frac{\pi_{0}}{\pi_{1}} \right)^{2} \frac{1}{n^{2}} \left\{ \sum_{i = 1}^{n} \sum_{j = 1}^{n} H_{i, j}^{2} y_{i} (1) y_{j} (1) + \sum_{i = 1}^{n} H_{i, i} (1 - 2 H_{i, i}) y_{i} (1)^{2} \right\} \\
= & \ \left( \frac{\pi_{0}}{\pi_{1}} \right) \frac{1}{n} \left\{ \frac{1}{n} \sum_{i = 1}^{n} \{(1 + H_{i, i}) y_{i} (1)\}^{2} - \left( \frac{1}{n} \sum_{i = 1}^{n} (1 + H_{i, i}) y_{i} (1) \right)^{2} \right\} \\
& - \left( \frac{\pi_{0}}{\pi_{1}} \right) \frac{1}{n} \left\{ \frac{1}{n} \sum_{i = 1}^{n} \sum_{j = 1}^{n} y_{i} (1) H_{i, j} (1 + H_{i, i} + H_{j, j}) y_{j} (1) \right\} \\
& + \left( \frac{\pi_{0}}{\pi_{1}} \right)^{2} \frac{1}{n} \left\{ \frac{1}{n} \sum_{1 \leq i \neq j \leq n} H_{i, j}^{2} y_{i} (1) y_{j} (1) + \frac{1}{n} \sum_{i = 1}^{n} H_{i, i} (1 - H_{i, i}) y_{i} (1)^{2} \right\} \\
= & \ \left( \frac{\pi_{0}}{\pi_{1}} \right) \frac{1}{n} \frac{1}{n} \sum_{i = 1}^{n} \left\{ (1 + H_{i, i}) y_{i} (1) - \sum_{j = 1}^{n} H_{i, j} y_{j} (1) - \left( \frac{1}{n} \sum_{l = 1}^{n} (1 + H_{l, l}) y_{l} (1) \right) \right\}^{2} \\
& + \left( \frac{\pi_{0}}{\pi_{1}} \right)^{2} \frac{1}{n} \left\{ \frac{1}{n} \sum_{1 \leq i \neq j \leq n} H_{i, j}^{2} y_{i} (1) y_{j} (1) + \frac{1}{n} \sum_{i = 1}^{n} H_{i, i} (1 - H_{i, i}) y_{i} (1)^{2} \right\} \\
= & \ \left( \frac{\pi_{0}}{\pi_{1}} \right) \frac{1}{n} V_{n} \left[ y_{i} (1) - \sum_{j \neq i} H_{j, i} y_{j} (1) \right] \\
& + \left( \frac{\pi_{0}}{\pi_{1}} \right)^{2} \frac{1}{n} \left\{ \frac{1}{n} \sum_{1 \leq i \neq j \leq n} H_{i, j}^{2} y_{i} (1) y_{j} (1) + \frac{1}{n} \sum_{i = 1}^{n} H_{i, i} (1 - H_{i, i}) y_{i} (1)^{2} \right\}.
\end{align*}
\end{proof}

\begin{lemma}
\label{lem:HOIF-var}
The following holds under Assumption \ref{as:randomization}:
\begin{equation}
\label{HOIF-var}
\begin{split}
\var^{\sff} (\hat{\tau}_{\unadj}) = & \ \frac{\pi_{0}}{\pi_{1}} \frac{1}{n} V_{n} (y (1)), \\
\var^{\sff} (\widehat{\IIFF}_{\unadj, 2, 2}) = & \ \frac{\pi_{0}}{\pi_{1}} \frac{1}{n^{2}} \sum_{i = 1}^{n} H_{i, i} \left( \frac{1}{\pi_{1}} - \frac{\pi_{0} + 1}{\pi_{1}} H_{i, i} \right) y_{i} (1)^{2} \\
& + \left( \frac{\pi_{0}}{\pi_{1}} \right)^{2} \frac{1}{n^{2}} \trace \left( \left( \hat{\bSigma}_{y} \hat{\bSigma}^{-} \right)^{2} \right) - \frac{\pi_{0}}{\pi_{1}} \frac{1}{n^{3}} \trace^{2} \left( \hat{\bSigma}_{y} \hat{\bSigma}^{-} \right) \\
& + \frac{\pi_{0}}{\pi_{1}} \frac{1}{n^{2}} \sum_{1 \leq i \neq j \leq n} H_{i, j} (1 - 2 H_{j, j}) y_{i} (1) y_{j} (1) + \mathsf{Rem}_{1}, \\
\cov^{\sff} (\hat{\tau}_{\unadj}, \widehat{\IIFF}_{\unadj, 2, 2}) = & \ \bar{\tau} \cdot \frac{\pi_{0}}{\pi_{1}} \frac{1}{n^{2}} \sum_{i = 1}^{n} H_{i, i} y_{i} (1) + \frac{\pi_{0}}{\pi_{1}} \frac{1}{n^{2}} \sum\limits_{1 \leq i \neq j \leq n} H_{i, j} y_{i} (1) y_{j} (1) + \mathsf{Rem}_{2}.
\end{split}
\end{equation}
Additionally, under Assumption \ref{as:regularity conditions}, $\mathsf{Rem}_{1}, \mathsf{Rem}_{2} = o (1 / n)$.
\end{lemma}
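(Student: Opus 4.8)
The plan is to treat all three quantities as low-degree polynomials in the treatment indicators $\bt$ and to evaluate their (co)moments directly from the finite-population moment formulas of Lemma~\ref{lem:CRE}. The preliminary simplification I would record first is that $t_j y_j = t_j y_j(1)$ by Assumption~\ref{as:consistency} (since $t_j^2=t_j$ and $t_j(1-t_j)=0$), so that $\hat{\tau}_{\unadj}=\frac{1}{n\pi_1}\sum_m t_m y_m(1)$ and $\hat{\IIFF}_{\unadj,2,2}=\frac{1}{n\pi_1^2}\sum_{i\ne j}(t_i-\pi_1)\,t_j\,H_{i,j}\,y_j(1)$; this is what makes only $y(1)$, never $y(0)$, appear. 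The formula for $\var^{\sff}(\hat{\tau}_{\unadj})$ is then immediate: it is the usual finite-population variance of a Horvitz--Thompson sum, obtained from $\var^{\sff}(t_i)=\pi_1\pi_0$ and $\cov^{\sff}(t_i,t_j)=-\pi_1\pi_0/(n-1)$, which collapses to $\frac{\pi_0}{\pi_1}\frac{1}{n}V_{n}(\by(1))$. Two structural facts about $\bH$ will be used throughout and are worth isolating at the outset: because the design is centered, every row and column of $\bH$ sums to zero, $\sum_i H_{i,j}=0$, so that $\sum_{i\ne j}H_{i,j}=-H_{j,j}$; and because $\bH$ is an orthogonal projection it is idempotent, $\bH^2=\bH$, giving $\sum_j H_{i,j}^2=H_{i,i}$ and $\sum_i H_{i,i}=\trace(\bH)=p$. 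The first identity converts off-diagonal sums into the diagonal ``own-observation'' terms (e.g. $\sum_{i\ne j}H_{i,j}y_j(1)=-\sum_j H_{j,j}y_j(1)=-\trace(\hat{\bSigma}_{y}\hat{\bSigma}^{-})$), and the second packages the quadratic pieces into $\trace(\hat{\bSigma}_{y}\hat{\bSigma}^{-})$ and $\trace((\hat{\bSigma}_{y}\hat{\bSigma}^{-})^2)=\sum_{i,j}H_{i,j}^2 y_i(1)y_j(1)$.

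For $\cov^{\sff}(\hat{\tau}_{\unadj},\hat{\IIFF}_{\unadj,2,2})$ I would expand the product into a sum over a free index $m$ (from $\hat{\tau}_{\unadj}$) and an ordered pair $(i,j)$ with $i\ne j$ (from $\hat{\IIFF}_{\unadj,2,2}$), and compute $\bbE^{\sff}[t_m(t_i-\pi_1)t_j]$ by splitting into the three coincidence patterns $m=i$, $m=j$, and $m\notin\{i,j\}$, using the third-order moments of Lemma~\ref{lem:CRE}. Writing each CRE moment as its Bernoulli leading value (a power of $\pi_1$) plus an $O(1/n)$ correction, the leading parts---after applying $\sum_i H_{i,j}=0$ and subtracting $\bbE^{\sff}\hat{\tau}_{\unadj}\cdot\bbE^{\sff}\hat{\IIFF}_{\unadj,2,2}$---assemble into the two stated main terms $\bar{\tau}\,\frac{\pi_0}{\pi_1}\frac{1}{n^2}\sum_i H_{i,i}y_i(1)$ and $\frac{\pi_0}{\pi_1}\frac{1}{n^2}\sum_{i\ne j}H_{i,j}y_i(1)y_j(1)$, while the correction parts are collected into $\mathsf{Rem}_{2}$.

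The crux is $\var^{\sff}(\hat{\IIFF}_{\unadj,2,2})$, whose square is a double sum over pairs $(i,j)$, $(k,l)$ with $i\ne j$, $k\ne l$, so I must evaluate the fourth-order moment $\bbE^{\sff}[(t_i-\pi_1)t_j(t_k-\pi_1)t_l]$ for every coincidence pattern of the four indices. After expanding $(t_i-\pi_1)(t_k-\pi_1)$ and invoking the fourth-order CRE moments, the pattern $\{i=l,\,j=k\}$ produces $H_{i,j}H_{j,i}=H_{i,j}^2$ and hence the $\trace((\hat{\bSigma}_{y}\hat{\bSigma}^{-})^2)$ term; the fully and partially diagonal patterns produce the $\sum_i H_{i,i}(\tfrac{1}{\pi_1}-\tfrac{\pi_0+1}{\pi_1}H_{i,i})y_i(1)^2$ term; and the all-distinct pattern, whose leading weight carries the $\cov^{\sff}(t_i,t_k)\propto 1/(n-1)$ factor, collapses through $\sum_i H_{i,j}=0$ into $(\sum_j H_{j,j}y_j(1))^2$, reconstructing the $-\frac{\pi_0}{\pi_1}\frac{1}{n^3}\trace^2(\hat{\bSigma}_{y}\hat{\bSigma}^{-})$ term together with $\frac{\pi_0}{\pi_1}\frac{1}{n^2}\sum_{i\ne j}H_{i,j}(1-2H_{j,j})y_i(1)y_j(1)$; the mean square $(\bbE^{\sff}\hat{\IIFF}_{\unadj,2,2})^2$ is then subtracted at the end and, being of order $p^2/n^4$, lands in the remainder.

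Finally I would bound $\mathsf{Rem}_{1}$ and $\mathsf{Rem}_{2}$, which collect exactly the products of the $O(1/n)$ moment corrections against the multi-index sums of $\bH$ and $\by(1)$. The naive termwise bound $\Vert\bH\Vert_{\infty}\le Bp/n$ alone is too lossy for the all-distinct quadruple sum; the point is that $\sum_i H_{i,j}=0$ collapses such sums to products of only a few entries of $\bH$, after which the idempotency identities $\sum_{i,j}H_{i,j}^2=\trace(\bH)=p$ and $\sum_i H_{i,i}=p$, combined with $\Vert y(1)\Vert_{\infty}\le B$ from Assumption~\ref{as:regularity conditions}, deliver the orders. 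I expect the main obstacle to be precisely this fourth-order bookkeeping: enumerating the coincidence patterns without omission, keeping the leading traces separate from the corrections, and verifying through the zero-row-sum and idempotency identities that every term deposited in $\mathsf{Rem}_{1},\mathsf{Rem}_{2}$ is genuinely $o(1/n)$ rather than merely $O(1/n)$.
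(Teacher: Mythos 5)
Your plan follows the paper's proof essentially verbatim: write each quantity as a low-degree polynomial in $\bt$, enumerate the coincidence patterns of the index tuples, evaluate the resulting CRE moments of products of treatment indicators (the paper's Lemma~\ref{lem:var intermediate}), and collapse the $\bH$-sums via idempotency and the zero row-sum identity exactly as in the paper's $V_{1}$--$V_{7}$ and $U_{1}$--$U_{3}$ bookkeeping. The only slip is attributional: the main term $\frac{\pi_{0}}{\pi_{1}}\frac{1}{n^{2}}\sum_{i\neq j}H_{i,j}(1-2H_{j,j})y_{i}(1)y_{j}(1)$ arises from the three-distinct-index pattern in which the two $(t/\pi_{1}-1)$ factors share their index (the paper's $V_{3}$), not from the all-distinct quadruple, though this does not affect the validity of the method.
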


\begin{proof}
The following derivation follows immediately from Lemma \ref{lem:CRE}.
\begin{align*}
\var^{\sff} (\hat{\tau}_{\unadj}) & = \var^{\sff} \left( \frac{1}{n} \sum_{i = 1}^{n} \frac{t_{i}}{\pi_{1}} y_{i} (1) \right) \\
& = \frac{1}{n^{2}} \sum_{i = 1}^{n} \frac{y_{i} (1)^{2}}{\pi_{1}^{2}} \var (t_{i}) + \frac{1}{n^{2}} \sum_{1 \leq i \neq j \leq n} \frac{y_{i} (1) y_{j} (1)}{\pi_{1}^{2}} \cov (t_{i}, t_{j}) \\
& = \frac{1 - \pi_{1}}{\pi_{1}} \frac{1}{n^{2}} \sum_{i = 1}^{n} y_{i} (1)^{2} - \frac{1 - \pi_{1}}{\pi_{1}} \frac{1}{n^{2} (n - 1)} \sum_{1 \leq i \neq j \leq n} y_{i} (1) y_{j} (1) \\
& = \frac{\pi_{0}}{\pi_{1}} \frac{1}{n} \left\{ \frac{1}{n} \sum_{i = 1}^{n} y_{i} (1)^{2} - \frac{1}{n (n - 1)} \sum_{1 \leq i \neq j \leq n} y_{i} (1) y_{j} (1) \right\} \\
& = \frac{\pi_{0}}{\pi_{1}} \frac{1}{n} V_{n} (y (1)).
\end{align*}

Next,
\begin{align*}
& \ \var^{\sff} (\widehat{\IIFF}_{\unadj, 2, 2}) \equiv \var^{\sff} \left( \frac{1}{n} \sum_{1 \leq i \neq j \leq n} \left( \frac{t_{i}}{\pi_{1}} - 1 \right) H_{i, j} \frac{t_{j} y_{j} (1)}{\pi_{1}} \right) \\
= & \ \frac{1}{n^{2}} \sum_{1 \leq i \neq j \leq n} \sum_{1 \leq k \neq l \leq n} \cov^{\sff} \left\{ \left( \frac{t_{i}}{\pi_{1}} - 1 \right) H_{i, j} \frac{t_{j} y_{j} (1)}{\pi_{1}}, \left( \frac{t_{k}}{\pi_{1}} - 1 \right) H_{k, l} \frac{t_{l} y_{l} (1)}{\pi_{1}} \right\} \\
= & \ \frac{1}{n^{2}} \sum_{1 \leq i \neq j \leq n} \cov^{\sff} \left\{ \left( \frac{t_{i}}{\pi_{1}} - 1 \right) H_{i, j} \frac{t_{j} y_{j} (1)}{\pi_{1}}, \left( \frac{t_{i}}{\pi_{1}} - 1 \right) H_{i, j} \frac{t_{j} y_{j} (1)}{\pi_{1}} \right\} \\
& + \frac{1}{n^{2}} \sum_{1 \leq i \neq j \leq n} \cov^{\sff} \left\{ \left( \frac{t_{i}}{\pi_{1}} - 1 \right) H_{i, j} \frac{t_{j} y_{j} (1)}{\pi_{1}}, \frac{t_{i} y_{i} (1)}{\pi_{1}} H_{i, j} \left( \frac{t_{j}}{\pi_{1}} - 1 \right) \right\} \\
& + \frac{1}{n^{2}} \sum_{1 \leq i \neq j \neq k \leq n} \cov^{\sff} \left\{ \left( \frac{t_{i}}{\pi_{1}} - 1 \right) H_{i, j} \frac{t_{j} y_{j} (1)}{\pi_{1}}, \left( \frac{t_{i}}{\pi_{1}} - 1 \right) H_{i, k} \frac{t_{k} y_{k} (1)}{\pi_{1}} + \left( \frac{t_{j}}{\pi_{1}} - 1 \right) H_{j, k} \frac{t_{k} y_{k} (1)}{\pi_{1}} \right\} \\
& + \frac{1}{n^{2}} \sum_{1 \leq i \neq j \neq k \leq n} \cov^{\sff} \left\{ \left( \frac{t_{i}}{\pi_{1}} - 1 \right) H_{i, j} \frac{t_{j} y_{j} (1)}{\pi_{1}}, \frac{t_{i} y_{i} (1)}{\pi_{1}} H_{i, k} \left( \frac{t_{k}}{\pi_{1}} - 1 \right) + \frac{t_{j} y_{j} (1)}{\pi_{1}} H_{j, k} \left( \frac{t_{k}}{\pi_{1}} - 1 \right)  \right\} \\
& + \frac{1}{n^{2}} \sum_{1 \leq i \neq j \neq k \neq l \leq n} \cov^{\sff} \left\{ \left( \frac{t_{i}}{\pi_{1}} - 1 \right) H_{i, j} \frac{t_{j} y_{j} (1)}{\pi_{1}}, \left( \frac{t_{k}}{\pi_{1}} - 1 \right) H_{k, l} \frac{t_{l} y_{l} (1)}{\pi_{1}} \right\} \\
= & \ \frac{1}{\pi_{1}^{2}} \frac{1}{n^{2}} \sum_{1 \leq i \neq j \leq n} H_{i, j}^{2} \left\{ y_{j} (1)^{2} \var \left( \left( \frac{t_{i}}{\pi_{1}} - 1 \right) t_{j} \right) + y_{i} (1) y_{j} (1) \cov \left( \left( \frac{t_{i}}{\pi_{1}} - 1 \right) t_{j}, t_{i} \left( \frac{t_{j}}{\pi_{1}} - 1 \right) \right) \right\} \\
& + \frac{1}{\pi_{1}^{2}} \frac{1}{n^{2}} \sum_{1 \leq i \neq j \neq k \leq n} H_{i, j} H_{i, k} \left\{  y_{j} (1)  y_{k} (1) \cov \left( \left( \frac{t_{i}}{\pi_{1}} - 1 \right) t_{j}, \left( \frac{t_{i}}{\pi_{1}} - 1 \right) t_{k} \right)  \right\} \\
& + \frac{1}{\pi_{1}^{2}} \frac{1}{n^{2}} \sum_{1 \leq i \neq j \neq k \leq n} H_{i, j} H_{j, k}  \left\{  y_{j} (1) y_{k} (1) \cov \left( \left( \frac{t_{i}}{\pi_{1}} - 1 \right) t_{j}, \left( \frac{t_{j}}{\pi_{1}} - 1 \right) t_{k} \right)  \right\} \\ 
& + \frac{1}{\pi_{1}^{2}} \frac{1}{n^{2}} \sum_{1 \leq i \neq j \neq k \leq n} H_{i, j} H_{i, k} \left\{ y_{i} (1) y_{j} (1) \cov \left( \left( \frac{t_{i}}{\pi_{1}} - 1 \right) t_{j}, t_{i} \left( \frac{t_{k}}{\pi_{1}} - 1 \right) \right) \right\} \\
& + \frac{1}{\pi_{1}^{2}} \frac{1}{n^{2}} \sum_{1 \leq i \neq j \neq k \leq n} H_{i, j} H_{j, k} \left\{ y_{j} (1)^{2}   \cov \left( \left( \frac{t_{i}}{\pi_{1}} - 1 \right) t_{j}, t_{j} \left( \frac{t_{k}}{\pi_{1}} - 1 \right) \right) \right\} \\
& + \frac{1}{\pi_{1}^{2}} \frac{1}{n^{2}} \sum_{1 \leq i \neq j \neq k \neq l \leq n} H_{i, j} H_{k, l} y_{j} (1) y_{l} (1) \cov \left( \left( \frac{t_{i}}{\pi_{1}} - 1 \right) t_{j}, \left( \frac{t_{k}}{\pi_{1}} - 1 \right) t_{l} \right) \\
\eqqcolon & \ V_{1} \var \left( \left( \frac{t_{1}}{\pi_{1}} - 1 \right) t_{2} \right) + V_{2} \cov \left( \left( \frac{t_{1}}{\pi_{1}} - 1 \right) t_{2}, \left( \frac{t_{2}}{\pi_{1}} - 1 \right) t_{1} \right) + V_{3} \cov \left( \left( \frac{t_{1}}{\pi_{1}} - 1 \right) t_{2}, \left( \frac{t_{1}}{\pi_{1}} - 1 \right) t_{3} \right) \\
& + V_{4} \cov \left( \left( \frac{t_{1}}{\pi_{1}} - 1 \right) t_{2}, \left( \frac{t_{2}}{\pi_{1}} - 1 \right) t_{3} \right) + V_{5} \cov \left( \left( \frac{t_{1}}{\pi_{1}} - 1 \right) t_{2}, \left( \frac{t_{3}}{\pi_{1}} - 1 \right) t_{1} \right) \\
& + V_{6} \cov \left( \left( \frac{t_{1}}{\pi_{1}} - 1 \right) t_{2}, \left( \frac{t_{3}}{\pi_{1}} - 1 \right) t_{2} \right) + V_{7} \cov \left( \left( \frac{t_{1}}{\pi_{1}} - 1 \right) t_{2}, \left( \frac{t_{3}}{\pi_{1}} - 1 \right) t_{4} \right).
\end{align*}

We next simplify $V_{1}$ -- $V_{7}$ step by step.
\begin{align*}
V_{1} & = \frac{1}{\pi_{1}^{2}} \frac{1}{n^{2}} \sum_{i = 1}^{n} \sum_{j = 1}^{n} H_{j, i} H_{i, j} y_{j} (1)^{2} - \frac{1}{\pi_{1}^{2}} \frac{1}{n^{2}} \sum_{i = 1}^{n} H_{i, i}^{2} y_{i} (1)^{2} \\
& = \frac{1}{\pi_{1}^{2}} \frac{1}{n^{2}} \sum_{j = 1}^{n} H_{j, j} y_{j} (1)^{2} - \frac{1}{\pi_{1}^{2}} \frac{1}{n^{2}} \sum_{i = 1}^{n} H_{i, i}^{2} y_{i} (1)^{2} \\
& = \frac{1}{\pi_{1}^{2}} \frac{1}{n^{2}} \sum_{i = 1}^{n} H_{i, i} (1 - H_{i, i}) y_{i} (1)^{2}.
\end{align*}

Let $\hat{\bSigma}_{y} \coloneqq \sum_{i = 1}^{n} (\bx_{i} - \bar{\bx}) (\bx_{i} - \bar{\bx})^{\top} y_{i} (1)$ and $\hat{\bSigma} \coloneqq \sum_{i = 1}^{n} (\bx_{i} - \bar{\bx}) (\bx_{i} - \bar{\bx})^{\top}$. Then
\begin{align*}
V_{2} = & \ \frac{1}{\pi_{1}^{2}} \frac{1}{n} \sum_{i = 1}^{n} y_{i} (1) (\bx_{i} - \bar{\bx})^{\top} \hat{\bSigma}^{-} \left\{ \begin{array}{c}
\dfrac{1}{n} \sum\limits_{j = 1}^{n}(\bx_{j} - \bar{\bx}) (\bx_{j} - \bar{\bx})^{\top} y_{j} (1) \\
- \, \dfrac{1}{n} (\bx_{i} - \bar{\bx}) (\bx_{i} - \bar{\bx})^{\top} y_{i} (1)
\end{array} \right\} \hat{\bSigma}^{-} (\bx_{i} - \bar{\bx}) \\
= & \ \frac{1}{\pi_{1}^{2}} \frac{1}{n^{2}} \trace \left( \hat{\bSigma}_{y} \hat{\bSigma}^{-} \hat{\bSigma}_{y} \hat{\bSigma}^{-} \right) - \frac{1}{\pi_{1}^{2}} \frac{1}{n^{2}} \sum_{i = 1}^{n} H_{i, i}^{2} y_{i} (1)^{2}.
\end{align*}

\begin{align*}
V_{3} & = \frac{1}{\pi_{1}^{2}} \frac{1}{n^{2}} \sum_{1 \leq i \neq j \neq k \leq n} H_{i, j} H_{i, k} y_{j} (1) y_{k} (1) \\
& = \frac{1}{\pi_{1}^{2}} \frac{1}{n^{2}} \sum_{1 \leq j \neq k \leq n} y_{j} (1) \left\{ \sum_{i = 1}^{n} H_{j, i} H_{i, k} - H_{j, j} H_{j, k} - H_{j, k} H_{k, k} \right\} y_{k} (1) \\
& = \frac{1}{\pi_{1}^{2}} \frac{1}{n^{2}} \sum_{1 \leq i \neq j \leq n} H_{i, j} y_{i} (1) y_{j} (1) - \frac{1}{\pi_{1}^{2}} \frac{1}{n^{2}} \sum_{1 \leq i \neq j \leq n} H_{i, i} H_{i, j} y_{i} (1) y_{j} (1) - \frac{1}{\pi_{1}^{2}} \frac{1}{n^{2}} \sum_{1 \leq i \neq j \leq n} H_{i, j} H_{j, j} y_{i} (1) y_{j} (1) \\
& = \frac{1}{\pi_{1}^{2}} \frac{1}{n^{2}} \sum_{1 \leq i \neq j \leq n} H_{i, j} y_{i} (1) y_{j} (1) - \frac{1}{\pi_{1}^{2}} \frac{2}{n^{2}} \sum_{1 \leq i \neq j \leq n} H_{i, j} H_{j, j} y_{i} (1) y_{j} (1) \\
& = \frac{1}{\pi_{1}^{2}} \frac{1}{n^{2}} \sum_{1 \leq i \neq j \leq n} H_{i, j} (1 - 2 H_{j, j}) y_{i} (1) y_{j} (1).
\end{align*}

\begin{align*}
V_{4} = & \ \frac{1}{\pi_{1}^{2}} \frac{1}{n^{2}} \sum_{1 \leq i \neq j \neq k \leq n} H_{i, j} H_{j, k} y_{j} (1) y_{k} (1) \\
= & \ \frac{1}{\pi_{1}^{2}} \frac{1}{n^{2}} \sum_{1 \leq j \neq k \leq n} \sum_{i \neq \{j, k\}} H_{i, j} H_{j, k} y_{j} (1) y_{k} (1) \\
= & \ \frac{1}{\pi_{1}^{2}} \frac{1}{n^{2}} \sum_{1 \leq j \neq k \leq n} \left( \sum_{i = 1}^{n} H_{i, j} H_{j, k} - H_{j, j} H_{j, k} - H_{j, k}^{2} \right) y_{j} (1) y_{k} (1) \\
= & - \frac{1}{\pi_{1}^{2}} \frac{1}{n^{2}} \sum_{1 \leq j \neq k \leq n} \left( H_{j, j} H_{j, k} + H_{j, k}^{2} \right) y_{j} (1) y_{k} (1) \\
= & - \frac{1}{\pi_{1}^{2}} \frac{1}{n^{2}} \sum_{1 \leq i \neq j \leq n} H_{i, i} H_{i, j} y_{i} (1) y_{j} (1) - \frac{1}{\pi_{1}^{2}} \frac{1}{n^{2}} \sum_{i = 1}^{n} \sum_{j = 1}^{n} y_{i} (1) H_{i, j} y_{j} (1) H_{j, i} + \frac{1}{\pi_{1}^{2}} \frac{1}{n^{2}} \sum_{i = 1}^{n} y_{i} (1)^{2} H_{i, i}^{2} \\
= & - \frac{1}{\pi_{1}^{2}} \frac{1}{n^{2}} \sum_{1 \leq i \neq j \leq n} H_{i, i} H_{i, j} y_{i} (1) y_{j} (1) - \frac{1}{\pi_{1}^{2}} \frac{1}{n^{2}} \sum_{i = 1}^{n} \sum_{j = 1}^{n} H_{i, j}^{2} y_{i} (1) y_{j} (1) + \frac{1}{\pi_{1}^{2}} \frac{1}{n^{2}} \sum_{i = 1}^{n} y_{i} (1)^{2} H_{i, i}^{2} \\
= & - \frac{1}{\pi_{1}^{2}} \frac{1}{n^{2}} \sum_{1 \leq i \neq j \leq n} H_{i, i} H_{i, j} y_{i} (1) y_{j} (1) - \frac{1}{\pi_{1}^{2}} \frac{1}{n^{2}} \trace \left( \hat{\bSigma}_{y} \hat{\bSigma}^{-} \hat{\bSigma}_{y} \hat{\bSigma}^{-} \right) + \frac{1}{\pi_{1}^{2}} \frac{1}{n^{2}} \sum_{i = 1}^{n} y_{i} (1)^{2} H_{i, i}^{2}.
\end{align*}

By symmetry, $V_{5} \equiv V_{4}$, so
\begin{align*}
V_{5} = & \ \frac{1}{\pi_{1}^{2}} \frac{1}{n^{2}} \sum_{1 \leq i \neq j \neq k \leq n} H_{i, j} H_{i, k} y_{i} (1) y_{j} (1) \\
= & - \frac{1}{\pi_{1}^{2}} \frac{1}{n^{2}} \sum_{1 \leq i \neq j \leq n} H_{i, i} H_{i, j} y_{i} (1) y_{j} (1) - \frac{1}{\pi_{1}^{2}} \frac{1}{n^{2}} \trace \left( \hat{\bSigma}_{y} \hat{\bSigma}^{-} \hat{\bSigma}_{y} \hat{\bSigma}^{-} \right) + \frac{1}{\pi_{1}^{2}} \frac{1}{n^{2}} \sum_{i = 1}^{n} y_{i} (1)^{2} H_{i, i}^{2}.
\end{align*}

\begin{align*}
V_{6} = & \ \frac{1}{\pi_{1}^{2}} \frac{1}{n^{2}} \sum_{1 \leq i \neq j \neq k \leq n} H_{i, j} H_{j, k} y_{j} (1)^{2} = \frac{1}{\pi_{1}^{2}} \frac{1}{n^{2}} \sum_{1 \leq i \neq j \neq k \leq n} H_{i, j} H_{k, i} y_{i} (1)^{2} \\
= & \ \frac{1}{\pi_{1}^{2}} \frac{1}{n^{2}} \sum_{i = 1}^{n} y_{i} (1)^{2} \sum_{j \neq k \neq i} H_{i, j} H_{k, i} \\
= & \ \frac{1}{\pi_{1}^{2}} \frac{1}{n^{2}} \sum_{i = 1}^{n} y_{i} (1)^{2} \left( \sum_{1 \leq j \neq k \leq n} H_{i, j} H_{k, i} - 2 \sum_{k \neq i} H_{i, i} H_{k, i} \right) \\
= & \ \frac{1}{\pi_{1}^{2}} \frac{1}{n^{2}} \sum_{i = 1}^{n} y_{i} (1)^{2} \left( \sum_{j = 1}^{n} \sum_{k = 1}^{n} H_{i, j} H_{k, i} - \sum_{j = 1}^{n} H_{i, j}^{2} - 2 \sum_{k = 1}^{n} H_{i, i} H_{k, i} + 2 H_{i, i}^{2} \right) \\
= & \ \frac{1}{\pi_{1}^{2}} \frac{1}{n^{2}} \sum_{i = 1}^{n} y_{i} (1)^{2} \left( 2 H_{i, i}^{2} - \sum_{j = 1}^{n} H_{i, j}^{2} \right) \\
= & \ \frac{1}{\pi_{1}^{2}} \frac{1}{n^{2}} \sum_{i = 1}^{n} y_{i} (1)^{2} \left( 2 H_{i, i}^{2} - H_{i, i} \right) \\
= & \ \frac{1}{\pi_{1}^{2}} \frac{1}{n^{2}} \sum_{i = 1}^{n} H_{i, i} \left( 2 H_{i, i} - 1 \right) y_{i} (1)^{2}.
\end{align*}

\begin{align*}
V_{7} = & \ \frac{1}{\pi_{1}^{2}} \frac{1}{n^{2}} \sum_{1 \leq i \neq j \neq k \neq l \leq n} H_{i, j} H_{k, l} y_{i} (1) y_{l} (1) \\
= & \ \frac{1}{\pi_{1}^{2}} \frac{1}{n^{2}} \sum_{1 \leq i \neq l \leq n} y_{i} (1) y_{l} (1) \sum_{j \neq k \neq \{i, l\}} H_{i, j} H_{k, l} \\
= & \ \frac{1}{\pi_{1}^{2}} \frac{1}{n^{2}} \sum_{1 \leq i \neq l \leq n} y_{i} (1) y_{l} (1) \left( \begin{array}{c}
\sum\limits_{1 \leq j \neq k \leq n} H_{i, j} H_{k, l} - \sum\limits_{k \not\in \{i, l\}} H_{i, i} H_{k, l} - \sum\limits_{k \not\in \{i, l\}} H_{i, k} H_{l, l} \\
- \, \sum\limits_{k \not\in \{i, l\}} H_{i, k} H_{i, l} - \sum\limits_{k \not\in \{i, l\}} H_{i, l} H_{k, l} - H_{i, i} H_{l, l} - H_{i, l}^{2}
\end{array} \right) \\
= & - \frac{1}{\pi_{1}^{2}} \frac{1}{n^{2}} \sum_{1 \leq i \neq j \leq n} y_{i} (1) y_{j} (1) (H_{i, i} H_{j, j} + H_{i, j}^{2}) \\
& - \frac{1}{\pi_{1}^{2}} \frac{1}{n^{2}} \sum_{1 \leq i \neq j \leq n} y_{i} (1) y_{j} (1) \left( H_{i, i} \sum_{k = 1}^{n} H_{k, j} - H_{i, i} H_{i, j} - H_{i, i} H_{j, j} \right) \\
& - \frac{1}{\pi_{1}^{2}} \frac{1}{n^{2}} \sum_{1 \leq i \neq j \leq n} y_{i} (1) y_{j} (1) \left( \sum_{k = 1}^{n} H_{i, k} H_{j, j} - H_{i, i} H_{j, j} - H_{i, j} H_{j, j} \right) \\
& - \frac{1}{\pi_{1}^{2}} \frac{1}{n^{2}} \sum_{1 \leq i \neq j \leq n} y_{i} (1) y_{j} (1) \left( \sum_{k = 1}^{n} H_{i, k} H_{i, j} - H_{i, i} H_{i, j} - H_{i, j} H_{i, j} \right) \\
& - \frac{1}{\pi_{1}^{2}} \frac{1}{n^{2}} \sum_{1 \leq i \neq j \leq n} y_{i} (1) y_{j} (1) \left( \sum_{k = 1}^{n} H_{i, j} H_{k, j} - H_{i, j} H_{i, j} - H_{i, j} H_{j, j} \right) \\
& + \frac{1}{\pi_{1}^{2}} \frac{1}{n^{2}} \sum_{1 \leq i \neq j \leq n} y_{i} (1) y_{j} (1) \left\{ \sum_{k = 1}^{n} H_{i, k} \sum_{l = 1}^{n} H_{l, j} - \sum_{k = 1}^{n} H_{i, k} H_{k, j} \right\} \\
= & - \frac{1}{\pi_{1}^{2}} \frac{1}{n^{2}} \sum_{1 \leq i \neq j \leq n} y_{i} (1) y_{j} (1) \left( \begin{array}{c}
H_{i, i} H_{j, j} + H_{i, j}^{2} - H_{i, i} H_{i, j} - H_{i, i} H_{j, j} - H_{i, i} H_{j, j} - H_{i, j} H_{j, j} \\
- \, H_{i, i} H_{i, j} - H_{i, j} H_{i, j} - H_{i, j}^{2} - H_{i, j}^{2} + H_{i, j}
\end{array} \right) \\
= & \ \frac{1}{\pi_{1}^{2}} \frac{1}{n^{2}} \sum_{1 \leq i \neq j \leq n} y_{i} (1) y_{j} (1) \left( 4 H_{i, i} H_{i, j} + H_{i, i} H_{j, j} + H_{i, j}^{2} - H_{i, j} \right) \\
= & \ \frac{1}{\pi_{1}^{2}} \frac{1}{n^{2}} \sum_{1 \leq i \neq j \leq n} y_{i} (1) y_{j} (1) H_{i, j} \left( 4 H_{i, i} - 1 \right) + \frac{1}{\pi_{1}^{2}} \frac{1}{n^{2}} \sum_{i = 1}^{n} y_{i} (1) H_{i, i} \sum_{j = 1}^{n} y_{j} (1) H_{j, j} \\
& - \frac{1}{\pi_{1}^{2}} \frac{1}{n^{2}} \sum_{i = 1}^{n} y_{i} (1)^{2} H_{i, i}^{2} + \frac{1}{\pi_{1}^{2}} \frac{1}{n^{2}} \sum_{i = 1}^{n} \sum_{j = 1}^{n} y_{i} (1) y_{j} (1) H_{i, j}^{2} - \frac{1}{\pi_{1}^{2}} \frac{1}{n^{2}} \sum_{i = 1}^{n} y_{i} (1)^{2} H_{i, i}^{2} \\
= & \ \frac{1}{\pi_{1}^{2}} \frac{1}{n^{2}} \sum_{1 \leq i \neq j \leq n} y_{i} (1) y_{j} (1) H_{i, j} \left( 4 H_{i, i} - 1 \right) + \frac{1}{\pi_{1}^{2}} \frac{1}{n^{2}} \left\{ \trace^{2} \left( \hat{\bSigma}_{y} \hat{\bSigma}^{-} \right) + \trace \left( \hat{\bSigma}_{y} \hat{\bSigma}^{-} \hat{\bSigma}_{y} \hat{\bSigma}^{-} \right) \right\} \\
& - \frac{2}{\pi_{1}^{2}} \frac{1}{n^{2}} \sum_{i = 1}^{n} H_{i, i}^{2} y_{i} (1)^{2}.
\end{align*}
Combining $V_{1}$ -- $V_{7}$, we have
\begin{align*}
& \ \var^{\sff} (\widehat{\IIFF}_{\unadj, 2, 2}) \\
= & \ \frac{\pi_{0}}{\pi_{1}^{2}} \frac{1}{n^{2}} \sum_{i = 1}^{n} H_{i, i} (1 - H_{i, i}) y_{i} (1)^{2} \left\{ 1 + O \left( \frac{1}{n} \right) \right\} + \left( \frac{\pi_{0}}{\pi_{1}} \right)^{2} \frac{1}{n^{2}} \trace \left( \hat{\bSigma}_{y} \hat{\bSigma}^{-} \hat{\bSigma}_{y} \hat{\bSigma}^{-} \right) \left\{ 1 + O \left( \frac{1}{n} \right) \right\} \\
& - \left( \frac{\pi_{0}}{\pi_{1}} \right)^{2} \frac{1}{n^{2}} \sum_{i = 1}^{n} H_{i, i}^{2} y_{i} (1)^{2} \left\{ 1 + O \left( \frac{1}{n} \right) \right\} + \frac{\pi_{0}}{\pi_{1}} \frac{1}{n^{2}} \sum_{1 \leq i \neq j \leq n} H_{i, j} (1 - 2 H_{j, j}) y_{i} (1) y_{j} (1) \left\{ 1 + O \left( \frac{1}{n} \right) \right\} \\
& + 4 \left( \frac{\pi_{0}}{\pi_{1}} \right)^{2} \frac{1}{n^{2} (n - 2)} \sum_{1 \leq i \neq j \leq n} H_{i, i} H_{i, j} y_{i} (1) y_{j} (1) \left\{ 1 + O \left( \frac{1}{n} \right) \right\} \\
& + 4 \left( \frac{\pi_{0}}{\pi_{1}} \right)^{2} \frac{1}{n^{2} (n - 2)} \trace \left( \hat{\bSigma}_{y} \hat{\bSigma}^{-} \hat{\bSigma}_{y} \hat{\bSigma}^{-} \right) \left\{ 1 + O \left( \frac{1}{n} \right) \right\} \\
& - 4 \left( \frac{\pi_{0}}{\pi_{1}} \right)^{2} \frac{1}{n^{2} (n - 2)} \sum_{i = 1}^{n} H_{i, i}^{2} y_{i} (1)^{2} \left\{ 1 + O \left( \frac{1}{n} \right) \right\} \\
& - \frac{\pi_{0}}{\pi_{1}^{2}} \frac{1}{n (n - 1) (n - 2)} \sum_{i = 1}^{n} H_{i, i} (2 H_{i, i} - 1) y_{i} (1)^{2} \left\{ 1 + O \left( \frac{1}{n} \right) \right\} \\
& - \frac{\pi_{0}}{\pi_{1}} \frac{1}{n^{2} (n - 2)} \sum_{1 \leq i \neq j \leq n} H_{i, j} \left( 4 H_{i, i} - 1 \right) y_{i} (1) y_{j} (1) \left\{ 1 + O \left( \frac{1}{n} \right) \right\} \\
& - \frac{\pi_{0}}{\pi_{1}} \frac{1}{n^{2} (n - 2)} \left\{ \trace^{2} \left( \hat{\bSigma}_{y} \hat{\bSigma}^{-} \right) + \trace \left( \hat{\bSigma}_{y} \hat{\bSigma}^{-} \hat{\bSigma}_{y} \hat{\bSigma}^{-} \right) \right\} \left\{ 1 + O \left( \frac{1}{n} \right) \right\} \\
& + 2 \frac{\pi_{0}}{\pi_{1}} \frac{1}{n^{2} (n - 2)} \sum_{i = 1}^{n} H_{i, i}^{2} y_{i} (1)^{2} \left\{ 1 + O \left( \frac{1}{n} \right) \right\} \\
= & \ \frac{\pi_{0}}{\pi_{1}} \frac{1}{n^{2}} \sum_{i = 1}^{n} H_{i, i} \left( \frac{1}{\pi_{1}} - \frac{\pi_{0} + 1}{\pi_{1}} H_{i, i} \right) y_{i} (1)^{2} + \left( \frac{\pi_{0}}{\pi_{1}} \right)^{2} \frac{1}{n^{2}} \trace \left( \left( \hat{\bSigma}_{y} \hat{\bSigma}^{-} \right)^{2} \right) - \frac{\pi_{0}}{\pi_{1}} \frac{1}{n^{3}} \trace^{2} \left( \hat{\bSigma}_{y} \hat{\bSigma}^{-} \right) \\
& + \frac{\pi_{0}}{\pi_{1}} \frac{1}{n^{2}} \sum_{1 \leq i \neq j \leq n} H_{i, j} (1 - 2 H_{j, j}) y_{i} (1) y_{j} (1) + \mathsf{Rem}_{1},
\end{align*}
where it is easy to see that $\mathsf{Rem}_{1} = o (1 / n)$ under Assumption \ref{as:regularity conditions} as $n \rightarrow \infty$.

Once again by Lemma \ref{lem:var intermediate},
\begin{align*}
& \ \cov^{\sff} \left( \hat{\tau}_{\unadj}, \widehat{\IIFF}_{\unadj, 2, 2} \right) \equiv \cov^{\sff} \left( \frac{1}{n} \sum_{k = 1}^{n} \frac{t_{k}}{\pi_{1}} y_{k} (1), \frac{1}{n} \sum_{1 \leq i \neq j \leq n} \left( \frac{t_{i}}{\pi_{1}} - 1 \right) H_{i, j} \frac{t_{j} y_{j} (1)}{\pi_{1}} \right) \\
= & \ \frac{1}{\pi_{1}^{2}} \frac{1}{n^{2}} \sum_{1 \leq i \neq j \leq n} \sum_{k = 1}^{n} \cov^{\sff} \left( \left( \frac{t_{i}}{\pi_{1}} - 1 \right) H_{i, j} t_{j} y_{j} (1), t_{k} y_{k} (1) \right) \\
= & \ \frac{1}{\pi_{1}^{2}} \frac{1}{n^{2}} \sum_{1 \leq i \neq j \neq k \leq n} H_{i, j} y_{j} (1) y_{k} (1) \cov \left( \left( \frac{t_{i}}{\pi_{1}} - 1 \right) t_{j}, t_{k} \right) \\
& + \frac{1}{\pi_{1}^{2}} \frac{1}{n^{2}} \sum_{1 \leq i \neq j \leq n} H_{i, j} y_{i} (1) y_{j} (1) \cov \left( \left( \frac{t_{i}}{\pi_{1}} - 1 \right) t_{j}, t_{i} \right) \\
& + \frac{1}{\pi_{1}^{2}} \frac{1}{n^{2}} \sum_{1 \leq i \neq j \leq n} H_{i, j} y_{j} (1)^{2} \cov \left( \left( \frac{t_{i}}{\pi_{1}} - 1 \right) t_{j}, t_{j} \right) \\
\eqqcolon & \ U_{1} \cov \left( \left( \frac{t_{1}}{\pi_{1}} - 1 \right) t_{2}, t_{3} \right) + U_{2} \cov \left( \left( \frac{t_{1}}{\pi_{1}} - 1 \right) t_{2}, t_{1} \right) + U_{3} \cov \left( \left( \frac{t_{1}}{\pi_{1}} - 1 \right) t_{2}, t_{2} \right) \\
= & \ \bar{\tau} \frac{\pi_{0}}{\pi_{1}} \frac{1}{n^{2}} \sum_{i = 1}^{n} H_{i, i} y_{i} (1) \left\{ 1 + O \left( \frac{1}{n} \right) \right\} + \frac{\pi_{0}}{\pi_{1}} \frac{1}{n^{2}} \sum_{1 \leq i \neq j \leq n} H_{i, j} y_{i} (1) y_{j} (1) \left\{ 1 + O \left( \frac{1}{n} \right) \right\} \\
& + \left( \frac{\pi_{0}}{\pi_{1}} \right) \left(  \frac{\pi_{0}}{\pi_{1}} - 1 \right) \frac{1}{n^{3}} \sum_{i = 1}^{n} H_{i, i} y_{i} (1)^{2} \left\{ 1 + O \left( \frac{1}{n} \right) \right\} \\
= & \ \bar{\tau} \cdot \frac{\pi_{0}}{\pi_{1}} \frac{1}{n^{2}} \trace \left( \hat{\bSigma}_{y} \hat{\bSigma}^{-} \right) + \frac{\pi_{0}}{\pi_{1}} \frac{1}{n^{2}} \sum_{1 \leq i \neq j \leq n} H_{i, j} y_{i} (1) y_{j} (1) + \mathsf{Rem}_{2},
\end{align*}
where it is easy to see that $\mathsf{Rem}_{2} = o (1 / n)$ under Assumption \ref{as:regularity conditions} as $n \rightarrow \infty$.

Here
\begin{align*}
U_{1} = & - \bar{\tau} \cdot \frac{1}{\pi_{1}^{2}} \frac{1}{n} \sum_{i = 1}^{n} H_{i, i} y_{i} (1) + \frac{1}{\pi_{1}^{2}} \frac{1}{n^{2}} \sum_{i = 1}^{n} H_{i, i} y_{i} (1)^{2} - \frac{1}{\pi_{1}^{2}} \frac{1}{n^{2}} \sum_{1 \leq i \neq j \leq n} H_{i, j} y_{i} (1) y_{j} (1),
\end{align*}
\begin{align*}
U_{2} = \frac{1}{\pi_{1}^{2}} \frac{1}{n^{2}} \sum_{1 \leq i \neq j \leq n} H_{i, j} y_{i} (1) y_{j} (1),
\end{align*}
and
\begin{align*}
U_{3} = - \frac{1}{\pi_{1}^{2}} \frac{1}{n^{2}} \sum_{i = 1}^{n} H_{i, i} y_{i} (1)^{2}.
\end{align*}
\end{proof}

\begin{lemma}
\label{lem:var intermediate}
Under CRE, the following assertions hold:
\begin{align*}
\var \left( \left( \frac{t_{i}}{\pi_{1}} - 1 \right) t_{j} \right) = & \ \pi_{1} + \left( \frac{1 - \pi_{1}}{\pi_{1}} - 1 \right) \frac{n \pi_{1} - 1}{n - 1} - \frac{(1 - \pi_{1})^{2}}{(n - 1)^{2}} \\
= & \ 1 - \pi_{1} - \frac{(1 - \pi_{1}) (1 - 2 \pi_{1})}{\pi_{1} (n - 1)} - \frac{(1 - \pi_{1})^{2}}{(n - 1)^{2}}, \\
\cov \left( \left( \frac{t_{i}}{\pi_{1}} - 1 \right) t_{j}, t_{k} \right) = & \ \frac{(n \pi_{1} - 1) (n \pi_{1} - 2)}{(n - 1) (n - 2)} - 2 \pi_{1} \frac{n \pi_{1} - 1}{n - 1} + \pi_{1}^{2} \\
= & - \frac{n \pi_{1} (1 - \pi_{1})}{(n - 1) (n - 2)} + \frac{2 (1 - \pi_{1})^{2}}{(n - 1) (n - 2)}, \\
\cov \left( \left( \frac{t_{i}}{\pi_{1}} - 1 \right) t_{j}, t_{i} \right) = & \ (1 - 2 \pi_{1}) \frac{n \pi_{1} - 1}{n - 1} + \pi_{1}^{2} = \pi_{1} (1 - \pi_{1}) - \frac{(1 - \pi_{1}) (1 - 2 \pi_{1})}{n - 1}, \\
\cov \left( \left( \frac{t_{i}}{\pi_{1}} - 1 \right) t_{j}, t_{j} \right) = & \ (1 - \pi_{1}) \frac{n \pi_{1} - 1}{n - 1} - \pi_{1} (1 - \pi_{1}) = - \frac{(1 - \pi_{1})^{2}}{n - 1}, \\
\cov \left( \left( \frac{t_{i}}{\pi_{1}} - 1 \right) t_{j}, t_{i} \left( \frac{t_{j}}{\pi_{1}} - 1 \right) \right) = & \ \frac{(1 - \pi_{1})^{2}}{\pi_{1}} \frac{n \pi_{1} - 1}{n - 1} - \frac{(1 - \pi_{1})^{2}}{(n - 1)^{2}} \\
= & \ (1 - \pi_{1})^{2} - \frac{(1 - \pi_{1})^{3}}{\pi_{1} (n - 1)} - \frac{(1 - \pi_{1})^{2}}{(n - 1)^{2}}, \\
\cov \left( \left( \frac{t_{i}}{\pi_{1}} - 1 \right) t_{j}, \left( \frac{t_{i}}{\pi_{1}} - 1 \right) t_{k} \right) = & \ \left( \frac{1}{\pi_{1}} - 2 \right) \frac{(n \pi_{1} - 1) (n \pi_{1} - 2)}{(n - 1) (n - 2)} + \pi_{1} \frac{n \pi_{1} - 1}{n - 1} - \frac{(1 - \pi_{1})^{2}}{(n - 1)^{2}} \\
= & \ \pi_{1} (1 - \pi_{1}) - \frac{(1 - \pi_{1})^{2}}{n - 1} - \frac{2 (1 - \pi_{1}) (1 - 2 \pi_{1})}{n - 2} \\
& + \left( \frac{2}{\pi_{1}} - 5 + \frac{1}{n - 1} \right) \frac{(1 - \pi_{1})^{2}}{(n - 1) (n - 2)}, \\
\cov \left( \left( \frac{t_{i}}{\pi_{1}} - 1 \right) t_{j}, t_{i} \left( \frac{t_{k}}{\pi_{1}} - 1 \right) \right) = & - 2 \frac{(1 - \pi_{1})^{2}}{\pi_{1}} \frac{n \pi_{1} - 1}{n - 1} \frac{1}{n - 2} - \frac{(1 - \pi_{1})^{2}}{(n - 1)^{2}} \\
= & - 2 \frac{(1 - \pi_{1})^{2}}{n - 2} + \left\{ \frac{2}{\pi_{1}} - 3 + \frac{1}{n - 1} \right\} \frac{(1 - \pi_{1})^{2}}{(n - 1) (n - 2)}, \\
\cov \left( \left( \frac{t_{i}}{\pi_{1}} - 1 \right) t_{j}, t_{j} \left( \frac{t_{k}}{\pi_{1}} - 1 \right) \right) = & \ \frac{1}{\pi_{1}} \frac{(n \pi_{1} - 1) (n \pi_{1} - 2)}{(n - 1) (n - 2)} - 2 \frac{n \pi_{1} - 1}{n - 1} + \pi_{1} - \frac{(1 - \pi_{1})^{2}}{(n - 1)^{2}} \\
= & - \frac{n (1 - \pi_{1})}{(n - 1) (n - 2)} - \frac{(1 - \pi_{1})^{2}}{(n - 1)^{2}} + \frac{2 (1 - \pi_{1})^{2}}{\pi_{1} (n - 1) (n - 2)} \\
\cov \left( \left( \frac{t_{i}}{\pi_{1}} - 1 \right) t_{j}, \left( \frac{t_{k}}{\pi_{1}} - 1 \right) t_{l} \right) = & \ \frac{1}{\pi_{1}} \frac{(n \pi_{1} - 1) (n \pi_{1} - 2) (n \pi_{1} - 3)}{(n - 1) (n - 2) (n - 3)} - 2 \frac{(n \pi_{1} - 1) (n \pi_{1} - 2)}{(n - 1) (n - 2)} \\
& + \pi_{1} \frac{n \pi_{1} - 1}{n - 1} - \frac{(1 - \pi_{1})^{2}}{(n - 1)^{2}} \\
= & - \frac{\pi_1 (1 - \pi_1)}{n - 2} + \frac{3 (2 - 3 \pi_{1}) (1 - \pi_{1})}{(n - 2) (n - 3)} \\
& - \frac{3 (2 - 3 \pi_{1}) (1 - \pi_{1})^{2}}{(n - 1) (n - 2) (n - 3) \pi_{1}} + \frac{(1 - \pi_1)^2}{(n - 1)^2 (n - 2)}.
\end{align*}
Here different letters $i, j, k, l$ denote distinct indices.
\end{lemma}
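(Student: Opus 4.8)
The plan is to express every quantity as a fixed linear combination of the joint product-moments of the treatment vector and then invoke a single formula for those moments. Under CRE the vector $\bt$ is a uniformly random $0/1$ vector with exactly $n_{1} = n \pi_{1}$ ones, so for any distinct indices $i_{1}, \ldots, i_{m}$,
\begin{align*}
\bbE \Big[ \prod_{s = 1}^{m} t_{i_{s}} \Big] = \frac{n_{1} (n_{1} - 1) \cdots (n_{1} - m + 1)}{n (n - 1) \cdots (n - m + 1)} = \pi_{1} \prod_{s = 1}^{m - 1} \frac{n \pi_{1} - s}{n - s}.
\end{align*}
I will only need this for $m \le 4$, since each entry of the lemma is the covariance of two monomials of degree at most two in the $t$'s. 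Two elementary facts do all the reduction: the idempotency $t_{i}^{2} = t_{i}$, which collapses any repeated factor down to degree one, and the exchangeability of $(t_{1}, \ldots, t_{n})$, which guarantees that each covariance depends only on the pattern of index coincidences -- exactly the bookkeeping encoded by the lemma's use of distinct representatives $t_{1}, t_{2}, t_{3}, t_{4}$.

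Writing $a_{i} \coloneqq t_{i} / \pi_{1} - 1$, so that $\bbE[a_{i}] = 0$ and $\bbE[a_{i} t_{j}] = -\pi_{0} / (n - 1)$ for $i \ne j$, the mechanical step for each entry is to multiply out the two monomials, apply $t_{i}^{2} = t_{i}$, and read off every surviving distinct-index moment from the display above. For instance, the last and most involved entry expands as
\begin{align*}
\cov \Big( a_{i} t_{j}, a_{k} t_{l} \Big) = \frac{1}{\pi_{1}^{2}} \Big( \bbE[t_{i} t_{j} t_{k} t_{l}] - \pi_{1} \bbE[t_{i} t_{j} t_{l}] - \pi_{1} \bbE[t_{j} t_{k} t_{l}] + \pi_{1}^{2} \bbE[t_{j} t_{l}] \Big) - \bbE[a_{i} t_{j}] \, \bbE[a_{k} t_{l}],
\end{align*}
and substituting the moment formula immediately produces the ``raw'' right-hand side recorded in the lemma. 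The lower-order entries are obtained the same way but with fewer terms, the coincidences $t_{i}^{2} = t_{i}$ generating the extra pieces whenever two indices in the two monomials are forced to agree.

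The remaining work is purely algebraic: each raw expression is a rational function of $n$ built from the ratios $\tfrac{n \pi_{1} - 1}{n - 1}$, $\tfrac{(n \pi_{1} - 1)(n \pi_{1} - 2)}{(n - 1)(n - 2)}$, and $\tfrac{(n \pi_{1} - 1)(n \pi_{1} - 2)(n \pi_{1} - 3)}{(n - 1)(n - 2)(n - 3)}$, and I must rearrange it into the displayed closed form that isolates the leading $O(1)$ or $O(1/n)$ term from the smaller corrections. A convenient device is to write $n \pi_{1} - s = (n - s) \pi_{1} - s (1 - \pi_{1})$, so that each ratio becomes $\pi_{1}$ times a product of factors $\big( \pi_{1} - \tfrac{s (1 - \pi_{1})}{n - s} \big)$, after which the expansion in powers of $1/n$ is transparent and the cancellations that yield, e.g., $-\pi_{1} (1 - \pi_{1}) / (n - 2)$ in the four-index case become visible.

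I expect the main obstacle to be bookkeeping rather than any conceptual difficulty. The three- and four-index entries require subtracting, for every pair of colliding indices, the lower-order term it creates, and it is easy to miss a collision or double-count one; the signs introduced by the two factors $-1$ inside $a_{i}$ and $a_{k}$ compound this. A cheap safeguard is the deterministic constraint $\sum_{i} t_{i} \equiv n_{1}$, equivalently $\sum_{i} a_{i} \equiv 0$: for a fixed $i$ one has $\sum_{j \ne i} a_{i} t_{j} = a_{i} (n_{1} - t_{i})$, a function of $t_{i}$ alone whose variance can be computed directly and must match the combination $(n-1)\var(a_{i} t_{j}) + (n-1)(n-2)\cov(a_{i} t_{j}, a_{i} t_{k})$ of the relevant per-pair entries; checking this identity catches most algebra slips before they propagate into the variance formula of Lemma \ref{lem:HOIF-var}.
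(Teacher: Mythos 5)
Your proposal is correct and follows essentially the same route as the paper: both reduce every entry to joint product-moments of distinct treatment indicators via $t_i^2 = t_i$ and exchangeability, and then substitute the hypergeometric falling-factorial formula $\bbE[\prod_{s=1}^m t_{i_s}] = \pi_1 \prod_{s=1}^{m-1} \frac{n\pi_1 - s}{n - s}$ (the paper's Lemma \ref{lem:CRE}) before simplifying algebraically. Your added consistency check via $\sum_i (t_i/\pi_1 - 1) \equiv 0$ is a sensible safeguard but does not change the argument.
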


\begin{proof}
The proof is a consequence of Lemma \ref{lem:CRE}. First,
\begin{align*}
& \ \var \left( \left( \frac{t_{i}}{\pi_{1}} - 1 \right) t_{j} \right) = \bbE \left[ \left( \frac{t_{i}}{\pi_{1}} - 1 \right)^{2} t_{j} \right] - \bbE^{2} \left[ \left( \frac{t_{i}}{\pi_{1}} - 1 \right) t_{j} \right] \\
= & \ \bbE \left[ \left( \frac{t_{i}}{\pi_{1}^{2}} - \frac{2 t_{i}}{\pi_{1}} + 1 \right) t_{j} \right] - \left( \frac{n \pi_{1} - 1}{n - 1} - \pi_{1} \right)^{2} \\
= & \ \left\{ \left( \frac{1}{\pi_{1}} - 2 \right) \frac{n \pi_{1} - 1}{n - 1} + \pi_{1} \right\} - \left( \frac{n \pi_{1} - 1}{n - 1} - \pi_{1} \right)^{2} \\
= & \ \pi_{1} + \left( \frac{1 - \pi_{1}}{\pi_{1}} - 1 \right) \frac{n \pi_{1} - 1}{n - 1} - \frac{(1 - \pi_{1})^{2}}{(n - 1)^{2}}.
\end{align*}

Second,
\begin{align*}
& \ \cov \left( \left( \frac{t_{i}}{\pi_{1}} - 1 \right) t_{j}, t_{k} \right) = \bbE \left[ \left( \frac{t_{i}}{\pi_{1}} - 1 \right) t_{j} t_{k} \right] - \bbE \left[ \left( \frac{t_{i}}{\pi_{1}} - 1 \right) t_{j} \right] \pi_{1} \\
= & \ \bbE \left[ \frac{t_{i} t_{j} t_{k}}{\pi_{1}} - t_{j} t_{k} \right] - \left( \frac{n \pi_{1} - 1}{n - 1} - \pi_{1} \right) \pi_{1} \\
= & \ \frac{(n \pi_{1} - 1) (n \pi_{1} - 2)}{(n - 1) (n - 2)} - \pi_{1} \frac{n \pi_{1} - 1}{n - 1} - \left( \frac{n \pi_{1} - 1}{n - 1} - \pi_{1} \right) \pi_{1} \\
= & \ \frac{(n \pi_{1} - 1) (n \pi_{1} - 2)}{(n - 1) (n - 2)} - 2 \pi_{1} \frac{n \pi_{1} - 1}{n - 1} + \pi_{1}^{2}.
\end{align*}

Third,
\begin{align*}
& \ \cov \left( \left( \frac{t_{i}}{\pi_{1}} - 1 \right) t_{j}, t_{i} \right) = \bbE \left[ \left( \frac{t_{i}}{\pi_{1}} - 1 \right) t_{j} t_{i} \right] - \bbE \left[ \left( \frac{t_{i}}{\pi_{1}} - 1 \right) t_{j} \right] \pi_{1} \\
= & \ \bbE \left[ \frac{t_{i} t_{j}}{\pi_{1}} - t_{i} t_{j} \right] - \left( \frac{n \pi_{1} - 1}{n - 1} - \pi_{1} \right) \pi_{1} \\
= & \ (1 - \pi_{1}) \frac{n \pi_{1} - 1}{n - 1} - \pi_{1} \frac{n \pi_{1} - 1}{n - 1} + \pi_{1}^{2} \\
= & \ (1 - 2 \pi_{1}) \frac{n \pi_{1} - 1}{n - 1} + \pi_{1}^{2}.
\end{align*}

Fourth,
\begin{align*}
& \ \cov \left( \left( \frac{t_{i}}{\pi_{1}} - 1 \right) t_{j}, t_{j} \right) = \bbE \left[ \left( \frac{t_{i}}{\pi_{1}} - 1 \right) t_{j} t_{j} \right] - \bbE \left[ \left( \frac{t_{i}}{\pi_{1}} - 1 \right) t_{j} \right] \pi_{1} \\
= & \ \bbE \left[ \frac{t_{i} t_{j}}{\pi_{1}} - t_{j} \right] - \left( \frac{n \pi_{1} - 1}{n - 1} - \pi_{1} \right) \pi_{1} \\
= & \ \frac{n \pi_{1} - 1}{n - 1} - \pi_{1} - \frac{n \pi_{1} - 1}{n - 1} \pi_{1} + \pi_{1}^{2} \\
= & \ (1 - \pi_{1}) \frac{n \pi_{1} - 1}{n - 1} - \pi_{1} (1 - \pi_{1}).
\end{align*}

We then proceed to the fifth assertion.
\begin{align*}
& \ \cov \left( \left( \frac{t_{i}}{\pi_{1}} - 1 \right) t_{j}, t_{i} \left( \frac{t_{j}}{\pi_{1}} - 1 \right) \right) = \bbE \left[ t_{i} t_{j} \left( \frac{t_{i} t_{j}}{\pi_{1}^{2}} - 2 \frac{t_{i}}{\pi_{1}} + 1 \right) \right] - \bbE^{2} \left[ \left( \frac{t_{i}}{\pi_{1}} - 1 \right) t_{j} \right] \\
= & \ \bbE \left[ \frac{t_{i} t_{j}}{\pi_{1}^{2}} - 2 \frac{t_{i} t_{j}}{\pi_{1}} + t_{i} t_{j} \right] - \left( \frac{n \pi_{1} - 1}{n - 1} - \pi_{1} \right)^{2} \\
= & \ \left( \pi_{1} + \frac{1}{\pi_{1}} - 2 \right) \frac{n \pi_{1} - 1}{n - 1} - \frac{(1 - \pi_{1})^{2}}{(n - 1)^{2}} \\
= & \ \frac{1}{\pi_{1}} \frac{(1 - \pi_{1})^{2} (n \pi_{1} - 1) (n - 1) - \pi_{1} (1 - \pi_{1})^{2}}{(n - 1)^{2}} \\
= & \ \frac{(1 - \pi_{1})^{2}}{\pi_{1}} \frac{(n \pi_{1} - 1) (n - 1) - \pi_{1}}{(n - 1)^{2}} \\
= & \ \frac{(1 - \pi_{1})^{2}}{\pi_{1}} \frac{n \pi_{1} - 1}{n - 1} - \frac{(1 - \pi_{1})^{2}}{(n - 1)^{2}}.
\end{align*}

We next prove the sixth assertion.
\begin{align*}
& \ \cov \left( \left( \frac{t_{i}}{\pi_{1}} - 1 \right) t_{j}, \left( \frac{t_{i}}{\pi_{1}} - 1 \right) t_{k} \right) = \bbE \left[ \left( \frac{t_{i}}{\pi_{1}} - 1 \right)^{2} t_{j} t_{k} \right] - \bbE^{2} \left[ \left( \frac{t_{i}}{\pi_{1}} - 1 \right) t_{j} \right] \\
= & \ \bbE \left[ \frac{t_{i} t_{j} t_{k}}{\pi_{1}^{2}} - 2 \frac{t_{i} t_{j} t_{k}}{\pi_{1}} + t_{j} t_{k} \right] - \frac{(1 - \pi_{1})^{2}}{(n - 1)^{2}} \\
= & \ \left( \frac{1}{\pi_{1}} - 2 \right) \frac{(n \pi_{1} - 1) (n \pi_{1} - 2)}{(n - 1) (n - 2)} + \pi_{1} \frac{n \pi_{1} - 1}{n - 1} - \frac{(1 - \pi_{1})^{2}}{(n - 1)^{2}}.
\end{align*}

The seventh assertion follows similarly.
\begin{align*}
& \ \cov \left( \left( \frac{t_{i}}{\pi_{1}} - 1 \right) t_{j}, t_{i} \left( \frac{t_{k}}{\pi_{1}} - 1 \right) \right) = \bbE \left[ \left( \frac{t_{i}}{\pi_{1}} - 1 \right) \left( \frac{t_{k}}{\pi_{1}} - 1 \right) t_{i} t_{j} \right] - \bbE^{2} \left[ \left( \frac{t_{i}}{\pi_{1}} - 1 \right) t_{j} \right] \\
= & \ \bbE \left[ \frac{t_{i} t_{j} t_{k}}{\pi_{1}^{2}} - \frac{t_{i} t_{j} t_{k}}{\pi_{1}} - \frac{t_{i} t_{j}}{\pi_{1}} + t_{i} t_{j} \right] - \frac{(1 - \pi_{1})^{2}}{(n - 1)^{2}} \\
= & \ \frac{1 - \pi_{1}}{\pi_{1}} \frac{(n \pi_{1} - 1) (n \pi_{1} - 2)}{(n - 1) (n - 2)} - (1 - \pi_{1}) \frac{n \pi_{1} - 1}{n - 1} - \frac{(1 - \pi_{1})^{2}}{(n - 1)^{2}} \\
= & \ (1 - \pi_{1}) \frac{n \pi_{1} - 1}{n - 1} \left( \frac{1}{\pi_{1}} \frac{n \pi_{1} - 2}{n - 2} - 1 \right) - \frac{(1 - \pi_{1})^{2}}{(n - 1)^{2}} \\
= & - 2 \frac{(1 - \pi_{1})^{2}}{\pi_{1}} \frac{n \pi_{1} - 1}{n - 1} \frac{1}{n - 2} - \frac{(1 - \pi_{1})^{2}}{(n - 1)^{2}}.
\end{align*}

We then proceed to the eighth statement:
\begin{align*}
& \ \cov \left( \left( \frac{t_{i}}{\pi_{1}} - 1 \right) t_{j}, t_{j} \left( \frac{t_{k}}{\pi_{1}} - 1 \right) \right) = \bbE \left[ \left( \frac{t_{i}}{\pi_{1}} - 1 \right) \left( \frac{t_{k}}{\pi_{1}} - 1 \right) t_{j} \right] - \bbE^{2} \left[ \left( \frac{t_{i}}{\pi_{1}} - 1 \right) t_{j} \right] \\
= & \ \bbE \left[ \frac{t_{i} t_{j} t_{k}}{\pi_{1}^{2}} - 2 \frac{t_{i} t_{j}}{\pi_{1}} + t_{j} \right] - \frac{(1 - \pi_{1})^{2}}{(n - 1)^{2}} \\
= & \ \frac{1}{\pi_{1}} \frac{(n \pi_{1} - 1) (n \pi_{1} - 2)}{(n - 1) (n - 2)} - 2 \frac{n \pi_{1} - 1}{n - 1} + \pi_{1} - \frac{(1 - \pi_{1})^{2}}{(n - 1)^{2}}.
\end{align*}

Finally, the last assertion can be shown as follows.
\begin{align*}
& \ \cov \left( \left( \frac{t_{i}}{\pi_{1}} - 1 \right) t_{j}, \left( \frac{t_{k}}{\pi_{1}} - 1 \right) t_{l} \right) = \bbE \left[ \left( \frac{t_{i}}{\pi_{1}} - 1 \right) \left( \frac{t_{k}}{\pi_{1}} - 1 \right) t_{j} t_{l} \right] - \bbE^{2} \left[ \left( \frac{t_{i}}{\pi_{1}} - 1 \right) t_{j} \right] \\
= & \ \bbE \left[ \frac{t_{i} t_{j} t_{k} t_{l}}{\pi_{1}^{2}} - 2 \frac{t_{i} t_{j} t_{k}}{\pi_{1}} + t_{j} t_{l} \right] - \frac{(1 - \pi_{1})^{2}}{(n - 1)^{2}} \\
= & \ \frac{1}{\pi_{1}} \frac{(n \pi_{1} - 1) (n \pi_{1} - 2) (n \pi_{1} - 3)}{(n - 1) (n - 2) (n - 3)} - 2 \frac{(n \pi_{1} - 1) (n \pi_{1} - 2)}{(n - 1) (n - 2)} + \pi_{1} \frac{n \pi_{1} - 1}{n - 1} - \frac{(1 - \pi_{1})^{2}}{(n - 1)^{2}}.
\end{align*}
\end{proof}

\subsection{On the Asymptotic Distribution of \texorpdfstring{$\hat{\tau}_{\adj, 2}$}{estimator} under the Randomization-based Framework}
\label{app:clt}

In this section, we discuss the asymptotic normality of $\hat{\tau}_{\adj, 2}$. The asymptotic normality of $\hat{\tau}_{\unadj}$ is straightforward by applying design-based CTL of \citet{hajek1960limiting}. Hence our discussion focuses on the second-order $U$-statistic component $\widehat{\IIFF}_{\unadj, 2, 2}$.

To simplify our argument, we consider the Bernoulli sampling. We first perform Hoeffding decomposition of $\widehat{\IIFF}_{\unadj, 2, 2}$: Let $G_{i, j} \coloneqq H_{i, j} y_{j}$ for $i, j = 1, \cdots, n$, and then
\begin{align*}
\widehat{\IIFF}_{\unadj, 2, 2} & \equiv \frac{1}{n} \sum_{i = 1}^{n} \left( \frac{t_{i}}{\pi_{1}} - 1 \right) \sum_{j \neq i} G_{i, j} + \frac{1}{n} \sum_{1 \leq i \neq j \leq n} \left( \frac{t_{i}}{\pi_{1}} - 1 \right) G_{i, j} \left( \frac{t_{j}}{\pi_{1}} - 1 \right) \\
& \eqqcolon M_{1} + M_{2}.
\end{align*}
$M_{1}$ is uncorrelated with $M_{2}$. Denote $\bar{t} \coloneqq t / \pi_{1} - 1$ and $W_{i} \coloneqq \sum_{j \neq i} G_{i, j}$ so obviously $\bbE \bar{t} = 0$. Then $M_{1}$ and $M_{2}$ can be, respectively, represented as
\begin{align*}
M_{1} = \frac{1}{n} \sum_{i = 1}^{n} W_{i} \bar{t}_{i}, M_{2} = \frac{1}{n} \sum_{1 \leq i \neq j \leq n} G_{i, j} \bar{t}_{i} \bar{t}_{j}.
\end{align*}

$M_{1}$ is a sum of non-identically distributed independent random variables so we need to consider the Lindeberg version of CLT. Here $\bbE^{\sff} (n^{- 1 / 2} W_{i} \bar{t}_{i}) = 0$, 
\begin{align*}
\sigma_{1, i}^{2} \coloneqq & \ \bbE (n^{- 1 / 2} W_{i} \bar{t}_{i})^{2} = n^{-1} W_{i}^{2} \pi_{0} / \pi_{1} = \frac{\pi_{0}}{\pi_{1}} \frac{1}{n} \left( \sum_{j = 1}^{n} G_{i, j} - G_{i, i} \right)^{2} \\
= & \ \frac{\pi_{0}}{\pi_{1}} \left\{ \frac{1}{n} \left( \sum_{j = 1}^{n} G_{i, j} \right)^{2} - 2 \frac{1}{n} G_{i, i} \sum_{j = 1}^{n} G_{i, j} + \frac{1}{n} G_{i, i}^{2} \right\}
\end{align*} and
\begin{align*}
\kappa_{1, i}^{3} \coloneqq & \ \bbE^{\sff} |n^{- 1 / 2} W_{i} \bar{t}_{i}|^{3} = n^{- 3 / 2} |W_{i}|^{3} \bbE |\bar{t}|^{3} = n^{- 3 / 2} |W_{i}|^{3} \bbE \left[ \left\vert \frac{t}{\pi_{1}^{3}} - \frac{3 t}{\pi_{1}^{2}} + \frac{3 t}{\pi_{1}} - 1 \right\vert \right] \\
= & \ \frac{1}{n^{3 / 2}} |W_{i}|^{3} \left( \frac{\pi_{0}^{3}}{\pi_{1}^{2}} + \pi_{0} \right) = \pi_{0} \left( 1 + \left( \frac{\pi_{0}}{\pi_{1}} \right)^{2} \right) \frac{|W_{i}|^{3}}{n^{3 / 2}}.
\end{align*}
Now let $\bar{\kappa}_{1, n}^{3} \coloneqq \sum_{i = 1}^{n} \kappa_{1, i}^{3}$ and
\begin{align*}
& \ \bar{\sigma}_{1, n}^{2} \coloneqq \sum_{i = 1}^{n} \sigma_{1, i}^{2} = \frac{\pi_{0}}{\pi_{1}} \sum_{i = 1}^{n} \left\{ \frac{1}{n} \left( \sum_{j = 1}^{n} G_{i, j} \right)^{2} - 2 \frac{1}{n} G_{i, i} \sum_{j = 1}^{n} G_{i, j} + \frac{1}{n} G_{i, i}^{2} \right\} \\
= & \ \frac{\pi_{0}}{\pi_{1}} \left\{ \frac{1}{n} \sum_{i = 1}^{n} \left( \sum_{j = 1}^{n} G_{i, j} \right)^{2} - \frac{1}{n} \sum_{i = 1}^{n} G_{i, i} \sum_{j = 1}^{n} G_{i, j} + \frac{1}{n} \sum_{i = 1}^{n} G_{i, i}^{2} \right\}.
\end{align*}
Under Assumption \ref{as:regularity conditions}, one can easily show that $\bar{\kappa}_{1, n}^{3} = O (n^{- 1 / 2})$ and $\bar{\sigma}_{1, n}^{2} = O (1)$. Of course, the weakest regularity condition for Lindeberg CLT to hold is on the tail probability rather than on the third moment. But again, we sacrifice the mathematical generality for ease of presentation. By the Lindeberg version of the CLT, we can establish
\begin{align*}
\sqrt{n} M_{1} \rightsquigarrow N (0, \sigma_{1}^{2}),
\end{align*}
as long as there exists a unique constant $\sigma_{1}^{2} > 0$ such that $\lim_{n \rightarrow \infty} \bar{\sigma}_{1, n}^{2} = \sigma_{1}^{2}$.

As for $M_{2}$, we first symmetrize $M_{2}$:
\begin{align*}
M_{2} = \frac{1}{n} \sum_{1 \leq i < j \leq n} \bar{G}_{i, j} \bar{t}_{i} \bar{t}_{j},
\end{align*}
where $\bar{G}_{i, j} \coloneqq (G_{i, j} + G_{j, i}) / 2$. Then we further represent $M_{2}$ as a martingale sum:
\begin{align*}
M_{2} \equiv \frac{1}{n} \sum_{i = 2}^{n} \sum_{j = 1}^{i - 1} \bar{G}_{i, j} \bar{t}_{i} \bar{t}_{j},
\end{align*}
where $M_{2, i} \coloneqq \sum_{j = 1}^{i - 1} \bar{G}_{i, j} \bar{t}_{i} \bar{t}_{j}$ satisfies $\bbE [M_{2, i} | \calF_{i - 1}] = 0$ where $\calF_{i - 1}$ denotes the filtration formed by $\{\bo_{1}, \cdots, \bo_{i - 1}\}$, so $\{M_{2, i}, i = 2, \cdots, n\}$ indeed forms a martingale sequence. Therefore, we will invoke L\'{e}vy's martingale CLT to prove the asymptotic normality of $\sqrt{n} M_{2}$. Let
\begin{align*}
& \sigma_{2, n}^{2} \coloneqq \sum_{i = 2}^{n} \sigma_{2, i}^{2}, \text{ where } \sigma_{2, i}^{2} \coloneqq \bbE^{\sff} [n^{-1} M_{2, i}^{2} | \calF_{i - 1}].
\end{align*}
In particular,
\begin{align*}
\sigma_{2, i}^{2} = & \frac{1}{n} \bbE^{\sff} \left[ \left( \sum_{j = 1}^{i - 1} \bar{G}_{i, j} \bar{t}_{j} \right)^{2} \bar{t}_{i}^{2} | \calF_{i - 1} \right] = \frac{\pi_{0}}{\pi_{1}} \frac{1}{n} \left( \sum_{j = 1}^{i - 1} \bar{G}_{i, j} \bar{t}_{j} \right)^{2} \\
= &\ \frac{\pi_{0}}{\pi_{1}} \frac{1}{n} \left\{ \sum_{j = 1}^{i - 1} \bar{G}_{i, j}^{2} \bar{t}_{j}^{2} + \sum_{1 \leq j_{1} \neq j_{2} \leq i - 1} \bar{G}_{i, j_{1}} \bar{G}_{i, j_{2}} \bar{t}_{j_{1}} \bar{t}_{j_{2}} \right\}.
\end{align*}
Then
\begin{align*}
\sigma_{2, n}^{2} = & \ \frac{1}{n} \sum_{i = 2}^{n} \frac{\pi_{0}}{\pi_{1}} \sum_{j = 1}^{i - 1} \bar{G}_{i, j}^{2} \bar{t}_{j}^{2} + \frac{1}{n} \sum_{i = 2}^{n} \frac{\pi_{0}}{\pi_{1}} \sum_{1 \leq j_{1} \neq j_{2} \leq i - 1} \bar{G}_{i, j_{1}} \bar{G}_{i, j_{2}} \bar{t}_{j_{1}} \bar{t}_{j_{2}} \eqqcolon C_{1} + C_{2}.
\end{align*}
First, it is not difficult to see that
\begin{align*}
\bbE^{\sff} C_{1} = \frac{1}{n} \sum_{i = 2}^{n} \left( \frac{\pi_{0}}{\pi_{1}} \right)^{2} \sum_{j = 1}^{i - 1} \bar{G}_{i, j}^{2} = \left( \frac{\pi_{0}}{\pi_{1}} \right)^{2} \frac{1}{n} \sum_{1 \leq i < j \leq n} \bar{G}_{i, j}^{2}.
\end{align*}
and $\bbE^{\sff} C_{2} = 0$. Thus as long as there exists a unique $\sigma_{2}^{2} > 0$ such that $\lim_{n \rightarrow \infty} \bbE^{\sff} C_{1} = \sigma_{2}^{2}$, we can show that
\begin{align*}
\sqrt{n} M_{2} \rightsquigarrow N (0, \sigma_{2}^{2}).
\end{align*}

Combining the above arguments, we need the following assumption to hold for $\sqrt{n} \widehat{\IIFF}_{\unadj, 2, 2}$ to have a Gaussian limiting distribution.

\begin{assumption}
\label{as:clt}
There exists a unique $\sigma_{1}^{2} > 0$ such that $\lim_{n \rightarrow \infty} \bar{\sigma}_{1, n}^{2} = \sigma_{1}^{2}$ and a unique $\sigma_{2}^{2} > 0$ such that $\lim_{n \rightarrow \infty} \bar{\sigma}_{2, n}^{2} = \sigma_{2}^{2}$.
\end{assumption}

Since $\sigma_{1, n}^{2} + \sigma_{2, n}^{2}$ is the design-based variance of $\widehat{\IIFF}_{\unadj, 2, 2}$, the extra condition for Gaussian limiting distribution is essentially that the design-based variance converges to a $O (1)$ constant.

\subsection{Proofs of Results in Section \ref{sec:variety}}
\label{app:variety}

We first prove that $\hat{\tau}_{\adj, 3}$ is unbiased under CRE but biased under the Bernoulli sampling.

\begin{proof}
Recall that $\hat{\tau}_{\adj, 2}$ has bias of the form
\begin{align*}
- \frac{\pi_{0}}{\pi_{1}} \frac{1}{n (n - 1)} \sum_{i = 1}^{n} H_{i, i} y_{i} (1).
\end{align*}
It remains to show that the extra term in $\hat{\tau}_{\adj, 3}$ is an unbiased estimator of the above bias, which is immediate because $\bbE t \equiv \pi_{1}$. The biasedness of $\hat{\tau}_{\adj, 3}$ under the Bernoulli sampling stems from the unbiasedness of $\hat{\tau}_{\adj, 2}$ under the Bernoulli sampling and $\hat{\tau}_{\adj, 3}$ adds to $\hat{\tau}_{\adj, 2}$ a random variable with non-zero mean.
\end{proof}

We next prove Proposition \ref{prop:bias free HOIF variance}.

\begin{proof}[Proof of Proposition \ref{prop:bias free HOIF variance}]
Denote the extra term in $\hat{\tau}_{\adj, 3}$ as $\hat{\alpha}$. We first compute its variance.
\begin{align*}
\var^{\sff} (\hat{\alpha}) = \left( \frac{\pi_{0}}{\pi_{1}} \right)^{3} \frac{1}{n (n - 1)^{3}} \left\{ \sum_{i = 1}^{n} H_{i, i}^{2} y_{i} (1)^{2} - \frac{1}{n} \trace^{2} \left( \hat{\bSigma}_{y} \hat{\bSigma}^{-} \right) \right\}.
\end{align*}

Next, we compute the covariance between $\hat{\tau}_{\unadj}$ and $\hat{\alpha}$.
\begin{align*}
& \ \cov^{\sff} \left( \hat{\tau}_{\unadj}, \hat{\alpha} \right) = \cov^{\sff} \left( \frac{1}{n} \sum_{i = 1}^{n} \frac{t_{i}}{\pi_{1}} y_{i} (1), \frac{\pi_{0}}{\pi_{1}} \frac{1}{n (n - 1)} \sum_{i = 1}^{n} H_{i, i} \frac{t_{i}}{\pi_{1}} y_{i} (1) \right) \\
= & \ \frac{1}{n^{2} (n - 1)} \frac{\pi_{0}}{\pi_{1}^{3}} \sum_{i = 1}^{n} y_{i} (1) \sum_{j = 1}^{n} H_{j, j} y_{j} (1) \cov \left( t_{i}, t_{j} \right) \\
= & - \frac{1}{n^{2} (n - 1)} \frac{\pi_{0}}{\pi_{1}^{3}} \sum_{1 \leq i \neq j \leq n} H_{i, i} y_{i} (1) y_{j} (1) \frac{\pi_{1} \pi_{0}}{n - 1} + \frac{1}{n^{2} (n - 1)} \frac{\pi_{0}}{\pi_{1}^{3}} \sum_{i = 1}^{n} H_{i, i} y_{i} (1)^{2} \pi_{1} \pi_{0} \\
= & \ \left( \frac{\pi_{0}}{\pi_{1}} \right)^{2} \frac{1}{n^{2} (n - 1)} \left\{ - \frac{1}{n - 1} \sum_{i = 1}^{n} H_{i, i} y_{i} (1) \sum_{j = 1}^{n} y_{j} (1) + \frac{1}{n - 1} \sum_{i = 1}^{n} H_{i, i} y_{i} (1)^{2} + \sum_{i = 1}^{n} H_{i, i} y_{i} (1)^{2} \right\} \\
= & \ \left( \frac{\pi_{0}}{\pi_{1}} \right)^{2} \frac{1}{n^{2} (n - 1)} \left\{ \frac{n}{n - 1} \sum_{i = 1}^{n} H_{i, i} y_{i} (1)^{2} - \bar{\tau} \cdot \frac{n}{n - 1} \sum_{i = 1}^{n} H_{i, i} y_{i} (1) \right\} \\
= & \ \left( \frac{\pi_{0}}{\pi_{1}} \right)^{2} \frac{1}{n (n - 1)^{2}} \sum_{i = 1}^{n} H_{i, i} y_{i} (1) \cdot (y_{i} (1) - \bar{\tau}).
\end{align*}

Finally, we are left to compute the covariance between $\hat{\alpha}$ and $\widehat{\IIFF}_{\unadj, 2, 2}$.
\begin{align*}
& \ \cov^{\sff} \left( \widehat{\IIFF}_{\unadj, 2, 2}, \hat{\alpha} \right) = \cov^{\sff} \left( \frac{1}{n} \sum_{1 \leq i \neq j \leq n} \left( \frac{t_{i}}{\pi_{1}} - 1 \right) H_{i, j} \frac{t_{j} y_{j} (1)}{\pi_{1}}, \frac{\pi_{0}}{\pi_{1}} \frac{1}{n (n - 1)} \sum_{i = 1}^{n} H_{i, i} \frac{t_{i}}{\pi_{1}} y_{i} (1) \right) \\
= & \ \frac{\pi_{0}}{\pi_{1}^{3}} \frac{1}{n^{2} (n - 1)} \sum_{1 \leq i \neq j \leq n} H_{i, j} y_{j} (1) \sum_{k = 1}^{n} H_{k, k} y_{k} (1) \cov \left( \left( \frac{t_{i}}{\pi_{1}} - 1 \right) t_{j} , t_{k} \right) \\
= & \ \frac{\pi_{0}}{\pi_{1}^{3}} \frac{1}{n^{2} (n - 1)} \sum_{1 \leq i \neq j \neq k \leq n} H_{i, j} y_{j} (1) H_{k, k} y_{k} (1) \frac{\pi_{0}^{2} - n \pi_{1} \pi_{0}}{(n - 1) (n - 2)} \\
& + \frac{\pi_{0}}{\pi_{1}^{3}} \frac{1}{n^{2} (n - 1)} \sum_{1 \leq i \neq j \leq n} H_{i, j} y_{j} (1) H_{i, i} y_{i} (1) \left( \pi_{1} \pi_{0} - \frac{\pi_{0} (\pi_{0} - \pi_{1})}{n - 1} \right) \\
& - \frac{\pi_{0}}{\pi_{1}^{3}} \frac{1}{n^{2} (n - 1)} \sum_{1 \leq i \neq j \leq n} H_{i, j} y_{j} (1) H_{j, j} y_{j} (1) \frac{\pi_{0}^{2}}{n - 1} \\
= & \ \left\{ \left( \frac{\pi_{0}}{\pi_{1}} \right)^{3} \frac{1}{n^{2} (n - 1)^{2} (n - 2)} - \left( \frac{\pi_{0}}{\pi_{1}} \right)^{2} \frac{1}{n (n - 1)^{2} (n - 2)} \right\} \sum_{1 \leq i \neq j \neq k \leq n} H_{i, j} y_{j} (1) H_{k, k} y_{k} (1) \\
& + \left\{ \left( \frac{\pi_{0}}{\pi_{1}} \right)^{2} \frac{1}{n^{2} (n - 1)} - \left( \frac{\pi_{0}}{\pi_{1}} \right)^{3} \frac{1}{n^{2} (n - 1)^{2}} + \left( \frac{\pi_{0}}{\pi_{1}} \right)^{2} \frac{1}{n^{2} (n - 1)^{2}} \right\} \sum_{1 \leq i \neq j \leq n} y_{i} (1) H_{i, i} H_{i, j} y_{j} (1) \\
& - \left( \frac{\pi_{0}}{\pi_{1}} \right)^{3} \frac{1}{n^{2} (n - 1)^{2}} \sum_{1 \leq i \neq j \leq n} H_{i, j} H_{j, j} y_{j} (1)^{2} \\
= & \ \left( \frac{\pi_{0}}{\pi_{1}} \right)^{2} \frac{1}{n (n - 1)^{2} (n - 2)} \left\{ 1 - \frac{\pi_{0}}{\pi_{1}} \frac{1}{n} \right\} \sum_{1 \leq i \neq j \leq n} \left\{ H_{i, i} y_{i} (1) H_{j, j} y_{j} (1) + y_{i} (1) H_{i, i} H_{i, j} y_{j} (1) \right\} \\
& + \left( \frac{\pi_{0}}{\pi_{1}} \right)^{2} \frac{1}{n^{2} (n - 1)} \left\{ 1 -  \frac{\pi_{0} - \pi_{1}}{\pi_{1}} \frac{1}{n - 1} \right\} \sum_{1 \leq i \neq j \leq n} y_{i} (1) H_{i, i} H_{i, j} y_{j} (1) \\
& + \left( \frac{\pi_{0}}{\pi_{1}} \right)^{3} \frac{1}{n^{2} (n - 1)^{2}} \sum_{i = 1}^{n} H_{i, i}^{2} y_{i} (1)^{2} \\
= 
 & \ \left( \frac{\pi_{0}}{\pi_{1}} \right)^{2} \frac{1}{n^{2} (n - 1)^{2}} \frac{\pi_1 n^2 - n + \pi_0}{(n-2)\pi_1} \sum_{1 \leq i \neq j \leq n} y_{i} (1) H_{i, i} H_{i, j} y_{j} (1) \\
& + \left( \frac{\pi_{0}}{\pi_{1}} \right)^{2} \frac{1}{n (n - 1)^{2} (n - 2)} \left\{ 1 - \frac{\pi_{0}}{\pi_{1}} \frac{1}{n} \right\} \trace^{2} \left( \hat{\bSigma}_{y} \hat{\bSigma}^{-} \right) \\
& + \left( \frac{\pi_{0}}{\pi_{1}} \right)^{2} \frac{1}{n (n - 1)^{2}} \left\{ \frac{\pi_{0}}{\pi_{1}} \frac{1}{n} - \frac{1}{n - 2} \left( 1 - \frac{\pi_{0}}{\pi_{1}} \frac{1}{n} \right) \right\} \sum_{i = 1}^{n} H_{i, i}^{2} y_{i} (1)^{2}.
\end{align*}
\end{proof}

\subsection{A Brief Discussion on ``Gadgets''}
\label{app:gadgets}

In Remark \ref{rem:gadget}, we mention that we will provide some statistical intuition on how to read off the order of $\var^{\sff} (\hat{\tau}_{\adj, 2})$ by simply inspecting each term involved in \eqref{var, HOIF-CRE, design-based} using a few ``gadgets''. The discussion here is not formal or rigorous, so we use the non-rigorous asymptotic notation $\tilde{O} (\cdot)$. The following gadgets are often involved in the bias and variance formula:
\begin{enumerate}[label = (\roman*)]
\item $\frac{1}{n} \sum_{1 \leq i \neq j \leq n} y_{i} (1) H_{i, j} (H_{j, j})^{k} y_{j} (1) = \tilde{O} \left( \left( \frac{p}{n} \right)^{k} \right)$ for any bounded non-negative integer $k$: this can be deduced by viewing this $U$-statistic as the product of the projection of $y (1)$ onto the space spanned by $\bx - \bar{\bx}$ and the projection of $y (1)$ onto the space spanned by $(\bx - \bar{\bx}) (\bx^{\top} \hat{\bSigma}^{-} \bx)^{k}$ and the $(p / n)^{k}$ factor comes from $(\bx^{\top} \hat{\bSigma}^{-} \bx)^{k}$ (note that the projection of $y (1)$ onto $\bx - \bar{\bx}$ should be of order $\tilde{O} (1)$);

\item $\sum_{i = 1}^{n} H_{i, i}^{k_{1}} y_{i} (1)^{k_{2}} = \tilde{O} (p)$ for any bounded non-negative integers $k_{1}, k_{2}$: this can be directly deduced using the cyclic permutation invariance of the trace operator.
\end{enumerate}

By using the above two gadgets, it is then straightforward to conclude the following (just looking at each term in \eqref{var, HOIF-CRE, design-based} one by one):
\begin{itemize}
\item $\frac{\pi_{0}}{\pi_{1}} \frac{1}{n} V_{n} (y (1)) = \tilde{O} (1 / n)$ without even using the above gadgets;

\item $\frac{\pi_{0}}{\pi_{1}} \frac{1}{n^{2}} \sum_{1 \leq i \neq j \leq n} H_{i, j} (1 + 2 H_{j, j}) y_{i} (1) y_{j} (1) = \tilde{O} (1 / n + p / n^{2})$ using gadget (i);

\item $\frac{\pi_{0}}{\pi_{1}} \frac{1}{n^{2}} \sum_{i = 1}^{n} H_{i, i} (\frac{1}{\pi_{1}} - \frac{\pi_{0} + 1}{\pi_{1}} H_{i, i}) y_{i} (1)^{2} = \tilde{O} (p / n^{2})$ using gadget (ii);

\item $(\frac{\pi_{0}}{\pi_{1}})^{2} \frac{1}{n^{2}} \trace ((\hat{\bSigma}_{y} \hat{\bSigma}^{-})^{2}) = \tilde{O} (p / n^{2})$ using gadget (ii);

\item $\frac{\pi_{0}}{\pi_{1}} \frac{1}{n^{3}} \trace^{2} (\hat{\bSigma}_{y} \hat{\bSigma}^{-}) = \tilde{O} (p^{2} / n^{3}) = O (p / n^{2})$ using gadget (ii);

\item $\bar{\tau} \cdot \frac{\pi_{0}}{\pi_{1}} \frac{1}{n^{2}} \trace (\hat{\bSigma}_{y} \hat{\bSigma}^{-}) = \tilde{O} (p / n^{2})$ using gadget (ii).
\end{itemize}

\subsection{Proofs of Theorems in Section \ref{sec:understanding}}
\label{app:understanding}

We now prove Lemma \ref{lem:alternative}.

\begin{proof}[Proof of Lemma \ref{lem:alternative}]
The alternative decomposition of $\hat{\tau}_{\db}$ is derived as follows:
\begin{align*}
\hat{\tau}_{\db} = & \ \hat{\tau}_{\unadj} - \frac{1}{n} \sum_{i = 1}^{n} \sum_{j = 1}^{n} \frac{t_{i} t_{j}}{\pi_{1}^{2}} H_{i, j} y_{j} (1) + \frac{1}{n^{2}} \sum_{i = 1}^{n} \sum_{j = 1}^{n} \sum_{k = 1}^{n} \frac{t_{i} t_{j} t_{k}}{\pi_{1}^{3}} H_{i, j} y_{k} (1) \\
& + \frac{1 - \pi_{1}}{\pi_{1}} \frac{1}{n} \sum_{i = 1}^{n} \frac{t_{i}}{\pi_{1}} H_{i, i} y_{i} (1) - \frac{1 - \pi_{1}}{\pi_{1}} \frac{1}{n^{2}} \sum_{i = 1}^{n} \sum_{j = 1}^{n} \frac{t_{i} t_{j}}{\pi_{1}^{2}} H_{i, i} y_{j} (1) \\
= & \ \hat{\tau}_{\unadj} - \frac{1}{n} \sum_{1 \leq i \neq j \leq n} \frac{t_{i} t_{j}}{\pi_{1}^{2}} H_{i, j} y_{j} (1) - \frac{1}{n} \sum_{i = 1}^{n} \frac{t_{i}}{\pi_{1}} H_{i, i} y_{i} (1) \\
& + \frac{1}{n^{2}} \sum_{1 \leq i \neq j \leq n} \sum_{k = 1}^{n} \frac{t_{i} t_{j} t_{k}}{\pi_{1}^{3}} H_{i, j} y_{k} (1) + \frac{1}{n^{2}} \sum_{i = 1}^{n} \sum_{k = 1}^{n} \frac{t_{i} t_{k}}{\pi_{1}^{2}} H_{i, i} y_{k} (1) \\
= & \ \hat{\tau}_{\unadj} - \frac{1}{n} \sum_{1 \leq i \neq j \leq n} \frac{t_{i} t_{j}}{\pi_{1}^{2}} H_{i, j} (y_{j} (1) - \hat{\tau}_{\unadj}) - \frac{1}{n} \sum_{i = 1}^{n} \frac{t_{i}}{\pi_{1}} H_{i, i} (y_{i} (1) - \hat{\tau}_{\unadj}) \\
= & \ \hat{\tau}_{\unadj} - \frac{1}{n} \sum_{1 \leq i \neq j \leq n} \left( \frac{t_{i}}{\pi_{1}} - 1 \right) H_{i, j} \frac{t_{j}}{\pi_{1}} (y_{j} (1) - \hat{\tau}_{\unadj}) \\
& - \frac{1}{n} \sum_{1 \leq i \neq j \leq n} H_{i, j} \frac{t_{j}}{\pi_{1}} (y_{j} (1) - \hat{\tau}_{\unadj}) - \frac{1}{n} \sum_{i = 1}^{n} \frac{t_{i}}{\pi_{1}} H_{i, i} (y_{i} (1) - \hat{\tau}_{\unadj}) \\
= & \ \hat{\tau}_{\unadj} - \frac{1}{n} \sum_{1 \leq i \neq j \leq n} \left( \frac{t_{i}}{\pi_{1}} - 1 \right) H_{i, j} \frac{t_{j}}{\pi_{1}} (y_{j} (1) - \hat{\tau}_{\unadj}) - \frac{1}{n} \sum_{i = 1}^{n} \sum_{j = 1}^{n} H_{i, j} \frac{t_{j}}{\pi_{1}} (y_{j} (1) - \hat{\tau}_{\unadj}) \\
= & \ \hat{\tau}_{\unadj} - \frac{1}{n} \sum_{1 \leq i \neq j \leq n} \left( \frac{t_{i}}{\pi_{1}} - 1 \right) H_{i, j} \frac{t_{j} (y_{j} (1) - \hat{\tau}_{\unadj})}{\pi_{1}} \equiv \hat{\tau}_{\adj, 2}^{\dag},
\end{align*}
where the last step follows from Lemma \ref{lem:hat}.
\end{proof}

We next prove Proposition \ref{prop:db}.

\begin{proof}[Proof of Proposition \ref{prop:db}]
In the first part, we compute the bias formula of $\hat{\tau}_{\adj, 2}^{\dag}$ under the design-based framework, based on Lemma \ref{lem:alternative}, and Lemma \ref{lem:hat} and Lemma \ref{lem:CRE} in Appendix \ref{app:technical lemmas}. By the alternative decomposition $\hat{\tau}_{\db} \equiv \hat{\tau}_{\adj, 2}^{\dag}$, we have
\begin{align*}
& \ \bbE^{\sff} (\hat{\tau}_{\adj, 2}^{\dag} - \bar{\tau}) \\
= & \ \bbE^{\sff} \left[ \frac{n - 1}{n} \widehat{\IIFF}_{\unadj, 2, 2} \right] + \bbE^{\sff} \left[ \frac{1}{n} \sum_{1 \leq i \neq j \leq n} \left( \frac{t_{i}}{\pi_{1}} - 1 \right) H_{i, j} \frac{t_{j}}{\pi_{1}} \hat{\tau}_{\unadj} \right] \\
= & - \frac{\pi_{0}}{\pi_{1}} \frac{1}{n (n - 1)} \sum_{i = 1}^{n} H_{i, i} y_{i} (1) + \bbE^{\sff} \left[ \frac{1}{n} \sum_{1 \leq i \neq j \leq n} \left( \frac{t_{i}}{\pi_{1}} - 1 \right) H_{i, j} \frac{t_{j}}{\pi_{1}} \hat{\tau}_{\unadj} \right].
\end{align*}
Thus we are left to compute the second term in the above display.
\begin{align*}
& \ \bbE^{\sff} \left[ \frac{1}{n} \sum_{1 \leq i \neq j \leq n} \left( \frac{t_{i}}{\pi_{1}} - 1 \right) H_{i, j} \frac{t_{j}}{\pi_{1}} \hat{\tau}_{\unadj} \right] \\
= & \ \frac{1}{n^{2}} \sum_{1 \leq i \neq j \leq n} \sum_{k = 1}^{n} \bbE^{\sff} \left[ \left( \frac{t_{i}}{\pi_{1}} - 1 \right) H_{i, j} \frac{t_{j} t_{k}}{\pi_{1}^{2}} y_{k} (1) \right] \\
= & \ \frac{1}{n^{2}} \sum_{1 \leq i \neq j \neq k \leq n} \bbE^{\sff} \left[ \left( \frac{t_{i}}{\pi_{1}} - 1 \right) H_{i, j} \frac{t_{j} t_{k}}{\pi_{1}^{2}} y_{k} (1) \right] + \frac{1}{n^{2}} \sum_{1 \leq i \neq j \leq n} \bbE^{\sff} \left[ \left( \frac{t_{i}}{\pi_{1}} - 1 \right) H_{i, j} \frac{t_{j} t_{i}}{\pi_{1}^{2}} y_{i} (1) \right] \\
& + \frac{1}{n^{2}} \sum_{1 \leq i \neq j \leq n} \bbE^{\sff} \left[ \left( \frac{t_{i}}{\pi_{1}} - 1 \right) H_{i, j} \frac{t_{j}}{\pi_{1}^{2}} y_{j} (1) \right] \\
= & \ \frac{1}{n^{2}} \sum_{1 \leq i \neq j \neq k \leq n} H_{i, j} y_{k} (1) \left( \frac{1}{\pi_{1}^{2}} \frac{(n \pi_{1} - 1) (n \pi_{1} - 2)}{(n - 1) (n - 2)} - \frac{1}{\pi_{1}} \frac{n \pi_{1} - 1}{n - 1} \right) \\
& + \frac{1}{n^{2}} \sum_{1 \leq i \neq j \leq n} H_{i, j} \left\{ y_{i} (1) \left( \frac{1}{\pi_{1}^{2}} \frac{n \pi_{1} - 1}{n - 1} - \frac{1}{\pi_{1}} \frac{n \pi_{1} - 1}{n - 1} \right) + y_{j} (1) \left( \frac{1}{\pi_{1}^{2}} \frac{n \pi_{1} - 1}{n - 1} - \frac{1}{\pi_{1}} \right) \right\} \\
= & - \frac{2 (1 - \pi_{1})}{\pi_{1}^{2}} \frac{1}{n^{2}} \sum_{1 \leq i \neq j \neq k \leq n} H_{i, j} y_{k} (1) \frac{n \pi_{1} - 1}{n - 1} \frac{1}{n - 2} \\
& + \frac{1 - \pi_{1}}{\pi_{1}^{2}} \frac{1}{n^{2}} \sum_{1 \leq i \neq j \leq n} H_{i, j} y_{i} (1) \frac{n \pi_{1} - 1}{n - 1} - \frac{1 - \pi_{1}}{\pi_{1}^{2}} \frac{1}{n^{2}} \sum_{1 \leq i \neq j \leq n} H_{i, j} y_{j} (1) \frac{1}{n - 1} \\
= & - \frac{2 \pi_{0}}{\pi_{1}^{2}} \frac{n \pi_{1} - 1}{n - 1} \frac{1}{n (n - 2)} \sum_{1 \leq i \neq j \leq n} H_{i, j} \left\{ \bar{\tau} - \frac{1}{n} y_{i} (1) - \frac{1}{n} y_{j} (1) \right\} \\
& - \frac{\pi_{0}}{\pi_{1}^{2}} \frac{n \pi_{1} - 1}{n - 1} \frac{1}{n^{2}} \sum_{i = 1}^{n} H_{i, i} y_{i} (1) + \frac{\pi_{0}}{\pi_{1}^{2}} \frac{1}{n^{2} (n - 1)} \sum_{i = 1}^{n} H_{i, i} y_{i} (1) \\
= & \ 2 \frac{\pi_{0}}{\pi_{1}^{2}} \frac{n \pi_{1} - 1}{n - 1} \frac{1}{n - 2} \frac{p}{n} \bar{\tau} - 4 \frac{\pi_{0}}{\pi_{1}^{2}} \frac{n \pi_{1} - 1}{n - 1} \frac{1}{n - 2} \frac{1}{n^{2}} \sum_{i = 1}^{n} H_{i, i} y_{i} (1) - \frac{\pi_{0}}{\pi_{1}^{2}} \frac{n \pi_{1} - 2}{n - 1} \frac{1}{n^{2}} \sum_{i = 1}^{n} H_{i, i} y_{i} (1).
\end{align*}
Then we have
\begin{align*}
& \ \bbE^{\sff} (\hat{\tau}_{\adj, 2}^{\dag} - \bar{\tau}) \\
= & \ 2 \frac{\pi_{0}}{\pi_{1}^{2}} \frac{n \pi_{1} - 1}{n - 1} \frac{1}{n - 2} \frac{p}{n} \bar{\tau} - \frac{\pi_{0}}{\pi_{1}^{2}} \left\{ \frac{n}{n - 1} \pi_{1} + \frac{n \pi_{1} - 2}{n - 1} + \frac{4 (n \pi_{1} - 1)}{(n - 1) (n -2)} \right\} \frac{1}{n^{2}} \sum_{i = 1}^{n} H_{i, i} y_{i} (1) \\
= & \ 2 \frac{\pi_{0}}{\pi_{1}^{2}} \frac{n \pi_{1} - 1}{n - 1} \frac{1}{n - 2} \frac{p}{n} \bar{\tau} - \frac{\pi_{0}}{\pi_{1}^{2}} \frac{\pi_{1} n (n - 2) + (n \pi_{1} - 2) (n - 2) + 4 n \pi_{1} - 4}{(n - 1) (n - 2)} \frac{1}{n^{2}} \sum_{i = 1}^{n} H_{i, i} y_{i} (1) \\
= & \ 2 \frac{\pi_{0}}{\pi_{1}^{2}} \frac{n \pi_{1} - 1}{n - 1} \frac{1}{n - 2} \frac{p}{n} \bar{\tau} - \frac{\pi_{0}}{\pi_{1}^{2}} \frac{n^{2} \pi_{1} - 2 n \pi_{1} + n^{2} \pi_{1} - 2 n + 2 n \pi_{1}}{(n - 1) (n - 2)} \frac{1}{n^{2}} \sum_{i = 1}^{n} H_{i, i} y_{i} (1) \\
= & \ 2 \frac{\pi_{0}}{\pi_{1}^{2}} \frac{n \pi_{1} - 1}{n - 1} \frac{1}{n - 2} \frac{p}{n} \bar{\tau} - 2 \frac{\pi_{0}}{\pi_{1}^{2}} \frac{n^{2} \pi_{1} - n}{(n - 1) (n - 2)} \frac{1}{n^{2}} \sum_{i = 1}^{n} H_{i, i} y_{i} (1) \\
= & \ 2 \frac{\pi_{0}}{\pi_{1}^{2}} \frac{n \pi_{1} - 1}{n - 1} \frac{1}{n - 2} \left( \frac{p}{n} \bar{\tau} - \frac{1}{n} \sum_{i = 1}^{n} H_{i, i} y_{i} (1) \right).
\end{align*}

We next directly provide the exact variance of $\hat{\tau}_{\adj, 2}^{\dag}$ under the design-based framework. The derivation is elementary but quite tedious. The key intermediate steps can be found in the lemma below. But we refer interested readers to \href{https://github.com/Cinbo-Wang/HOIF-Car/blob/main/var-db.pdf}{this GitHub link} for the complete derivation.
\begin{align*}
& \var^{\sff} (\hat{\tau}_{\adj, 2}^{\dag}) = \var^{\sff} (\hat{\tau}_{\unadj}) + \var^{\sff} (\widehat{\IIFF}_{\unadj, 2, 2}^{\dag}) - 2 \cov^{\sff} (\hat{\tau}_{\unadj}, \widehat{\IIFF}_{\unadj, 2, 2}^{\dag}) \\
= & \ \left( \frac{\pi_{0}}{\pi_{1}} \right) \frac{1}{n} V_{n} \left[ (y_{i} (1) - \bar{\tau}) - \sum_{j \neq i} H_{j, i} (y_{j} (1) - \bar{\tau}) \right] \\
& \ + \left( \frac{\pi_{0}}{\pi_{1}} \right)^{2} \frac{1}{n} \left\{ \frac{1}{n} \sum_{i = 1}^{n} H_{i, i} (1 - H_{i, i}) (y_{i} (1) - \bar{\tau})^{2}  + \frac{1}{n} \sum_{1 \leq i \neq j \leq n} H_{i, j}^{2} (y_{i} (1) - \bar{\tau}) (y_{j} (1) - \bar{\tau}) \right\} + o \left( n^{-1} \right) \\
= & \ \nu^{\sff\dag} +  o \left( n^{-1} \right) .
\end{align*}

\end{proof}

\begin{lemma}
\label{lem:HOIF-db-var}
The variance of $\hat{\tau}_{\adj, 2}^{\dag}$ can be approximated as follows under Assumption \ref{as:regularity conditions}:
\begin{equation}
\label{HOIF-db-var}
\begin{split}
    \var^{\sff} (\hat{\tau}_{\adj, 2}^{\dag}) = & \left( \frac{\pi_{0}}{\pi_{1}} \right) \frac{1}{n} V_{n} \left[ (y_{i} (1) - \bar{\tau}) - \sum_{j \neq i} H_{j, i} (y_{j} (1) - \bar{\tau}) \right] \\
    & \ + \left( \frac{\pi_{0}}{\pi_{1}} \right)^{2} \frac{1}{n} \left\{ \frac{1}{n} \sum_{i = 1}^{n} H_{i, i} (1 - H_{i, i}) (y_{i} (1) - \bar{\tau})^{2}  + \frac{1}{n} \sum_{1 \leq i \neq j \leq n} H_{i, j}^{2} (y_{i} (1) - \bar{\tau}) (y_{j} (1) - \bar{\tau}) \right\}\\ & \ + o \left( n^{-1} \right) \\
    = & \ \nu^{\sff\dag} +  o \left( n^{-1} \right).
\end{split}
\end{equation}
\end{lemma}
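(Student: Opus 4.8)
The plan is to avoid reproducing the long moment computation behind Lemma \ref{lem:HOIF-db-var} by reducing it to the variance formula for $\hat{\tau}_{\adj, 2}$ already established in Theorem \ref{thm:HOIF-CRE, randomization-based}. Two elementary observations drive the reduction. First, under CRE the number of treated units $\sum_i t_i = n_1$ is deterministic, so $\hat{\tau}_{\unadj}$ is exactly translation equivariant: replacing $y_i(1)$ by $y_i(1) - c$ sends $\hat{\tau}_{\unadj} \mapsto \hat{\tau}_{\unadj} - c$. Consequently the centering residual $y_j - \hat{\tau}_{\unadj}$ appearing in $\hat{\IIFF}_{\unadj,2,2}^{\dag}$ is translation invariant, whence $\hat{\tau}_{\adj,2}^{\dag}$ is itself translation equivariant and $\var^{\sff}(\hat{\tau}_{\adj,2}^{\dag})$ is unchanged if we replace $\{y_i(1)\}$ by the centered outcomes $\tilde y_i \coloneqq y_i(1) - \bar{\tau}$. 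Since $\nu^{\sff\dag}$ is exactly $\nu^{\sff}$ with $y_i(1)$ replaced by $y_i(1) - \bar\tau$, it suffices to prove $\var^{\sff}(\hat{\tau}_{\adj,2}^{\dag}) = \nu^{\sff}(\{\tilde y_i\}) + o(n^{-1})$, i.e. to work throughout with centered outcomes (so $\bar{\tilde y} = 0$).

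Second, splitting the residual yields the exact identity
\begin{equation*}
\hat{\tau}_{\adj,2}^{\dag} = \hat{\tau}_{\adj,2} + \hat{\tau}_{\unadj}\,\hat{S}, \qquad \hat{S} \coloneqq \frac{1}{n}\sum_{1\leq i\neq j\leq n}\left(\frac{t_i}{\pi_1}-1\right)H_{i,j}\frac{t_j}{\pi_1},
\end{equation*}
where $\hat{S}$ is the ``all-ones'' analogue of $\hat{\IIFF}_{\unadj,2,2}$ and does not involve the outcomes. Evaluated at the centered outcomes this gives
\begin{equation*}
\var^{\sff}(\hat{\tau}_{\adj,2}^{\dag}) = \var^{\sff}(\hat{\tau}_{\adj,2}) + \var^{\sff}(\hat{\tau}_{\unadj}\hat{S}) + 2\,\cov^{\sff}(\hat{\tau}_{\adj,2},\hat{\tau}_{\unadj}\hat{S}).
\end{equation*}
Theorem \ref{thm:HOIF-CRE, randomization-based}(2), applied to the vector $\{\tilde y_i\}$ (which still satisfies Assumption \ref{as:regularity conditions} since $\Vert \tilde y \Vert_{\infty} \leq 2B$), gives $\var^{\sff}(\hat{\tau}_{\adj,2}) = \nu^{\sff}(\{\tilde y_i\}) + o(n^{-1}) = \nu^{\sff\dag} + o(n^{-1})$, so it only remains to show the two correction terms are $o(n^{-1})$.

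For the corrections I would control three ingredients. (i) At centered outcomes $\hat{\tau}_{\unadj}$ has mean zero and, being a H\'{a}jek-type mean of bounded summands under CRE, satisfies $\bbE^{\sff}[\hat{\tau}_{\unadj}^{4}] = O(n^{-2})$. (ii) For $\hat{S}$, specializing Lemma \ref{lem:HOIF-var} (equivalently the gadgets of Appendix \ref{app:gadgets}) to $y\equiv 1$, where $\hat{\bSigma}_y = \hat{\bSigma}$ so that $\trace(\hat{\bSigma}_y\hat{\bSigma}^{-}) = \trace((\hat{\bSigma}_y\hat{\bSigma}^{-})^2) = p$, gives $\bbE^{\sff}\hat{S} = O(p/n^2)$ and $\var^{\sff}(\hat{S}) = O(p/n^2)$; a standard fourth-moment bound for a weighted second-order $U$-statistic (using $\tfrac{n}{p}\Vert H\Vert_{\infty}\leq B$) upgrades this to $\bbE^{\sff}[\hat{S}^4] = O(p^2/n^4)$. (iii) Cauchy--Schwarz then yields $\bbE^{\sff}[(\hat{\tau}_{\unadj}\hat{S})^2] \leq \sqrt{\bbE^{\sff}\hat{\tau}_{\unadj}^4}\,\sqrt{\bbE^{\sff}\hat{S}^4} = O(p/n^3) = o(n^{-1})$ in the regime $p = O(n)$, so $\var^{\sff}(\hat{\tau}_{\unadj}\hat{S}) = o(n^{-1})$; a second application of Cauchy--Schwarz bounds the cross term by $\sqrt{\var^{\sff}(\hat{\tau}_{\adj,2})\,\var^{\sff}(\hat{\tau}_{\unadj}\hat{S})} = \sqrt{O(n^{-1})\,o(n^{-1})} = o(n^{-1})$. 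Combining the three displays gives $\var^{\sff}(\hat{\tau}_{\adj,2}^{\dag}) = \nu^{\sff\dag} + o(n^{-1})$.

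The main obstacle is ingredient (ii): establishing $\bbE^{\sff}[\hat{S}^4] = O(p^2/n^4)$ uniformly over the proportional regime $\alpha < 1$. The crude worst-case bound $|H_{i,j}| \leq Bp/n$ is far too lossy, so one must exploit the $U$-statistic cancellation in $\hat{S}$ --- either by a Hoeffding-type decomposition as in Appendix \ref{app:clt} or by direct enumeration of index patterns as in Lemma \ref{lem:var intermediate} --- which is precisely where Assumption \ref{as:regularity conditions} and $p/n \to \alpha < 1$ are used. If one prefers to bypass the reduction entirely, the alternative is the brute-force route mirroring Lemma \ref{lem:HOIF-var}: expand $\var^{\sff}(\hat{\tau}_{\adj,2}^{\dag}) = \var^{\sff}(\hat{\tau}_{\unadj}) + \var^{\sff}(\hat{\IIFF}_{\unadj,2,2}^{\dag}) - 2\cov^{\sff}(\hat{\tau}_{\unadj},\hat{\IIFF}_{\unadj,2,2}^{\dag})$ and recompute every $V_1$--$V_7$ and $U_1$--$U_3$ term with $t_j(y_j - \hat{\tau}_{\unadj})$ in place of $t_j y_j$, which is the tedious derivation the authors defer to the companion note.
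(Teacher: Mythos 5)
Your reduction is correct and takes a genuinely different route from the paper. The paper proves Lemma \ref{lem:HOIF-db-var} by brute force: it expands $\var^{\sff}(\hat{\tau}_{\adj,2}^{\dag}) = \var^{\sff}(\hat{\tau}_{\unadj}) + \var^{\sff}(\hat{\IIFF}_{\unadj,2,2}^{\dag}) - 2\cov^{\sff}(\hat{\tau}_{\unadj},\hat{\IIFF}_{\unadj,2,2}^{\dag})$, computes every index-pattern term exactly using Lemmas \ref{lem:var intermediate} and \ref{lem:cre_db} (the full expansion is relegated to an external note), and then massages the resulting exact expression into $\nu^{\sff\dag} + o(n^{-1})$. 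You instead exploit two structural facts --- exact translation equivariance of $\hat{\tau}_{\unadj}$ under CRE (so that $\hat{\tau}_{\adj,2}^{\dag}$ is equivariant and its variance may be computed at the centered outcomes $\tilde y_i = y_i(1)-\bar\tau$), and the exact algebraic identity $\hat{\tau}_{\adj,2}^{\dag} = \hat{\tau}_{\adj,2} + \hat{\tau}_{\unadj}\hat{S}$ with $\hat S$ outcome-free --- to inherit the leading term directly from Theorem \ref{thm:HOIF-CRE, randomization-based}(2) applied to $\{\tilde y_i\}$. This is shorter, makes transparent \emph{why} $\nu^{\sff\dag}$ is just $\nu^{\sff}$ evaluated at centered outcomes, and cleanly isolates the centering correction as a product of two small statistics. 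What it gives up is the exact finite-sample variance formula, which the paper explicitly values (``we have obtained the exact formula \ldots and approximation can be viewed as an afterthought'') and which underlies the oracle simulations.

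The one item you must still write out is the bound $\bbE^{\sff}[\hat S^4] = O(p^2/n^4)$, which you correctly flag as the main obstacle. It does go through: decompose $\hat S - \bbE^{\sff}\hat S$ into its linear part $-\tfrac{1}{n}\sum_i H_{i,i}\bigl(\tfrac{t_i}{\pi_1}-1\bigr)$ plus lower-order pieces (using $\sum_j H_{i,j}=0$ from Lemma \ref{lem:hat}) and a degenerate quadratic part with kernel $H_{i,j}/(n\pi_1)$; the dominant fourth-moment contributions are $\bigl(n^{-2}\sum_i H_{i,i}^2\bigr)^2 = O(p^4/n^6)$ and $n^{-4}\bigl(\sum_{i\neq j}H_{i,j}^2\bigr)^2 = O(p^2/n^4)$, both $O(p^2/n^4)$ when $p\le n$, with the CRE covariance corrections contributing only lower-order terms. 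Together with $\bbE^{\sff}[\hat{\tau}_{\unadj}^4]=O(n^{-2})$ at centered outcomes (a standard fourth-moment bound for the mean of a simple random sample of bounded, mean-zero values), your Cauchy--Schwarz steps then deliver $\var^{\sff}(\hat{\tau}_{\unadj}\hat S) = O(p/n^3)$ and a cross term of order $o(n^{-1})$, completing the proof. One cosmetic caveat: after centering, Assumption \ref{as:regularity conditions} holds with $B$ replaced by $2B$, which you note and which is harmless since all the $O(\cdot)$ statements only require some universal constant.
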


\begin{proof}
We use the following elementary decomposition.
\begin{align*}
& \var^{\sff} (\hat{\tau}_{\adj, 2}^{\dag}) = \var^{\sff} (\hat{\tau}_{\unadj}) + \var^{\sff} (\widehat{\IIFF}_{\unadj, 2, 2}^{\dag}) - 2 \cov^{\sff} (\hat{\tau}_{\unadj}, \widehat{\IIFF}_{\unadj, 2, 2}^{\dag}).
\end{align*}
where $ \var^{\sff} (\hat{\tau}_{\unadj})$ has been derived before. We only need to derive $\var^{\sff} (\widehat{\IIFF}_{\unadj, 2, 2}^{\dag})$ and $\cov^{\sff} (\hat{\tau}_{\unadj}, \widehat{\IIFF}_{\unadj, 2, 2}^{\dag})$.

Here we only give the exact form of $\var^{\sff} (\hat{\tau}_{\adj, 2}^{\dag})$, and more details of the derivation can be found in \href{https://github.com/Cinbo-Wang/HOIF-Car/var-db.pdf}{this GitHub link}, here $\bar{\tau}^{(2)} = \frac{1}{n}\sum_{i=1}^{n} y_{i}^{2}$,
\begin{align*}
    & \var^{\sff} (\hat{\tau}_{\adj, 2}^{\dag}) = \var^{\sff} (\hat{\tau}_{\unadj}) + \var^{\sff} (\widehat{\IIFF}_{\unadj, 2, 2}^{\dag}) - 2 \cov^{\sff} (\hat{\tau}_{\unadj}, \widehat{\IIFF}_{\unadj, 2, 2}^{\dag})\\
    = & \ \frac{\pi_{0}}{\pi_{1}} \frac{1}{n} V_{n} (y (1)) \\
    + & \ \frac{\bar{\tau}^{2}}{\pi_{1}^{4}} \frac{\pi_{0}}{\pi_{1}} \frac{p}{n^2} \left( \pi_{1} - \frac{\pi_{0}}{n - 1} \right) \left\{ \begin{aligned}
       & \frac{2n^6\pi_{1}^2+n^5(-(p+6)\pi_{1}^{3}-10\pi_{1}^2-10\pi_{1})}{(n-1)(n-2)^{2}(n-3)(n-4)(n-5)} \\
       & + \frac{n^4(-(13p+50)\pi_{1}^3+(21p+124)\pi_{1}^2+28\pi_{1}+12)}{(n-1)(n-2)^{2}(n-3)(n-4)(n-5)} \\
       & + \frac{n^3((10p+432)\pi_{1}^3+(69p-572)\pi_{1}^2-(102p+14)\pi_{1}-48)}{(n-1)(n-2)^{2}(n-3)(n-4)(n-5)} \\
       & + \frac{n^2((148p-856)\pi_{1}^3+(-390p+936)\pi_{1}^2+(150p-4)\pi_{1}+120p+60)}{(n-1)(n-2)^{2}(n-3)(n-4)(n-5)} \\
       & + \frac{n((-240p+480)\pi_{1}^3+(252p-480)\pi_{1}^2+336p\pi_{1}-360p-24)}{(n-1)(n-2)^{2}(n-3)(n-4)(n-5)} \\
       & + \frac{240p\pi_{1}^2-480p\pi_{1}+240p}{(n-1)(n-2)^{2}(n-3)(n-4)(n-5)}
    \end{aligned} \right\} \\
    & \ + \frac{1}{\pi_{1}^{4}} \frac{\pi_{0}}{\pi_{1}} \frac{p}{n^2} \left( \pi_{1} - \frac{\pi_{0}}{n - 1} \right) \left\{ \begin{aligned}
        & \frac{n^{3}(-4\pi_{1}^{3}+2\pi_{1})}{(n-2)(n-3)(n-4)(n-5)} \\
        & + \frac{n^{2}((p+54)\pi_{1}^{3}-(34+p)\pi_{1}^{2}+2\pi_{1}-4)}{(n-2)(n-3)(n-4)(n-5)} \\
       & + \frac{n((20p-120)\pi_{1}^{3}+(-32p+80)\pi_{1}^{2}+(14p-8)\pi_{1}+12)}{(n-2)(n-3)(n-4)(n-5)}\\
       & + \frac{-40p\pi_{1}^{2}+60p\pi_{1}-24p-8}{(n-2)(n-3)(n-4)(n-5)} 
    \end{aligned}\right\} \bar{\tau}^{(2)} \\
    & \ + \frac{1}{\pi_{1}^{4}} \frac{\pi_{0}}{\pi_{1}} \frac{1}{n^2} \left( \pi_{1} - \frac{\pi_{0}}{n - 1} \right) \left( \pi_{1} - \frac{2 \pi_{0}}{n - 2} \right) \left\{ \begin{aligned}
       & \frac{n^{3}(3\pi_{1}^{2}-2\pi_{1})}{(n-3)(n-4)(n-5)} \\
       & + \frac{n^{2}(11\pi_{1}^{2}-20\pi_{1}+6)}{(n-3)(n-4)(n-5)} \\
       & + \frac{n(20\pi_{1}^{2}-60\pi_{1}+42)}{(n-3)(n-4)(n-5)} 
    \end{aligned} \right\} \bar{\tau}^{2}  \sum_{i=1}^{n} H_{i,i}^{2} \\
    & \ + \frac{1}{\pi_{1}^{4}} \frac{\pi_{0}}{\pi_{1}} \frac{1}{n^2} \left( \pi_{1} - \frac{\pi_{0}}{n - 1} \right) \left\{ \begin{aligned}
        & \frac{n^{3}(-3\pi_{1}^{3}+5\pi_{1}^{2}-2\pi_{1})}{(n-2)(n-3)(n-4)(n-5)}\\
        & + \frac{n^{2}(-11\pi_{1}^{3}+20\pi_{1}^{2}-14\pi_{1}+4)}{(n-2)(n-3)(n-4)(n-5)} \\
       & + \frac{n(-20\pi_{1}^{3}+49\pi_{1}^{2}-44\pi_{1}+12}{(n-2)(n-3)(n-4)(n-5)}\\
       & + \frac{-20\pi_{1}^{2}+40\pi_{1}-16}{(n-2)(n-3)(n-4)(n-5)} 
    \end{aligned}\right\} \bar{\tau}^{(2)}  \sum_{i=1}^{n} H_{i,i}^{2} \\
    & \ + \frac{1}{\pi_{1}^{4}} \frac{\pi_{0}}{\pi_{1}} \frac{1}{n^2} \left( \pi_{1} - \frac{\pi_{0}}{n - 1} \right) \left\{ \begin{aligned}
        & \frac{-4n^6\pi_{1}^2+n^5((2p+20)\pi_{1}^3+20\pi_{1}^2+16\pi_{1})}{(n-1)(n-2)^2(n-3)(n-4)(n-5)} \\
        & + \frac{n^4((24p-32)\pi_{1}^3-(40p+180)\pi_{1}^2-48\pi_{1}-16)}{(n-1)(n-2)^2(n-3)(n-4)(n-5)}\\
        & + \frac{n^3(-(54p+284)\pi_{1}^3+(-80p+780)\pi_{1}^2+(176p+48)\pi_{1}+48)}{(n-1)(n-2)^2(n-3)(n-4)(n-5)} \\
        & + \frac{n^2((-180p+776)\pi_{1}^3+(672p-1256)\pi_{1}^2)}{(n-1)(n-2)^2(n-3)(n-4)(n-5)}\\
        & + \frac{n^2(-(336p+48)\pi_{1}-192p-16)}{(n-1)(n-2)^2(n-3)(n-4)(n-5)}\\
        & + \frac{n((400p-480)\pi_{1}^3+(-616p+640)\pi_{1}^2)}{(n-1)(n-2)^2(n-3)(n-4)(n-5)}\\
        & + \frac{n((-368p+32)\pi_{1}+576p-48)}{(n-1)(n-2)^2(n-3)(n-4)(n-5)}\\
        & + \frac{-320p\pi_{1}^2+720p\pi_{1}-384p+32}{(n-1)(n-2)^2(n-3)(n-4)(n-5)}\\
    \end{aligned}\right\} \\
    & \quad \times \bar{\tau} \sum_{i=1}^{n} H_{i,i} y_{i} (1) \\
     & \ + \frac{1}{\pi_{1}^{4}} \frac{\pi_{0}}{\pi_{1}} \frac{1}{n^2} \left( \pi_{1} - \frac{\pi_{0}}{n - 1} \right) \left\{ \begin{aligned}
        & \frac{n^{4}(-4\pi_{1}^{3}+4\pi_{1}^{2})}{(n-2)(n-3)(n-4)(n-5)} \\
        & + \frac{n^{3}(-10\pi_{1}^{3}+18\pi_{1}^{2}-16\pi_{1})}{(n-2)(n-3)(n-4)(n-5)}\\
        & + \frac{n^{2}(-18\pi_{1}^{3}+92\pi_{1}^{2}-48\pi_{1}+16)}{(n-2)(n-3)(n-4)(n-5)}\\
        & + \frac{n(40\pi_{1}^{3}+22\pi_{1}^{2}-160\pi_{1}+48)}{(n-2)(n-3)(n-4)(n-5)}\\
        & + \frac{40\pi_{1}^{2}-160\pi_{1}+128}{(n-2)(n-3)(n-4)(n-5)}
    \end{aligned}\right\} \bar{\tau} \sum_{i=1}^{n} H_{i,i}^2  y_{i} (1)\\
    & \ + \frac{1}{\pi_{1}^{4}} \frac{\pi_{0}}{\pi_{1}} \frac{1}{n^2} \left( \pi_{1} - \frac{\pi_{0}}{n - 1} \right) \left\{ \begin{aligned}
        & \frac{2n^{4}\pi_{1}^{3}}{(n-2)(n-3)(n-4)(n-5)} + \frac{n^{3}(6\pi_{1}^{3}-24\pi_{1}^{2})}{(n-2)(n-3)(n-4)(n-5)}\\
        & + \frac{n^{2}(-32\pi_{1}^{2}+88\pi_{1})}{(n-2)(n-3)(n-4)(n-5)} + \frac{n(40\pi_{1}^{2}-8\pi_{1}-96)}{(n-2)(n-3)(n-4)(n-5)}\\
        & + \frac{-80\pi_{1}+96}{(n-2)(n-3)(n-4)(n-5)}
    \end{aligned}\right\} \\
    & \quad \times \bar{\tau} \sum_{1\leq i\neq j\leq n} H_{i,i} H_{i,j} y_{j} (1) \\
    & \ + \left\{ \frac{1}{\pi_{1}^{2}} \frac{\pi_{0}}{\pi_{1}} \frac{n\pi_{1}-1}{n^2(n-1)} + \frac{1}{\pi_{1}^{4}} \frac{\pi_{0}}{\pi_{1}} \frac{1}{n^2} \left( \pi_{1} - \frac{\pi_{0}}{n - 1} \right) \left( \begin{aligned}
        & \frac{n^{4}(4\pi_{1}^{3}+6\pi_{1}^{2}-4\pi_{1})}{n(n-2)(n-3)(n-4)(n-5)} \\ 
        & + \frac{n^{3}(-(2p+68)\pi_{1}^{3}+(2p-12)\pi_{1}^{2}+12\pi_{1}+4)}{n(n-2)(n-3)(n-4)(n-5)} \\
        & + \frac{n^{2}((-38p+160)\pi_{1}^{3}+(52p+94)\pi_{1}^{2})}{n(n-2)(n-3)(n-4)(n-5)} \\
        & + \frac{n^{2}(-(20p+48)\pi_{1}-8)}{n(n-2)(n-3)(n-4)(n-5)} \\
        & + \frac{n(40p\pi_{1}^{3}+(42p-200)\pi_{1}^{2})}{n(n-2)(n-3)(n-4)(n-5)}\\
         & + \frac{n((-84p+24)\pi_{1}+24p+20)}{n(n-2)(n-3)(n-4)(n-5)}\\
        & + \frac{40p\pi_{1}^{2}-120p\pi_{1}+72p-16}{n(n-2)(n-3)(n-4)(n-5)}
    \end{aligned}\right) \right\}\\
    & \quad \times \sum_{i=1}^{n} H_{i,i} y_{i} (1)^{2} \\
    & \ + \left\{ \frac{1}{\pi_{1}^2} \frac{\pi_{0}}{\pi_{1}}\frac{-n^{2}\pi_{1}+n\pi_{0}+1}{n^3(n-1)} + \frac{1}{\pi_{1}^{4}} \frac{\pi_{0}}{\pi_{1}} \frac{1}{n^2} \left( \pi_{1} - \frac{\pi_{0}}{n - 1} \right) \left( \begin{aligned}
        & \frac{n^{5}(\pi_{1}^{3}-\pi_{1}^{2})}{n(n-2)(n-3)(n-4)(n-5)} \\
        & + \frac{n^{4}(4\pi_{1}^{3}-12\pi_{1}^{2}+8\pi_{1})}{n(n-2)(n-3)(n-4)(n-5)} \\
        & + \frac{n^{3}(15\pi_{1}^{3}+11\pi_{1}^{2}-8\pi_{1}-8)}{n(n-2)(n-3)(n-4)(n-5)} \\
        & + \frac{n^{2}(-20\pi_{1}^{3}-128\pi_{1}^{2}+108\pi_{1}+8)}{n(n-2)(n-3)(n-4)(n-5)} \\
        & + \frac{n(-54\pi_{1}^{2}+116\pi_{1}-112)}{n(n-2)(n-3)(n-4)(n-5)} \\
        & + \frac{120\pi_{1}^{2}+16}{n(n-2)(n-3)(n-4)(n-5)}
    \end{aligned}\right) \right\}\\
    & \quad \times \sum_{i=1}^{n} H_{i,i}^{2} y_{i} (1)^{2} \\
    & \ +  \frac{1}{\pi_{1}^{4}} \frac{\pi_{0}}{\pi_{1}} \frac{1}{n^2} \left( \pi_{1} - \frac{\pi_{0}}{n - 1} \right) \left( \begin{aligned}
        & \frac{-n^{6}\pi_{1}^{3}+n^5(-2\pi_{1}^3+12\pi_{1}^2)+n^4(15\pi_{1}^3-6\pi_{1}^2-44\pi_{1})}{n(n-1)(n-2)^2(n-3)(n-4)(n-5)}\\
        & + \frac{n^3(16\pi_{1}^3-124\pi_{1}^2+112\pi_{1}+48)}{n(n-1)(n-2)^2(n-3)(n-4)(n-5)}\\
        & +\frac{n^2(-44\pi_{1}^3+74\pi_{1}^2+124\pi_{1}-192)}{n(n-1)(n-2)^2(n-3)(n-4)(n-5)}\\ 
        & + \frac{n(-80\pi_{1}^3+316\pi_{1}^2-448\pi_{1}+240)}{n(n-1)(n-2)^2(n-3)(n-4)(n-5)}\\
        & + \frac{-80\pi_{1}^2+160\pi_{1}-96}{n(n-1)(n-2)^2(n-3)(n-4)(n-5)}
    \end{aligned}\right) \trace^2 \left( \hat{\bSigma}_{y} \hat{\bSigma}^{-} \right)\\
    & \ + \frac{1}{\pi_{1}^{4}} \frac{\pi_{0}}{\pi_{1}} \frac{1}{n^2} \left( \pi_{1} - \frac{\pi_{0}}{n - 1} \right) \left\{ \begin{aligned}
        & \frac{n^{5}(-\pi_{1}^{3}+\pi_{1}^{2})+n^{4}(\pi_{1}^{3}+2\pi_{1}^{2}-4\pi_{1})}{n(n-2)(n-3)(n-4)(n-5)}  \\
        & + \frac{n^{3}(4\pi_{1}^{3}-7\pi_{1}^{2}+4\pi_{1}+4)+n^{2}(4\pi_{1}^{2}-8)}{n(n-2)(n-3)(n-4)(n-5)} \\
        & + \frac{n(-16\pi_{1}+20)-16}{n(n-2)(n-3)(n-4)(n-5)} \\
    \end{aligned}\right\} \trace \left( \hat{\bSigma}_{y} \hat{\bSigma}^{-} \hat{\bSigma}_{y} \hat{\bSigma}^{-} \right)\\
    & \ + \frac{1}{\pi_{1}^{4}} \frac{\pi_{0}}{\pi_{1}} \frac{1}{n^2} \left( \pi_{1} - \frac{\pi_{0}}{n - 1} \right) \left\{ \begin{aligned}
        & \frac{-n^{5}\pi_{1}^{3}+n^{4}(15\pi_{1}^{3}-2\pi_{1}^{2})+n^{3}((8p-50)\pi_{1}^{3}-(6p+22)\pi_{1}^{2}+16\pi_{1})}{n(n-2)(n-3)(n-4)(n-5)}\\
        & + \frac{n^{2}((-20p+40)\pi_{1}^{3}+(-18p+104)\pi_{1}^{2}+(24p-40)\pi_{1}-16)}{n(n-2)(n-3)(n-4)(n-5)}\\
        & + \frac{n((60p-80)\pi_{1}^{2}+(-8p+8)\pi_{1}-24p+48)-40p\pi_{1}+24p-32}{n(n-2)(n-3)(n-4)(n-5)}
    \end{aligned}\right\}  \\
    & \quad \times \sum_{1\leq i\neq j\leq n} H_{i,j} y_{i} (1) y_{j} (1)  \\
    & \ + \frac{1}{\pi_{1}^{4}} \frac{\pi_{0}}{\pi_{1}} \frac{1}{n^2} \left( \pi_{1} - \frac{\pi_{0}}{n - 1} \right) \left\{ \begin{aligned}
        & \frac{n^{4}(-2\pi_{1}^{3}+2\pi_{1}^{2})+n^{3}(-6\pi_{1}^{3}+16\pi_{1}^{2}-12\pi_{1})}{n(n-2)(n-3)(n-4)(n-5)} \\
        & + \frac{n^{2}(14\pi_{1}^{2}-24\pi_{1}+16)+4n\pi_{1}-16}{n(n-2)(n-3)(n-4)(n-5)}
    \end{aligned} \right\} \sum_{1\leq i\neq j\leq n} H_{i,i} H_{i,j} y_{j} (1)^{2} \\
    & \ + \frac{1}{\pi_{1}^{4}} \frac{\pi_{0}}{\pi_{1}} \frac{1}{n^2} \left( \pi_{1} - \frac{\pi_{0}}{n - 1} \right) \left\{ \begin{aligned}
        & \frac{-2n^{5}\pi_{1}^{3}+n^{4}(-2\pi_{1}^{3}+20\pi_{1}^{2})}{n(n-2)(n-3)(n-4)(n-5)} \\
        & \frac{n^{3}(12\pi_{1}^{3}-64\pi_{1})}{n(n-2)(n-3)(n-4)(n-5)} \\
        & + \frac{n^{2}(-68\pi_{1}^{2}+56\pi_{1}+64)}{n(n-2)(n-3)(n-4)(n-5)}\\
         & + \frac{n(72\pi_{1}-96)+32}{n(n-2)(n-3)(n-4)(n-5)}
    \end{aligned} \right\} \sum_{1\leq i\neq j\leq n} H_{i,i} H_{i,j} y_{i} (1) y_{j} (1).
\end{align*}

Then $\var^{\sff} (\hat{\tau}_{\adj, 2}^{\dag})$ can be approximated as follows
\begin{align*}
     & \ \var^{\sff} (\hat{\tau}_{\adj, 2}^{\dag}) \\
     = & \ \frac{\pi_{0}}{\pi_{1}} \frac{1}{n} V_{n} (y (1)) + \left( 2 \frac{\pi_{0}}{\pi_{1}^{2}} \frac{1}{n^2} - \frac{\pi_{0}}{\pi_{1}} \frac{p}{n^3} \right) \bar{\tau}^{2} \sum_{i=1}^{n} H_{i,i} + \left( - 2 \frac{\pi_{0}}{\pi_{1}^{2}} \frac{1}{n^2} + 3 \frac{\pi_{0}}{\pi_{1}} \frac{1}{n^2} \right) \bar{\tau}^{2} \sum_{i=1}^{n} H_{i,i}^{2} \\
     & \ + \left( - 4 \frac{\pi_{0}}{\pi_{1}^{2}} \frac{1}{n^2} + 2 \frac{\pi_{0}}{\pi_{1}} \frac{p}{n^3} \right) \bar{\tau} \sum_{i=1}^{n} H_{i,i} y_{i} (1) + 4 \left( \frac{\pi_{0}}{\pi_{1}} \right)^2 \frac{1}{n^2} \bar{\tau} \sum_{i=1}^{n} H_{i,i}^{2} y_{i} (1) \\
     & \ + 2 \frac{\pi_{0}}{\pi_{1}} \frac{1}{n^{2}} \bar{\tau} \sum_{1\leq i\neq j\leq n} H_{i,i} H_{i,j} y_{j} (1) +  \frac{\pi_{0}}{\pi_{1}^{2}} \frac{1}{n^2}  \sum_{i=1}^{n} H_{i,i} y_{i} (1)^{2} - \left( \frac{\pi_{0}}{\pi_{1}^{2}} \frac{1}{n^{2}} + \left( \frac{\pi_{0}}{\pi_{1}} \right)^{2}  \frac{1}{n^2} \right) \sum_{i=1}^{n} H_{i,i}^{2} y_{i} (1)^{2} \\
     & \ - \frac{\pi_{0}}{\pi_{1}} \frac{1}{n^3} \trace^{2} \left( \hat{\bSigma}_{y} \hat{\bSigma}^{-} \right) + \left( \frac{\pi_{0}}{\pi_{1}} \right)^{2} \frac{1}{n^2} \trace \left( \hat{\bSigma}_{y} \hat{\bSigma}^{-} \hat{\bSigma}_{y} \hat{\bSigma}^{-} \right) - \frac{\pi_{0}}{\pi_{1}} \frac{1}{n^2} \sum_{1\leq i\neq j\leq n} H_{i,j} y_{i} (1) y_{j} (1) \\
     & \ - 2 \frac{\pi_{0}}{\pi_{1}} \frac{1}{n^2} \sum_{1\leq i\neq j\leq n} H_{i,i} H_{i,j} y_{i} (1) y_{j} (1) 
      + o \left( n^{-1} \right)  \\
    = & \ \frac{\pi_{0}}{\pi_{1}} \frac{1}{n} V_{n} (y (1)) + 2 \frac{\pi_{0}}{\pi_{1}^{2}} \frac{1}{n^2} \bar{\tau}^{2} \left( p - \sum_{i=1}^{n} H_{i,i}^{2} \right) + \frac{\pi_{0}}{\pi_{1}} \frac{1}{n^2} \bar{\tau}^{2} \left( 3 \sum_{i=1}^{n} H_{i,i}^{2} - \frac{p^{2}}{n} \right) \\
    & \ + 4 \frac{\pi_{0}}{\pi_{1}} \frac{1}{n^2} \bar{\tau} \sum_{i=1}^{n} H_{i,i} \left( \frac{\pi_{0}}{\pi_{1}} H_{i,i} - \frac{1}{\pi_{1}} \right) y_{i} (1) + 2 \frac{\pi_{0}}{\pi_{1}} \frac{1}{n^2} \bar{\tau} \left( \frac{p}{n} \sum_{i=1}^{n} H_{i,i} y_{i} (1) + \sum_{1\leq i\neq j \leq n} H_{i,i} H_{i,j} y_{j} (1) \right)\\
    & \ + \frac{\pi_{0}}{\pi_{1}} \frac{1}{n^2} \sum_{i=1}^{n} H_{i,i} \left( \frac{1}{\pi_{1}} - \frac{1+\pi_{0}}{\pi_{1}} H_{i,i} \right) y_{i} (1)^2 - \frac{\pi_{0}}{\pi_{1}} \frac{1}{n} \left( \frac{1}{n} \sum_{i=1}^{n} H_{i,i} y_{i} (1) \right)^{2} \\
    & \ + \left( \frac{\pi_{0}}{\pi_{1}} \right)^{2} \frac{1}{n^2} \sum_{i=1}^{n} \sum_{j=1}^{n} H_{i,j}^{2} y_{i} (1) y_{j} (1) - \frac{\pi_{0}}{\pi_{1}} \frac{1}{n^2} \sum_{1\leq i\neq j\leq n} H_{i,j} \left( 1 + 2 H_{j,j} \right) y_{i}(1) y_{j}(1) + o \left( n^{-1} \right) \\
    = & \  \frac{\pi_{0}}{\pi_{1}} \frac{1}{n} V_{n} (y (1)) + 2 \frac{\pi_{0}}{\pi_{1}} \left( 1 + \frac{\pi_{0}}{\pi_{1}} \right) \frac{1}{n^2} \bar{\tau}^{2} \left( p - \sum_{i=1}^{n} H_{i,i}^{2} \right) + \frac{\pi_{0}}{\pi_{1}} \frac{1}{n^2} \bar{\tau}^{2} \left( 3 \sum_{i=1}^{n} H_{i,i}^{2} - \frac{p^{2}}{n} \right) \\
    & \ - 4 \left( \frac{\pi_{0}}{\pi_{1}} \right)^{2} \frac{1}{n^2} \bar{\tau} \sum_{i=1}^{n} H_{i,i} \left( 1 - H_{i,i} \right) y_{i} (1) - 4 \frac{\pi_{0}}{\pi_{1}} \frac{1}{n^2} \bar{\tau} \sum_{i=1}^{n} H_{i,i} y_{i}(1) \\
    & \ + 2 \frac{\pi_{0}}{\pi_{1}} \frac{1}{n^2} \bar{\tau} \left( \frac{p}{n} \sum_{i=1}^{n} H_{i,i} y_{i} (1) + \sum_{1\leq i\neq j \leq n} H_{i,i} H_{i,j} y_{j} (1) \right)\\
    & \ + \frac{\pi_{0}}{\pi_{1}} \left( 1 + \frac{\pi_{0}}{\pi_{1}} \right) \frac{1}{n^2} \sum_{i=1}^{n} H_{i,i} \left( 1 - H_{i,i} \right) y_{i} (1)^2 - \left( \frac{\pi_{0}}{\pi_{1}} \right)^{2} \frac{1}{n^2} \sum_{i=1}^{n} H_{ii}^2 y_{i} (1)^2  - \frac{\pi_{0}}{\pi_{1}} \frac{1}{n} \left( \frac{1}{n} \sum_{i=1}^{n} H_{i,i} y_{i} (1) \right)^{2} \\
    & \ + \left( \frac{\pi_{0}}{\pi_{1}} \right)^{2} \frac{1}{n^2} \sum_{i=1}^{n} \sum_{j=1}^{n} H_{i,j}^{2} y_{i} (1) y_{j} (1) - \frac{\pi_{0}}{\pi_{1}} \frac{1}{n^2} \sum_{1\leq i\neq j\leq n} H_{i,j} \left( 1 + 2 H_{j,j} \right) y_{i}(1) y_{j}(1) + o \left( n^{-1} \right) \\
    = & \ \frac{\pi_{0}}{\pi_{1}} \frac{1}{n} \left\{ \frac{1}{n} \sum_{i=1}^{n} \left( y_{i}(1) - \bar{\tau} \right)^{2}\right\} - \frac{\pi_{0}}{\pi_{1}} \frac{1}{n} \left\{ \frac{1}{n} \sum_{1\leq i\neq j\leq n} H_{i,j} \left( 1 + 2 H_{j,j} \right) \left( y_{i}(1) - \bar{\tau} \right) \left( y_{j}(1) - \bar{\tau} \right) \right\}\\
    & \ + 2 \frac{\pi_{0}}{\pi_{1}} \frac{1}{n^2} \bar{\tau} \sum_{i=1}^{n} H_{i,i}^2 y_{i} (1) - \frac{\pi_{0}}{\pi_{1}} \frac{1}{n^2} \bar{\tau}^{2} \left( p + 2 \sum_{i=1}^{n} H_{i,i} ^2 \right) 
    + \frac{\pi_{0}}{\pi_{1}} \frac{1}{n^2} \bar{\tau}^{2} \left( 3 \sum_{i=1}^{n} H_{i,i}^2 \right)\\
    & \ + \frac{\pi_{0}}{\pi_{1}} \left\{ 1 + \frac{\pi_{0}}{\pi_{1}} \right\} \frac{1}{n^2} \sum_{i=1}^{n} H_{i,i} \left( 1 - H_{i,i} \right) \left( y_{i} - \bar{\tau} \right)^{2} - 2 \frac{\pi_{0}}{\pi_{1}} \frac{1}{n^2} \bar{\tau} \sum_{i=1}^{n} H_{i,i}^2 y_{i} (1) + \frac{\pi_{0}}{\pi_{1}} \frac{1}{n^2} \bar{\tau}^{2}  \left( p - \sum_{i=1}^{n} H_{i,i}^2\right)  \\
    & \ - \frac{\pi_{0}}{\pi_{1}} \left\{ \frac{1}{n} \sum_{i=1}^{n} \left( 1 + H_{i,i} \right) \left( y_{i} (1) -\bar{\tau} \right) \right\}^{2} \\
    & +  \left( \frac{\pi_{0}}{\pi_{1}} \right)^2 \frac{1}{n^2} \left\{ \sum_{i=1}^{n} \sum_{j=1}^{n} H_{i,j}^{2} \left( y_{i} - \bar{\tau} \right) \left( y_{j} - \bar{\tau} \right) - \sum_{i=1}^{n} H_{i,i}^2 \left( y_{i} - \bar{\tau} \right)^{2} \right\} + o \left( n^{-1} \right)\\
    = & \ \frac{\pi_{0}}{\pi_{1}} \frac{1}{n} \left\{ \frac{1}{n} \sum_{i=1}^{n} \left( y_{i}(1) - \bar{\tau} \right)^{2} - \frac{1}{n} \sum_{1\leq i\neq j\leq n} H_{i,j} \left( 1 + 2 H_{j,j} \right) \left( y_{i}(1) - \bar{\tau} \right) \left( y_{j}(1) - \bar{\tau} \right) \right\}\\
    & \ + \frac{\pi_{0}}{\pi_{1}} \left\{ 1 + \frac{\pi_{0}}{\pi_{1}} \right\} \frac{1}{n^2} \sum_{i=1}^{n} H_{i,i} \left( 1 - H_{i,i} \right) \left( y_{i} - \bar{\tau} \right)^{2} - \frac{\pi_{0}}{\pi_{1}} \left\{ \frac{1}{n} \sum_{i=1}^{n} \left( 1 + H_{i,i} \right) \left( y_{i} (1) -\bar{\tau} \right) \right\}^{2} \\
    & \ + \left( \frac{\pi_{0}}{\pi_{1}} \right)^2 \frac{1}{n^2} \left\{ \sum_{i=1}^{n} \sum_{j=1}^{n} H_{i,j}^{2} \left( y_{i} - \bar{\tau} \right) \left( y_{j} - \bar{\tau} \right) - \sum_{i=1}^{n} H_{i,i}^2 \left( y_{i} - \bar{\tau} \right)^{2} \right\} + o \left( n^{-1} \right) \\
    = & \ \left( \frac{\pi_{0}}{\pi_{1}} \right) \frac{1}{n} V_{n} \left[ (y_{i} (1) - \bar{\tau}) - \sum_{j \neq i} H_{j, i} (y_{j} (1) - \bar{\tau}) \right] \\
    & \ + \left( \frac{\pi_{0}}{\pi_{1}} \right)^{2} \frac{1}{n} \left\{ \frac{1}{n} \sum_{i = 1}^{n} H_{i, i} (1 - H_{i, i}) (y_{i} (1) - \bar{\tau})^{2}  + \frac{1}{n} \sum_{1 \leq i \neq j \leq n} H_{i, j}^{2} (y_{i} (1) - \bar{\tau}) (y_{j} (1) - \bar{\tau}) \right\} + o \left( n^{-1} \right) \\
    = & \ \nu^{\sff\dag} + o \left( n^{-1} \right).
\end{align*}
\end{proof}

Finally, we discuss the difference between $\hat{\tau}_{\rm ld}$ in \citet{lei2021regression} and $\hat{\tau}_{\adj, 2}$. For simplicity, we do not center $y$ or $\bx$ and focus only on the treatment arm. $\hat{\tau}_{\rm ld}$ in \citet{lei2021regression} starts with $\hat{\tau}_{\adj, 1}$ but with $\hat{\bSigma}$ replaced by $\hat{\bSigma}_{1}$, which is the standard OLS adjusted estimator, and then remove the diagonal component by using $\hat{\bSigma}$. Therefore, the difference between these two estimators takes the following form:
\begin{align*}
\frac{1}{n} \sum_{i = 1}^{n} \left( \frac{t_{i}}{\pi_{1}} - 1 \right) \bx_{i}^{\top} (\hat{\bSigma}^{-1} - \hat{\bSigma}_{1}^{-1}) \bx_{i} \frac{t_{i} y_{i}}{\pi_{1}},
\end{align*}
which has order $\Vert \hat{\bSigma}^{-1} - \hat{\bSigma}_{1}^{-1} \Vert_{\op} = O_{\bbP} (p^{1 / 2} / n^{3 / 2})$ by standard matrix concentration inequalities such as matrix Bernstein inequality \citep{tropp2015introduction}.

\section{Variance Estimators}
\label{app:variance estimators}

In this section, we briefly discuss how to estimate the variance of the proposed HOIF-motivated estimator $\hat{\tau}_{\adj, 2}$ or $\hat{\tau}_{\adj, 3}$. In terms of $\hat{\tau}_{\db}$ or equivalently $\hat{\tau}_{\adj, 2}^{\dag}$, we refer readers to \citet{lu2025debiased}. Since we have been focusing on $\bar{\tau}$ rather than the ATE, we decide not to consider the issue of non-identifiability of $\cov (y (0), y (1))$ in the main text, which has been studied quite extensively in works such as \citet{robins1988confidence} and \citet{aronow2014sharp}. See also \citet{rosenbaum2002covariance} including the comment by \citet{robins2002covariance} for more related discussions.

To simplify our discussion, we only estimate the main terms in $\var^{\sff} (\hat{\tau}_{\adj, 2})$ and $\var^{\sff} (\hat{\tau}_{\adj, 3})$ and ignore all the terms of $o (1 / n)$ under Assumption \ref{as:regularity conditions}, as indicated in Theorem \ref{thm:HOIF-CRE, design-based} and Proposition \ref{prop:bias free HOIF variance}. Since $\var^{\sff} (\hat{\tau}_{\adj, 2})$ and $\var^{\sff} (\hat{\tau}_{\adj, 3})$ share the same main term, for ease of exposition, we denote this main term as

\begin{equation*}
\begin{split}
\nu^{\sff} \coloneqq & \ \underbrace{\left( \frac{\pi_{0}}{\pi_{1}} \right) \frac{1}{n} V_{n} \left[ y_{i} (1) - \sum_{j \neq i} H_{j, i} y_{j} (1) \right]}_{\eqqcolon \, \nu^{\sff}_{1} \, = \, O \left( \frac{1}{n} \right)} \\
& + \underbrace{\left( \frac{\pi_{0}}{\pi_{1}} \right)^{2} \frac{1}{n} \left\{ \frac{1}{n} \sum_{i = 1}^{n} H_{i, i} (1 - H_{i, i}) y_{i} (1)^{2}  + \frac{1}{n} \sum_{1 \leq i \neq j \leq n} H_{i, j}^{2} y_{i} (1) y_{j} (1) \right\}}_{\eqqcolon \, \nu^{\sff}_{2} \, = \, O \left( \frac{1}{n} \frac{p}{n} \right)}.
\end{split}
\end{equation*}
Note that \begin{align*}
    \nu_1^{\sff} & \ = \left(\frac{\pi_0}{\pi_1}\right) \frac{1}{n} \frac{1}{n} \|P\left(y(1) - Hy(1) + \diag\{H\}y(1)\right)\|_2^2\\
    & \ = \left(\frac{\pi_0}{\pi_1}\right)\frac{1}{n}\frac{1}{n} \|P\left(I_n - H + \diag\{H\}\right)y(1)\|_2^2\\
    &\ = \left(\frac{\pi_0}{\pi_1}\right)\frac{1}{n}\frac{1}{n} \left(\sum_{i=1}^{n}B_{i,i}y_i(1)^2 + \sum_{1\leq i \neq j\leq n}B_{i,j}y_{i}(1)y_{j}(1)\right).
\end{align*}
where $P=I_n - \frac{1}{n}11^\top,M = P - H + P\diag\{H\}, B = M^\top M$. So we can give an unbiased variance estimator as follows
\begin{align*}
    \hat{\nu}^{\sff} \coloneqq &\ \left( \frac{\pi_{0}}{\pi_{1}} \right) \frac{1}{n} \frac{1}{n} \left(
    \sum_{i=1}^{n}B_{i,i}\frac{t_i y_i^2}{\pi_1} + \sum_{1\leq i\neq j \leq n}B_{i,j}\frac{t_i y_i}{\pi_1}\frac{t_j y_j}{\pi_1} \right) \\
    &\ + \left( \frac{\pi_{0}}{\pi_{1}} \right)^{2} \frac{1}{n}\left\{ \frac{1}{n} \sum_{i=1}^{n} H_{i, i} (1 - H_{i, i}) \frac{t_i y_{i}^{2}}{\pi_1}  +\frac{1}{n} \sum_{1\leq i\neq j\leq n} H_{i, j}^{2}\frac{t_i y_{i}}{\pi_1} \frac{t_j y_{j}}{\pi_j} \right\}
\end{align*}
However, this unbiased variance estimator is under-conservative when the sample size is small, as shown in the simulation to be found in Appendix \ref{app:sim}.

Note that we can also rewrite $\nu_{1}^{\sff}$ as follows:
    \begin{align*}
        \nu_{1}^{\sff} = \frac{\pi_0}{\pi_1}\frac{1}{n}\frac{1}{n}\sum_{i=1}^{n}\left(
        y_i(1) - \sum_{j\neq i}H_{j,i}y_j(1) -
        \frac{1}{n}\sum_{i=1}^{n}(1+H_{i,i})y_i(1)
        \right)^2
    \end{align*}
As a result, we can consider the following more conservative variance estimator
\begin{align*}
    \hat{\nu}^{'\sff}  = &\ \frac{\pi_0}{\pi_1}\frac{1}{n}\frac{1}{n}\sum_{i=1}^{n}\frac{t_i}{\pi_1}\left(
    y_i -\sum_{j\neq i}H_{j,i}\frac{t_jy_j}{\pi_1} - \frac{1}{n}\sum_{k=1}^{n}(1+H_{k,k})\frac{t_ky_k}{\pi_1}
    \right)^2 \\
    &\ + \left( \frac{\pi_{0}}{\pi_{1}} \right)^{2} \frac{1}{n^2}\left\{  \sum_{i=1}^{n} H_{i, i} (1 - H_{i, i})\frac{ t_iy_{i}^{2}}{\pi_1}  +\sum_{i\neq j} H_{i, j}^{2}\frac{t_iy_i}{\pi_1} \frac{t_jy_j}{\pi_1} \right\}.         
\end{align*}

We then have the following result regarding the means of $\hat{\nu}^{\sff}$ and $\hat{\nu}^{'\sff}$ as an estimator of $\nu^{\sff}$.

To compute $\bbE[\hat{\nu}^{\sff}]$, we divide it into two parts to facilitate calculations.
    \begin{align*}
         \hat{\nu}^{\sff} \coloneqq &\ \underbrace{ \left( \frac{\pi_{0}}{\pi_{1}} \right) \frac{1}{n} \frac{1}{n} \left(
         \sum_{i=1}^{n}B_{i,i}\frac{t_i y_i^2}{\pi_1} + \sum_{1\leq i\neq j \leq n}B_{i,j}\frac{t_i y_i}{\pi_1}\frac{t_j y_j}{\pi_1} \right)}_{\eqqcolon \, \hat{\nu}^{\sff}_{1} } \\
         &\ + \underbrace{ \left( \frac{\pi_{0}}{\pi_{1}} \right)^{2} \frac{1}{n}\left\{ \frac{1}{n} \sum_{i=1}^{n} H_{i, i} (1 - H_{i, i}) \frac{t_i y_{i}^{2}}{\pi_1}  +\frac{1}{n} \sum_{1\leq i\neq j\leq n} H_{i, j}^{2}\frac{t_i y_{i}}{\pi_1} \frac{t_j y_{j}}{\pi_j} \right\}}_{\eqqcolon \, \hat{\nu}^{\sff}_{2}}.
    \end{align*}
    For the first term,
    \begin{align*}
        & \bbE\left[ \hat{\nu}^{\sff}_{1}\right] = \bbE \left[ \left( \frac{\pi_{0}}{\pi_{1}} \right) \frac{1}{n} \frac{1}{n} \left( \sum_{i=1}^{n}B_{i,i}\frac{t_i y_i^2}{\pi_1} + \sum_{1\leq i\neq j \leq n}B_{i,j}\frac{t_i y_i}{\pi_1}\frac{t_j y_j}{\pi_1} \right) \right] \\
        = & \ \left( \frac{\pi_{0}}{\pi_{1}} \right) \frac{1}{n} \frac{1}{n} \left(  \sum_{i=1}^{n} B_{i,i} y_{i}^{2} \frac{1}{\pi_{1}} \bbE \left[ t_{i} \right] +  \sum_{1\leq i\neq j \leq n} B_{i,j} y_{i}y_{j} \frac{1}{\pi_{1}^{2}} \bbE \left[ t_{i}t_{j} \right] \right)\\
        = & \ \left( \frac{\pi_{0}}{\pi_{1}} \right) \frac{1}{n} \frac{1}{n} \left( \sum_{i=1}^{n} B_{i,i} y_{i}^{2} + \sum_{1\leq i\neq j \leq n} B_{i,j} y_{i}y_{j}  \right) + O \left( n^{-2} \right) \\
        = & \ \nu^{\sff}_{1} + O \left( n^{-2} \right),
    \end{align*}
    the second term,
    \begin{align*}
        & \bbE \left[ \hat{\nu}^{\sff}_{2} \right] = \bbE \left[ \left( \frac{\pi_{0}}{\pi_{1}} \right)^{2} \frac{1}{n}\left\{ \frac{1}{n} \sum_{i=1}^{n} H_{i, i} (1 - H_{i, i}) \frac{t_i y_{i}^{2}}{\pi_1}  + \frac{1}{n} \sum_{1\leq i\neq j\leq n} H_{i, j}^{2} \frac{t_i y_{i}}{\pi_1} \frac{t_j y_{j}}{\pi_j} \right\} \right] \\
        = & \ \left( \frac{\pi_{0}}{\pi_{1}} \right)^{2} \frac{1}{n} \left\{ \frac{1}{n} \sum_{i=1}^{n} H_{i, i} (1 - H_{i, i}) y_{i}^{2} \frac{1}{\pi_{1}} \bbE \left[ t_{i} \right] + \frac{1}{n} \sum_{1\leq i\neq j\leq n} H_{i, j}^{2} y_{i}y_{j} \frac{1}{\pi_{1}^{2}} \bbE \left[ t_{i}t_{j} \right] \right\}\\
        = & \ \left( \frac{\pi_{0}}{\pi_{1}} \right)^{2} \frac{1}{n} \left\{ \frac{1}{n} \sum_{i=1}^{n} H_{i, i} (1 - H_{i, i}) y_{i}^{2} + \frac{1}{n} \sum_{1\leq i\neq j\leq n} H_{i, j}^{2} y_{i}y_{j} \right\}  + O \left( n^{-2} \right)\\
        = & \ \nu^{\sff}_{1} + O \left( n^{-2} \right).
    \end{align*}
    Combine with the two terms, we have 
    \begin{align*}
        \bbE[\hat{\nu}^{\sff}] & = \nu^{\sff} + O\left( n^{-2} \right).
    \end{align*}
    Next we compute the expectation $\bbE[\hat{\nu}^{'\sff}]$, 
    \begin{align*}
        \hat{\nu}^{'\sff} \coloneqq &\ \underbrace{ \frac{\pi_0}{\pi_1}\frac{1}{n}\frac{1}{n}\sum_{i=1}^{n}\frac{t_i}{\pi_1}\left( y_i -\sum_{j\neq i}H_{j,i}\frac{t_jy_j}{\pi_1} - \frac{1}{n}\sum_{k=1}^{n}(1+H_{k,k})\frac{t_ky_k}{\pi_1} \right)^2}_{\eqqcolon \, \hat{\nu}^{'\sff}_{1} } \\
         &\ + \underbrace{ \left( \frac{\pi_{0}}{\pi_{1}} \right)^{2} \frac{1}{n}\left\{ \frac{1}{n} \sum_{i=1}^{n} H_{i, i} (1 - H_{i, i}) \frac{t_i y_{i}^{2}}{\pi_1}  +\frac{1}{n} \sum_{1\leq i\neq j\leq n} H_{i, j}^{2}\frac{t_i y_{i}}{\pi_1} \frac{t_j y_{j}}{\pi_j} \right\}}_{\eqqcolon \, \hat{\nu}^{'\sff}_{2}}.
    \end{align*}
    We first give the decomposition of $\nu^{\sff}_{1}$ ,
    \begin{align*}
       &  \nu^{\sff}_{1} =  \frac{\pi_{0}}{\pi_{1}} \frac{1}{n} \frac{1}{n} \sum_{i = 1}^{n} \left\{ y_{i} + H_{i, i} y_{i} - \sum_{j = 1}^{n} H_{i, j} y_{j} - \left( \frac{1}{n} \sum_{k = 1}^{n} (1 + H_{k, k}) y_{k} \right) \right\}^{2}\\
       = & \ \frac{\pi_{0}}{\pi_{1}} \frac{1}{n} \frac{1}{n}  \sum_{i = 1}^{n} \left\{ \begin{aligned}
        & y_{i}^{2} + H_{i, i}^{2} y_{i}^{2} + \sum_{j = 1}^{n} H_{i, j}^{2} y_{j}^{2} + \sum_{1\leq j\neq k\leq n} H_{i,j} H_{i,k} y_{j}y_{k} + \frac{1}{n^{2}} \sum_{j = 1}^{n} (1 + H_{j, j})^{2} y_{j}^{2}\\
        & + \frac{1}{n^{2}} \sum_{1\leq j\neq k\leq n} (1 + H_{j, j}) (1+H_{k,k}) y_{j}y_{k} + 2 H_{i, i} y_{i}^{2} - 2 \sum_{j = 1}^{n} H_{i, j} y_{i}y_{j}\\
        & - 2 \frac{1}{n} \sum_{j = 1}^{n} (1 + H_{j, j}) y_{i}y_{j} - 2 \sum_{j = 1}^{n} H_{i,i} H_{i, j} y_{i}y_{j} - 2  \frac{1}{n} \sum_{j = 1}^{n} H_{i,i} (1 + H_{j, j}) y_{i}y_{j} \\
        & + 2 \frac{1}{n} \sum_{j = 1}^{n} H_{i, j} (1 + H_{j, j}) y_{j}^{2} +  2 \frac{1}{n} \sum_{1\leq j\neq k\leq n} H_{i,j} (1 + H_{k, k}) y_{j}y_{k}
        \end{aligned} \right\}\\
       = & \ \frac{\pi_{0}}{\pi_{1}} \frac{1}{n} \frac{1}{n} \left\{ \begin{aligned}
           & \sum_{i = 1}^{n} y_{i}^{2} + \sum_{1\leq i\neq j\leq n} H_{i, j}^{2} y_{j}^{2} + \sum_{1\leq i\neq j\neq k\leq n} H_{i,j} H_{i,k} y_{j}y_{k} +  \frac{1}{n} \sum_{i = 1}^{n} (1 + H_{i, i})^{2} y_{i}^{2} \\
           & + \frac{1}{n} \sum_{1\leq i\neq j\leq n} (1 + H_{i, i}) (1+H_{j,j}) y_{i}y_{j} - 2 \sum_{1\leq i\neq j\leq n} H_{i,j} y_{i}y_{j} - 2 \frac{1}{n} \sum_{i=1}^{n} (1+H_{i,i}) y_{i}^{2}\\
           & - 2 \frac{1}{n} \sum_{1\leq i\neq j\leq n} (1+H_{j,j}) y_{i}y_{j} + 2 \frac{1}{n} \sum_{1\leq i\neq j\leq n} H_{i,j} (1+H_{j,j}) y_{j}^{2} + 2 \frac{1}{n} \sum_{1\leq i\neq j\leq n} H_{i,j} (1+H_{j,j}) y_{i}y_{j} \\
           & +  2 \frac{1}{n} \sum_{1\leq i\neq j\neq k\leq n} H_{i,j} (1+H_{k,k}) y_{j}y_{k}
       \end{aligned} \right\} \\
       = & \ \frac{\pi_{0}}{\pi_{1}} \frac{1}{n} \frac{1}{n} \left\{ \begin{aligned}
            & \sum_{i = 1}^{n} y_{i}^{2} + \sum_{1\leq i\neq j\leq n} H_{i, j}^{2} y_{j}^{2} + \sum_{1\leq i\neq j\neq k\leq n} H_{i,j} H_{i,k} y_{j}y_{k} + \frac{1}{n} \sum_{1\leq i\neq j\leq n} (1 + H_{i, i}) (1+H_{j,j}) y_{i}y_{j}\\
            & - 2 \sum_{1\leq i\neq j\leq n} H_{i,j} y_{i}y_{j} - 2 \frac{1}{n} \sum_{1\leq i\neq j\leq n} (1+H_{j,j}) y_{i}y_{j} + 2 \frac{1}{n} \sum_{1\leq i\neq j\neq k\leq n} H_{i,j} (1+H_{k,k}) y_{j}y_{k}
       \end{aligned} \right\} \\
       & + O\left( n^{-2} \right),
       \end{align*}
    the expectation of $\hat{\nu}^{'\sff}_{1}$ is as follows,
    \begin{align*}
        & \bbE \left[ \hat{\nu}^{\sff}_{1} \right] = \frac{\pi_0}{\pi_1} \frac{1}{n} \frac{1}{n} \bbE \left[ \sum_{i=1}^{n} \frac{t_i}{\pi_1} \left( y_{i} + H_{i,i} \frac{t_{i}y_{i}}{\pi_{1}} - \sum_{j=1}^{n} H_{j,i} \frac{t_{j}y_{j}}{\pi_{1}} - \frac{1}{n} \sum_{k=1}^{n} (1+H_{k,k}) \frac{t_{k}y_{k}}{\pi_{1}} \right)^{2} \right] \\
        = & \ \frac{\pi_{0}}{\pi_{1}} \frac{1}{n} \frac{1}{n} \bbE \left[ \begin{aligned}
        & \sum_{i=1}^{n} y_{i}^{2} \frac{t_i}{\pi_1} + \sum_{1\leq i\neq j\leq n} H_{i,j}^{2} y_{j}^{2} \frac{t_{i}t_{j}}{\pi_{1}^{3}} + \sum_{1\leq i\neq j \neq k\leq n} H_{i,j} H_{i,k} y_{j}y_{k} \frac{t_{i}t_{j}t_{k}}{\pi_{1}^{3}} + \frac{1}{n^{2}} \sum_{i=1}^{n} (1+H_{i,i})^{2} y_{i}^{2} \frac{t_{i}}{\pi_{1}^{3}} \\
        & + \frac{1}{n^{2}} \sum_{1\leq i\neq j\leq n} (1+H_{j,j})^{2} y_{j}^{2} \frac{t_{i}t_{j}}{\pi_{1}^{3}} + 2 \frac{1}{n^{2}} \sum_{1\leq i\neq j\leq n} (1+H_{i,i}) (1+H_{j,j}) y_{i}y_{j} \frac{t_{i}t_{j}}{\pi_{1}^{3}} \\ 
        & + \frac{1}{n^{2}} \sum_{1\leq i\neq j \neq k\leq n} (1+H_{j,j}) (1+H_{k,k}) y_{j}y_{k} \frac{t_{i}t_{j}t_{k}}{\pi_{1}^{3}} - 2 \sum_{1\leq i\neq j\leq n} H_{i,j} y_{i}y_{j} \frac{t_{i}t_{j}}{\pi_{1}^{2}} \\ 
        & - 2 \frac{1}{n} \sum_{i=1}^{n} (1+H_{i,i}) y_{i}^{2} \frac{t_{i}}{\pi_{1}^{2}} - 2 \frac{1}{n} \sum_{1\leq i\neq j\leq n} (1+H_{j,j}) y_{i}y_{j} \frac{t_{i}t_{j}}{\pi_{1}^{2}} + 2 \frac{1}{n} \sum_{1\leq i\neq j\leq n} H_{i,j} (1+H_{j,j}) y_{j}^{2} \frac{t_{i}t_{j}}{\pi_{1}^{3}}\\
        & + 2 \frac{1}{n} \sum_{1\leq i\neq j\leq n} H_{i,j} (1+H_{j,j}) y_{i}y_{j} \frac{t_{i}t_{j}}{\pi_{1}^{3}} + 2 \frac{1}{n} \sum_{1\leq i\neq j\neq k\leq n} H_{i,j} (1+H_{k,k}) y_{j}y_{k} \frac{t_{i}t_{j}t_{k}}{\pi_{1}^{3}}\end{aligned} \right]\\
        = & \ \frac{\pi_{0}}{\pi_{1}} \frac{1}{n} \frac{1}{n} \left\{ \begin{aligned}
        & \sum_{i=1}^{n} y_{i}^{2} + \sum_{1\leq i\neq j\leq n} H_{i,j}^{2} y_{j}^{2} \frac{1}{\pi_{1}} + \sum_{1\leq i\neq j \neq k\leq n} H_{i,j} H_{i,k} y_{j}y_{k} + \frac{1}{n^{2}} \sum_{i=1}^{n} (1+H_{i,i})^{2} y_{i}^{2} \frac{1}{\pi_{1}^{2}} \\
        & + \frac{1}{n^{2}} \sum_{1\leq i\neq j\leq n} (1+H_{j,j})^{2} y_{j}^{2} \frac{1}{\pi_{1}} +  2 \frac{1}{n^{2}} \sum_{1\leq i\neq j\leq n} (1+H_{i,i}) (1+H_{j,j}) y_{i}y_{j} \frac{1}{\pi_{1}} \\
        & + \frac{1}{n^{2}} \sum_{1\leq i\neq j \neq k\leq n} (1+H_{j,j}) (1+H_{k,k}) y_{j}y_{k} - 2 \sum_{1\leq i\neq j\leq n} H_{i,j} y_{i}y_{j} - 2 \frac{1}{n} \sum_{i=1}^{n} (1+H_{i,i}) y_{i}^{2} \frac{1}{\pi_{1}} \\
        & - 2 \frac{1}{n} \sum_{1\leq i\neq j\leq n} (1+H_{j,j}) y_{i}y_{j} +  2 \frac{1}{n} \sum_{1\leq i\neq j\leq n} H_{i,j} (1+H_{j,j}) y_{j}^{2} \frac{1}{\pi_{1}} \\
        & + 2 \frac{1}{n} \sum_{1\leq i\neq j\leq n} H_{i,j} (1+H_{j,j}) y_{i}y_{j} \frac{1}{\pi_{1}} + 2 \frac{1}{n} \sum_{1\leq i\neq j\neq k\leq n} H_{i,j} (1+H_{k,k}) y_{j}y_{k} \end{aligned}  \right\}\\
        & \ + O \left( n^{-2} \right) \\
        = & \ \frac{\pi_{0}}{\pi_{1}} \frac{1}{n} \frac{1}{n} \left\{ \begin{aligned}
            & \sum_{i=1}^{n} y_{i}^{2} + \frac{1}{\pi_{1}} \sum_{1\leq i\neq j\leq n} H_{i,j}^{2} y_{j}^{2} +  \sum_{1\leq i\neq j \neq k\leq n} H_{i,j} H_{i,k} y_{j}y_{k} + \frac{1}{\pi_{1}} \frac{1}{n} \sum_{i=1}^{n} (1+H_{i,i})^{2} y_{i}^{2} \\
            & + \frac{1}{n} \sum_{1\leq i\neq j\leq n} (1+H_{i,i}) (1+H_{j,j}) y_{i}y_{j} - 2 \sum_{1\leq i\neq j\leq n} H_{i,j} y_{i}y_{j} - 2 \frac{1}{\pi_{1}} \frac{1}{n} \sum_{i=1}^{n} (1+H_{i,i}) y_{i}^{2} \\
            & - 2 \frac{1}{n} \sum_{1\leq i\neq j\leq n} (1+H_{j,j}) y_{i}y_{j} + 2 \frac{1}{\pi_{1}} \frac{1}{n} \sum_{1\leq i\neq j\leq n} H_{i,j} (1+H_{j,j}) y_{j}^{2} \\
            & + 2 \frac{1}{\pi_{1}} \frac{1}{n} \sum_{1\leq i\neq j\leq n} H_{i,j} (1+H_{j,j}) y_{i}y_{j} + 2 \frac{1}{n} \sum_{1\leq i\neq j\neq k\leq n} H_{i,j} (1+H_{k,k}) y_{j}y_{k} \\ 
            & + \frac{1}{\pi_{1}} \frac{\pi_{0}}{\pi_{1}} \frac{1}{n^{2}} \sum_{i=1}^{n} (1+H_{i,i})^{2} y_{i}^{2} + 2 \frac{\pi_{0}}{\pi_{1}} \frac{1}{n^{2}} \sum_{1\leq i\neq j\leq n} (1+H_{i,i}) (1+H_{j,j}) y_{i}y_{j}
        \end{aligned}  \right\} \\
        & + O \left( n^{-2} \right) \\
        = & \ \frac{\pi_{0}}{\pi_{1}} \frac{1}{n} \frac{1}{n} \left\{ \begin{aligned}
            & \sum_{i = 1}^{n} y_{i}^{2} + \frac{1}{\pi_{1}} \sum_{1\leq i\neq j\leq n} H_{i, j}^{2} y_{j}^{2} + \sum_{1\leq i\neq j\neq k\leq n} H_{i,j} H_{i,k} y_{j}y_{k} + \frac{1}{n} \sum_{1\leq i\neq j\leq n} (1 + H_{i, i}) (1+H_{j,j}) y_{i}y_{j}\\
            & - 2 \sum_{1\leq i\neq j\leq n} H_{i,j} y_{i}y_{j} - 2 \frac{1}{n} \sum_{1\leq i\neq j\leq n} (1+H_{j,j}) y_{i}y_{j} + 2 \frac{1}{n} \sum_{1\leq i\neq j\neq k\leq n} H_{i,j} (1+H_{k,k}) y_{j}y_{k}
          \end{aligned} \right\}\\
          & + O\left( n^{-2} \right)\\
        = & \ \frac{\pi_{0}}{\pi_{1}} \frac{1}{n} \frac{1}{n} \left\{ \begin{aligned}
            & \sum_{i = 1}^{n} y_{i}^{2} + \sum_{1\leq i\neq j\leq n} H_{i, j}^{2} y_{j}^{2} + \sum_{1\leq i\neq j\neq k\leq n} H_{i,j} H_{i,k} y_{j}y_{k} + \frac{1}{n} \sum_{1\leq i\neq j\leq n} (1 + H_{i, i}) (1+H_{j,j}) y_{i}y_{j}\\
            & - 2 \sum_{1\leq i\neq j\leq n} H_{i,j} y_{i}y_{j} - 2 \frac{1}{n} \sum_{1\leq i\neq j\leq n} (1+H_{j,j}) y_{i}y_{j} + 2 \frac{1}{n} \sum_{1\leq i\neq j\neq k\leq n} H_{i,j} (1+H_{k,k}) y_{j}y_{k}\\
            & - \sum_{1\leq i\neq j\leq n} H_{i, j}^{2} y_{j}^{2} +  \frac{1}{\pi_{1}} \sum_{1\leq i\neq j\leq n} H_{i, j}^{2} y_{j}^{2}
          \end{aligned} \right\} \\
          & + O\left( n^{-2} \right)\\
        = & \ \nu^{\sff}_{1} + \left( \frac{\pi_{0}}{\pi_{1}} \right)^{2} \frac{1}{n^{2}} \sum_{1\leq i\neq j\leq n} H_{i,j}^{2} y_{j}^{2} + O\left( n^{-2} \right).
    \end{align*}
    With $\hat{\nu}^{\sff}_{2} = \hat{\nu}^{'\sff}_{2} $, we have
    \begin{align*}
        \bbE[\hat{\nu}^{'\sff}] & = \nu^{\sff} + \left( \frac{\pi_{0}}{\pi_{1}} \right)^{2} \frac{1}{n^{2}} \sum_{1\leq i\neq j\leq n} H_{i,j}^{2} y_{j}^{2} + O\left( n^{-2} \right).
    \end{align*}

    Note that the extra term $\left( \frac{\pi_{0}}{\pi_{1}} \right)^{2} \frac{1}{n^{2}} \sum_{1\leq i\neq j\leq n} H_{i,j}^{2} y_{j}^{2}$ is non-negative with probability 1, so $\hat{\nu}^{'\sff}$ always has larger expectation than $\hat{\nu}^{\sff}$, hence more conservative. The order of $$\left( \frac{\pi_{0}}{\pi_{1}} \right)^{2} \frac{1}{n^{2}} \sum_{1\leq i\neq j\leq n} H_{i,j}^{2} y_{j}^{2}$$ is $O \left( \frac{p}{n^{2}} \right)$, so $\hat{\nu}^{'\sff}$ is still a consistent variance estimator when $p = o (n)$.

\section{A More Detailed Numerical Comparison of the Asymptotic Variances}
\label{app:var comparison}

In this section, as alluded to in Remark \ref{rem:var comparison}, we provide a more detailed comparison between $\nu^{\sff}$ and $\nu^{\sff\dag}$, the dominating part of the variances of $\hat{\tau}_{\adj, 2}$ and $\hat{\tau}_{\adj, 2}^{\dag}$ (and equivalently $\hat{\tau}_{\db}$), respectively. It is quite obvious that when the potential outcomes are more or less homogeneous, $\nu^{\sff\dag}$ should be smaller. To make a more detailed comparison, we first define $\nu^{\sff}_{c}$, which is the same as $\nu^{\sff\dag}$, except that we replace $\bar{\tau}$ by an arbitrary constant $c$. Thus $\nu^{\sff}_{c = \bar{\tau}} \equiv \nu^{\sff\dag}$ whereas $\nu^{\sff}_{c = 0} \equiv \nu^{\sff}$. $\nu^{\sff}_{c}$ can be similarly decomposed into $\nu^{\sff}_{c, 1}$ and $\nu^{\sff}_{c, 2}$:
\begin{align*}
\nu^{\sff}_{c, 1} \coloneqq & \ \frac{1}{n} \sum_{i = 1}^{n} \left\{ y_{i} (1) - c - \sum_{j \neq i} H_{i, j} (y_{j} (1) - c) - \frac{1}{n} \sum_{l = 1}^{n} \left( y_{l} (1) - c - \sum_{k \neq l} H_{l, k} (y_{k} (1) - c) \right) \right\}^{2} \\
= & \ \frac{1}{n} \sum_{i = 1}^{n} \left\{ y_{i} (1) - \sum_{j \neq i} H_{i, j} y_{j} (1) - \frac{1}{n} \sum_{l = 1}^{n} (1 + H_{l, l}) y_{l} (1) - c \left( H_{i, i} - \frac{p}{n} \right) \right\}^{2},
\end{align*}
and
\begin{align*}
\nu^{\sff}_{c, 2} \coloneqq & \ \frac{\pi_{0}}{\pi_{1}} \left\{ \frac{1}{n} \sum_{i = 1}^{n} H_{i, i} (1 - H_{i, i}) (y_{i} (1) - c)^{2} + \frac{1}{n} \sum_{1 \leq i \neq j \leq n} H_{i, j}^{2} (y_{i} (1) - c) (y_{j} (1) - c) \right\}.
\end{align*}

We next take derivatives with respect to $c$ in both terms:
\begin{align*}
\frac{\diff \nu^{\sff}_{c, 1}}{\diff c} = & - \frac{2}{n} \sum_{i = 1}^{n} \left\{ y_{i} (1) - \sum_{j \neq i} H_{i, j} y_{j} (1) - \frac{1}{n} \sum_{l = 1}^{n} (1 + H_{l, l}) y_{l} (1) - c \left( H_{i, i} - \frac{p}{n} \right) \right\} \left( H_{i, i} - \frac{p}{n} \right) \\
= & \ 2 c \frac{1}{n} \sum_{i = 1}^{n} \left( H_{i, i} - \frac{p}{n} \right)^{2} - \frac{2}{n} \sum_{i = 1}^{n} \left\{ y_{i} (1) - \sum_{j \neq i} H_{i, j} y_{j} (1) - \frac{1}{n} \sum_{l = 1}^{n} (1 + H_{l, l}) y_{l} (1) \right\} \left( H_{i, i} - \frac{p}{n} \right)
\end{align*}
and
\begin{align*}
\frac{\diff \nu^{\sff}_{c, 2}}{\diff c} = & - \frac{\pi_{0}}{\pi_{1}} \left\{ \frac{2}{n} \sum_{i = 1}^{n} H_{i, i} (1 - H_{i, i}) (y_{i} (1) - c) + \frac{2}{n} \sum_{1 \leq i \neq j \leq n} H_{i, j}^{2} (y_{i} (1) - c) \right\} \\
= & - \frac{\pi_{0}}{\pi_{1}} \left\{ \begin{aligned}
    & \frac{2}{n} \sum_{i = 1}^{n} H_{i, i} (1 - H_{i, i}) y_{i} (1) + \frac{2}{n} \sum_{1 \leq i \neq j \leq n} H_{i, j}^{2} y_{i} (1) \\
    & - c \left( \frac{2}{n} \sum_{i = 1}^{n} H_{i, i} (1 - H_{i, i}) + \frac{2}{n} \sum_{1 \leq i \neq j \leq n} H_{i, j}^{2} \right) 
\end{aligned} \right\} \\
= & - \frac{\pi_{0}}{\pi_{1}} \left\{ \frac{2}{n} \sum_{i = 1}^{n} H_{i, i} (1 - H_{i, i}) y_{i} (1) + \frac{2}{n} \sum_{1 \leq i \neq j \leq n} H_{i, j}^{2} y_{i} (1) - c \left( \frac{4 p}{n} - \frac{4}{n} \sum_{i = 1}^{n} H_{i, i}^{2} \right) \right\} \\
= & \ 2 c \frac{\pi_{0}}{\pi_{1}} \left( \frac{2 p}{n} - \frac{2}{n} \sum_{i = 1}^{n} H_{i, i}^{2} \right) - 2 \frac{\pi_{0}}{\pi_{1}} \left\{ \frac{1}{n} \sum_{i = 1}^{n} H_{i, i} (1 - H_{i, i}) y_{i} (1) + \frac{1}{n} \sum_{1 \leq i \neq j \leq n} H_{i, j}^{2} y_{i} (1) \right\}.
\end{align*}

Since obviously the second derivatives of the above terms with respect to $c$ are both non-negative, by setting the sum of the first derivatives to zero, we find the global minimum of $\nu^{\sff}_{c}$:
\begin{align*}
& \ 0 \equiv \frac{1}{2} \left( \frac{\diff \nu^{\sff}_{c, 1}}{\diff c} + \frac{\diff \nu^{\sff}_{c, 2}}{\diff c} \right) \\
\Rightarrow & \ c^{\ast} = \frac{\left\{ \begin{array}{c}
\frac{\pi_{0}}{\pi_{1}} \left\{ \frac{1}{n} \sum\limits_{i = 1}^{n} H_{i, i} (1 - H_{i, i}) y_{i} (1) + \frac{1}{n} \sum\limits_{1 \leq i \neq j \leq n} H_{i, j}^{2} y_{i} (1) \right\} \\
+ \, \frac{1}{n} \sum\limits_{i = 1}^{n} \left\{ y_{i} (1) - \sum\limits_{j \neq i} H_{i, j} y_{j} (1) - \frac{1}{n} \sum\limits_{l = 1}^{n} (1 + H_{l, l}) y_{l} (1) \right\} \left( H_{i, i} - \frac{p}{n} \right)
\end{array} \right\}}{\left\{ \frac{\pi_{0}}{\pi_{1}} \left( \frac{2 p}{n} - \frac{2}{n} \sum\limits_{i = 1}^{n} H_{i, i}^{2} \right) + \frac{1}{n} \sum\limits_{i = 1}^{n} \left( H_{i, i} - \frac{p}{n} \right)^{2} \right\}}.
\end{align*}

When $c^{\ast} \approx 0$ but $\bar{\tau} \neq 0$, $\nu^{\sff}$ should be smaller than $\nu^{\sff\dag}$. $c^{\ast} \approx 0$ amounts to the following criterion:
\begin{align}
& \ \frac{\pi_{0}}{\pi_{1}} \left\{ \frac{1}{n} \sum\limits_{i = 1}^{n} H_{i, i} (1 - H_{i, i}) y_{i} (1) + \frac{1}{n} \sum\limits_{1 \leq i \neq j \leq n} H_{i, j}^{2} y_{i} (1) \right\} \nonumber \\
\approx & - \frac{1}{n} \sum\limits_{i = 1}^{n} \left\{ H_{i, i} y_{i} (1) - \sum\limits_{j \neq i} H_{i, i} H_{i, j} y_{j} (1) - H_{i, i} \bar{\tau} - H_{i, i} \frac{1}{n} \sum\limits_{l = 1}^{n} H_{l, l} y_{l} (1) \right\} \nonumber \\
= & \ \left( \frac{p}{n} - 1 \right) \frac{1}{n} \sum_{i = 1}^{n} H_{i, i} y_{i} (1) + \frac{p}{n} \bar{\tau} - \frac{1}{n} \sum_{i = 1}^{n} H_{i, i}^{2} y_{i} (1) + \frac{1}{n} \sum\limits_{i = 1}^{n} H_{i, i} \sum_{j = 1}^{n} H_{i, j} y_{j} (1) \nonumber \\
\Rightarrow & \ \bar{\tau} \approx \frac{1}{p} \sum_{i = 1}^{n} \left\{ \frac{\pi_{0}}{\pi_{1}} H_{i, i} (1 - H_{i, i}) + \frac{n - p}{n} H_{i, i} + \left( 1 - \frac{\pi_{0}}{\pi_{1}} \right) H_{i, i}^{2} \right\} y_{i} (1) \\
& + \frac{1}{p} \sum_{i = 1}^{n} \sum_{j = 1}^{n} \left( \frac{\pi_{0}}{\pi_{1}} H_{i, j} - H_{i, i} \right) H_{i, j} y_{j} (1) \nonumber \\
\Rightarrow & \ \frac{1}{n} \sum_{i = 1}^{n} y_{i} (1) \approx \frac{1}{n} \sum_{i = 1}^{n} \underbrace{\frac{n}{p} \left\{ \left( \frac{\pi_{0}}{\pi_{1}} + \frac{n - p}{n} \right) H_{i, i} + \left( 1 - \frac{2 \pi_{0}}{\pi_{1}} \right) H_{i, i}^{2} + \sum_{j = 1}^{n} \left( \frac{\pi_{0}}{\pi_{1}} H_{j, i} - H_{j, j} \right) H_{j, i} \right\}}_{\eqqcolon \, \omega_{i}} y_{i} (1). \label{constraint}
\end{align}
Thus if $\bH$ is such that $\omega_{i} \approx 1$, $c^{\ast} \approx 0$. Let $\bm{\omega} \coloneqq (\omega_{1}, \cdots, \omega_{n})^{\top}$. Then we have the following

\begin{conjecture}
\label{conj}
The data generating process for which $\nu^{\sff} < \nu^{\sff\dag}$ and $\nu^{\sff} < \var (\hat{\tau}_{\unadj})$ is not vacuous.
\end{conjecture}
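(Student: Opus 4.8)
The plan is to prove Conjecture \ref{conj} by a constructive, computer-assisted argument that exploits the convex-quadratic structure of $c\mapsto\nu^{\sff}_{c}$ exposed in the preceding derivation. First I would record that the second derivatives of $\nu^{\sff}_{c,1}$ and $\nu^{\sff}_{c,2}$ in $c$ are both nonnegative and cannot vanish simultaneously when $0<p<n$: the first equals $\frac{2}{n}\sum_{i}(H_{i,i}-p/n)^{2}$, which vanishes only if every $H_{i,i}=p/n$, while the second is proportional to $\frac{2p}{n}-\frac{2}{n}\sum_{i}H_{i,i}^{2}\ge 0$ (using $\trace(\bH)=p$ and $0\le H_{i,i}\le 1$), which vanishes only if every $H_{i,i}\in\{0,1\}$, and the two cannot both hold for $0<p<n$. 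Hence $c\mapsto\nu^{\sff}_{c}$ is a strictly convex quadratic, say $\nu^{\sff}_{c}=A\,(c-c^{\ast})^{2}+\nu^{\sff}_{c^{\ast}}$ with $A>0$ and $c^{\ast}$ the minimizer computed above. Since $\nu^{\sff}=\nu^{\sff}_{c=0}$ and $\nu^{\sff\dag}=\nu^{\sff}_{c=\bar{\tau}}$, the first target inequality is \emph{equivalent} to the scalar condition $\bar{\tau}\,(\bar{\tau}-2c^{\ast})>0$; in particular it holds whenever $c^{\ast}=0$ and $\bar{\tau}\neq0$.

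Next I would observe that, for a fixed design $\bX$ (hence fixed $\bH$ and fixed weights $\omega_{i}$), the numerator of $c^{\ast}$ is a homogeneous linear functional of $\by(1)$, so the locus $\{c^{\ast}=0\}$ is exactly the hyperplane $\mathcal{H}\coloneqq\{\by(1):\sum_{i}(1-\omega_{i})y_{i}(1)=0\}$ recorded in \eqref{constraint}. On $\mathcal{H}$ the first inequality $\nu^{\sff}<\nu^{\sff\dag}$ then collapses to the single requirement $\bar{\tau}\neq0$, which removes only the codimension-two slice $\mathcal{H}\cap\{\bar{\tau}=0\}$ and so leaves an $(n-1)$-dimensional family of candidate outcome vectors, provided the $\omega_{i}$ are not all equal (so that $\mathcal{H}\not\subseteq\{\bar{\tau}=0\}$).

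It then remains to secure the second inequality $\nu^{\sff}<\var^{\sff}(\hat{\tau}_{\unadj})$ on $\mathcal{H}$ with $\bar{\tau}\neq0$. By \eqref{criterion} this is $\var^{\sff}(\hat{\tau}_{\unadj})-\nu^{\sff}_{1}>\nu^{\sff}_{2}$, and it is most readily arranged in a regime with $p=o(n)$ and genuinely prognostic covariates, where (by the discussion following Theorem \ref{thm:HOIF-CRE, randomization-based}) the HOIF adjustment strictly lowers the asymptotic variance as soon as the projection of $\by(1)$ onto the centered covariates is nondegenerate. Concretely, I would fix a small $p$ and a design with heterogeneous leverage scores $H_{i,i}$, pick $\by(1)\in\mathcal{H}$ with $\bar{\tau}\neq0$ and a nonzero covariate-explained component, and evaluate $\nu^{\sff}$, $\nu^{\sff\dag}$ and $\var^{\sff}(\hat{\tau}_{\unadj})$ in closed form via the exact formulas of Theorem \ref{thm:HOIF-CRE, randomization-based} and Proposition \ref{prop:db}. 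Because both target inequalities are strict and the three functionals are continuous in $(\bX,\by(1))$, one instance verified (numerically, or symbolically over the rationals) to satisfy them already establishes non-vacuousness; to present a full data generating process one replicates this block self-similarly as $n\to\infty$ so that the two strict gaps persist.

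The main obstacle I anticipate is the coupling of the two constraints. Enforcing $c^{\ast}=0$ pins $\by(1)$ to the hyperplane $\mathcal{H}$, which is dictated entirely by the design through $\bH$, whereas the improvement condition $\var^{\sff}(\hat{\tau}_{\unadj})-\nu^{\sff}_{1}>\nu^{\sff}_{2}$ involves the quadratic, leverage-weighted term $\nu^{\sff}_{2}$ and is sensitive to the same $\bH$. Demonstrating that both requirements are jointly met for a single $(\bX,\by(1))$ with $\bar{\tau}\neq0$ is exactly the nontrivial content; upgrading from one certified instance to a guarantee for \emph{every} large $n$ would likely require choosing $\bX$ so that $\omega_{i}\approx1$ for most $i$ (making $\mathcal{H}$ nearly unconstrained in $\bar{\tau}$) while retaining enough covariate signal to keep $\nu^{\sff}_{1}$ strictly below $\var^{\sff}(\hat{\tau}_{\unadj})$, and it is precisely this simultaneous control that the computer-assisted search is intended to certify.
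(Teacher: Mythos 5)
Your proposal is correct in substance and lands in the same place as the paper: both arguments are ultimately computer-assisted existence proofs that exhibit a single witness $(\bX,\by(1))$ and invoke continuity/strictness to conclude non-vacuousness. The difference is in how the search is organized. The paper writes all three quantities as explicit quadratic forms $\bv^{\top}Q_{0}\bv=\var^{\sff}(\hat{\tau}_{\unadj})$, $\bv^{\top}Q_{1}\bv=\nu^{\sff\dag}$, $\bv^{\top}Q_{2}\bv=\nu^{\sff}$ in the outcome vector and solves the quadratically constrained program \eqref{program} (minimizing $\nu^{\sff}$ subject to $\nu^{\sff}\le\nu^{\sff\dag}-0.1$, $\nu^{\sff}\le\var^{\sff}(\hat{\tau}_{\unadj})-0.05$, $\var^{\sff}(\hat{\tau}_{\unadj})\ge 0.01$) with \texttt{nloptr}. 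You instead exploit the strict convexity of $c\mapsto\nu^{\sff}_{c}$ — your verification that the two second derivatives cannot vanish simultaneously for $0<p<n$ is correct and is a point the paper leaves implicit — to replace the quadratic constraint $\nu^{\sff}<\nu^{\sff\dag}$ by the exact scalar condition $\bar{\tau}(\bar{\tau}-2c^{\ast})>0$, and then restrict the search to the hyperplane $\{c^{\ast}=0\}$ so that only the second inequality remains to be arranged. This buys a lower-dimensional, partly linear feasible region and a cleaner sufficient condition; the paper's formulation buys directness (no case analysis, and it does not discard the feasible outcomes with $c^{\ast}\neq 0$ but $\bar{\tau}(\bar{\tau}-2c^{\ast})>0$, which your restriction to $\mathcal{H}$ would miss — though you correctly note the restriction is only sufficient).

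The one thing to be explicit about: as written, your argument is a search strategy, not yet a proof. Non-vacuousness is established only once a concrete $(\bX,\by(1))$ is produced and the two strict inequalities are certified on it; you defer this to a numerical or symbolic check without exhibiting the instance, which is exactly the step the paper performs by actually running the optimizer (and it is why the paper keeps the statement labeled a conjecture rather than a proposition). Your closing worry about extending a single certified block to all large $n$ is not needed for the statement as posed — one finite instance, or your self-similar replication of it, already suffices.
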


We prove the above conjecture using a computer-assisted proof -- as a result, we decide not to upgrade its status to Lemma/Proposition/Theorem. Finding such a data generating process with a computer-assisted strategy is a bit contrived, but we do not yet have a more interpretable condition on when the desired criterion holds.

The computer-assisted proof can be formulated as a simple optimization problem. We simply design an optimization program to minimize $\nu^{\sff}$ while making sure $\nu^{\sff} < \nu^{\sff\dag}$ and $\nu^{\sff} < \var (\hat{\tau}_{\unadj})$. With this desiderata in mind, we eventually solve the mathematical program \eqref{program} using the \texttt{nloptr} \texttt{R} package, which is an \texttt{R} interface to the nonlinear optimization python package \href{https://nlopt.readthedocs.io/en/latest/}{\texttt{NLopt}} \citep{NLopt}.

To introduce the mathematical program \eqref{program}, we need to define some additional notation. We first define $P_c \coloneqq \mathbf{I}_n - \frac{1}{n}\mathbbm{1}\mathbbm{1}^\top, D_h \coloneqq \diag (\bH), D_{h(1-h)} = \diag (\bH \circ (1 - \bH)), D_{h^2} \coloneqq \diag (\bH \circ \bH)$. Then $\var (\hat{\tau}_{\unadj}) = \frac{\pi_0}{\pi_1} \frac{1}{n^2} \bv^\top P_c \bv\coloneqq \bv^\top Q_0 \bv$, as now $\bv$ is the vector of potential outcomes in the treatment group. We similarly write $\var (\hat{\tau}_{\db})$ in the matrix form
\begin{align*}
    \nu_1^\sff = &\ \left(\frac{\pi_0}{\pi_1}\right)\frac{1}{n} \bv^\top P_c \left(\bI_n - \bH + D_h\right)P_c \left(\bI_n - \bH +D_h\right)P_c \bv = \bv^\top Q_{11} \bv, \\
    \nu_2^\sff = &\ \left(\frac{\pi_0}{\pi_1}\right)^2 \left(
    \frac{1}{n} \bv^\top P_c D_{h(1-h)}P_c \bv + \frac{1}{n} \bv^\top P_c \bH^{\circ 2} P_c \bv - \frac{1}{n} \bv^\top P_c D_{h^2}P_c \bv    \right) = \bv^\top Q_{12} \bv, \\
    \nu^{\sff\dag} = &\ \bv^\top Q_1 \bv.
\end{align*}
where 
\begin{align*}
Q_1 & = Q_{11} + Q_{12} \\
Q_{11} & = \left(\frac{\pi_0}{\pi_1}\right)\frac{1}{n} P_c \left(\bI_n - \bH + D_h\right)P_c \left(\bI_n - \bH +D_h\right)P_c, \\
Q_{12} & = \left(\frac{\pi_0}{\pi_1}\right)^2 \frac{1}{n} \left(
P_c D_{h(1-h)}P_c + P_c\bH^{\circ 2} P_c - P_c D_{h^2}P_c \right).
\end{align*}
Similarly, we can express $\nu^{\sff}$ in the matrix form
\begin{align*}
    \nu^{\sff} = &\ \bv^\top Q_2 \bv  \\
    Q_2  = &\ Q_{21} + Q_{22}, \\
    Q_{21}  =&\ \frac{\pi_0}{\pi_1}\frac{1}{n^2}\left(\bI_n - \bH + D_h\right)P_c \left(\bI_n - \bH + D_h\right), \\
    Q_{22} = &\ \left(\frac{\pi_0}{\pi_1}\right)^2\frac{1}{n^2}\left(D_h - 2D_{h^2}+\bH^{\circ 2} \right).
\end{align*}
Then we solve the following
\begin{equation}
\label{program}
\begin{split}
    \min_{\bv \in \bbR^{n}} &\ \bv^\top Q_2 \bv  \\
    \textrm{s.t.} &\ \bv^\top \left(Q_2 - Q_1\right)\bv + 0.1 \leq 0, \\
    &\ \bv^\top \left(Q_2 - Q_0\right)\bv + 0.05 \leq 0, \\
    &\ -\bv^\top Q_0 \bv + 0.01 \leq 0.   
\end{split}
\end{equation}

We fix $\pi_{1} = 0.5$ and $\bx \sim t_3(\bm{0}, \bSigma)$, the centered multivariate $t$-distribution with shape $\bSigma$, with $\bSigma_{i,j} = 0.1^{|i-j|}$, for $i,j=1,\cdots,p$, and vary the following in the simulation:
\begin{itemize}
    \item sample size: $n\in\{50,100,200,500,1000,2000\}$
    \item covariate dimension: $\alpha \coloneqq \frac{p}{n} \in \{0.05,0.1,0.2,\cdots,0.7\}$, and $p=\lceil n \alpha \rceil$.
\end{itemize}
The potential outcomes are generated according to $y_i (1) = \bv + \epsilon_i$ for $i = 1, \cdots, n$, where the exogenous error distribution $\epsilon$ follows univariate $t_{3}$ distribution. The relative efficiencies based on the exact and approximate variances of different estimators, benchmarked by $\var^{\sff} (\hat{\tau}_{\unadj})$, are displayed in Figure \ref{fig:oracle_var_opt}. It is quite obvious that $\var^{\sff} (\hat{\tau}_{\adj, 2})$ is generally lower than $\var^{\sff} (\hat{\tau}_{\adj, 2}^{\dag})$, despite not centering the potential outcomes. 
\begin{figure}[H]
    \centering
    \includegraphics[width=\linewidth,page=1]{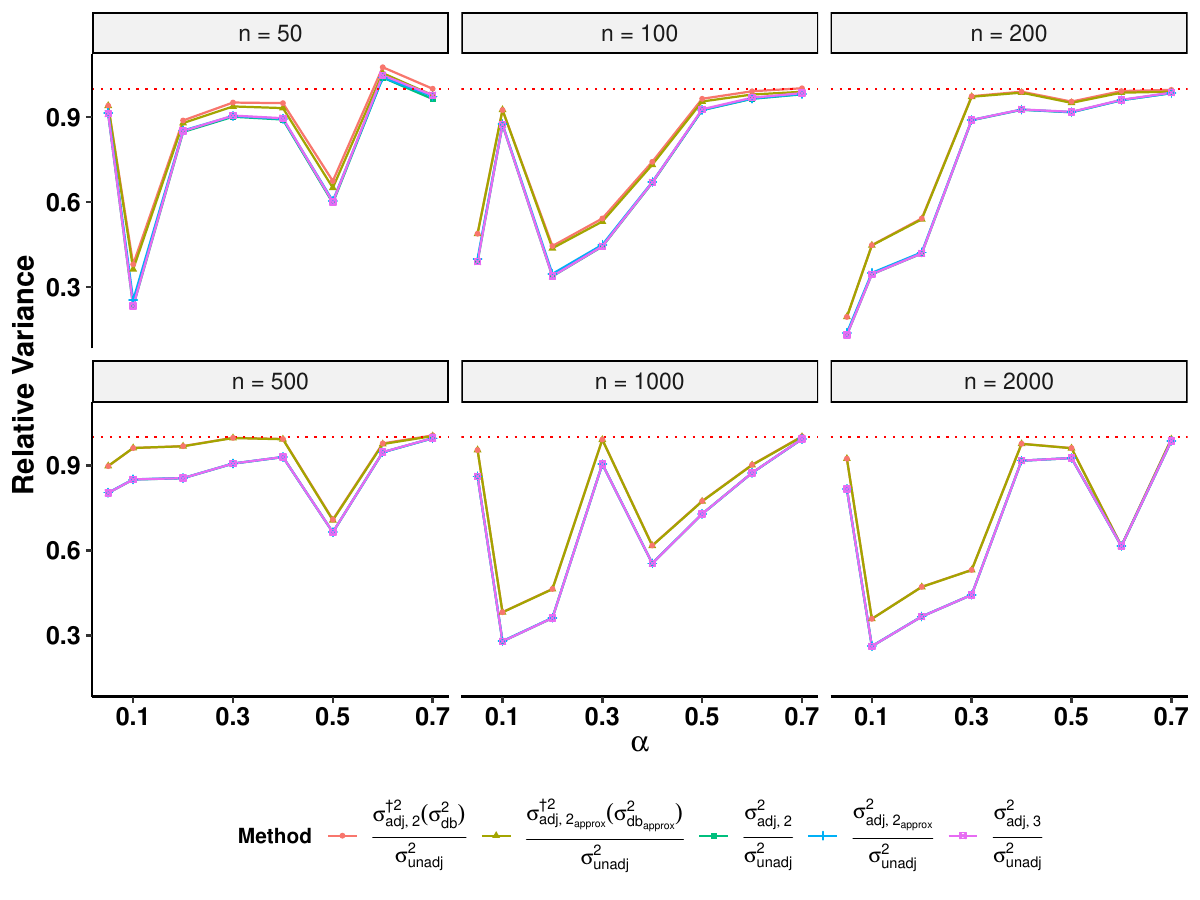}
    \caption{Relative efficiency of $\hat{\tau}_{\adj, 2}^{\dag}$ (or equivalently $\hat{\tau}_{\db}$), $\hat{\tau}_{\adj, 2}$, $\hat{\tau}_{\adj, 3}$ based on exact and approximate formula.}
    \label{fig:oracle_var_opt}
\end{figure}

Finally, it is worth pointing out that another possible approach to finding such data generating processes, in particular in actual RCT data analysis, is via the ``conditional generative learning'' paradigm \citep{athey2024using}, which we leave to future work.

\section{Preparatory Technical Results}
\label{app:technical lemmas}

In this section, we state several useful but well known technical results useful in the above proofs. We first state the following useful results regarding the projection ``hat'' matrix $\bH$.

\begin{lemma}
\label{lem:hat}
The following assertions hold for $\bH$:
\begin{align*}
\bH^{2} = \bH, \trace (\bH) \equiv p \wedge n, \text{ and } \mathbbm{1}^{\top} \bH \bv \equiv \sum_{i = 1}^{n} \sum_{j = 1}^{n} H_{i, j} v_{j} \equiv 0 \text{ for any $\bv \coloneqq (v_{1}, \cdots, v_{n})^{\top}$}.
\end{align*}
\end{lemma}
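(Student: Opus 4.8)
The plan is to exploit the factored form $\bH = \bX_{c} \hat{\bSigma}^{-} \bX_{c}^{\top}$ with $\hat{\bSigma} = \bX_{c}^{\top} \bX_{c}$, and to lean on three ingredients: (i) the defining identity of the (Moore--Penrose) inverse $\hat{\bSigma}^{-} \hat{\bSigma} \hat{\bSigma}^{-} = \hat{\bSigma}^{-}$, (ii) cyclic invariance of the trace, and (iii) the single algebraic fact that centering annihilates the all-ones vector, namely $\mathbbm{1}^{\top} \bX_{c} = \mathbf{0}^{\top}$ (each column of $\bX_{c}$ sums to zero since $\sum_{i} ((\bx_{i})_{k} - \bar{\bx}_{k}) = 0$).

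For idempotency I would compute directly, inserting $\bX_{c}^{\top} \bX_{c} = \hat{\bSigma}$ and then applying the pseudoinverse identity:
\[
\bH^{2} = \bX_{c} \hat{\bSigma}^{-} (\bX_{c}^{\top} \bX_{c}) \hat{\bSigma}^{-} \bX_{c}^{\top} = \bX_{c} (\hat{\bSigma}^{-} \hat{\bSigma} \hat{\bSigma}^{-}) \bX_{c}^{\top} = \bX_{c} \hat{\bSigma}^{-} \bX_{c}^{\top} = \bH.
\]
When $\hat{\bSigma}$ is invertible (the case guaranteed under Assumption~\ref{as:regularity conditions}) the bracket is simply $\hat{\bSigma}^{-1}$ and the same chain applies verbatim. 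For the trace I would again cycle,
\[
\trace (\bH) = \trace (\bX_{c} \hat{\bSigma}^{-} \bX_{c}^{\top}) = \trace (\hat{\bSigma}^{-} \bX_{c}^{\top} \bX_{c}) = \trace (\hat{\bSigma}^{-} \hat{\bSigma}),
\]
which equals $\trace (\bI_{p}) = p$ whenever $\hat{\bSigma}$ is invertible; since invertibility forces $p \leq n$, this is precisely $p \wedge n$. Finally, the annihilation property is immediate from (iii): since $\mathbbm{1}^{\top} \bX_{c} = \mathbf{0}^{\top}$, we get $\mathbbm{1}^{\top} \bH = (\mathbbm{1}^{\top} \bX_{c}) \hat{\bSigma}^{-} \bX_{c}^{\top} = \mathbf{0}^{\top}$, so $\mathbbm{1}^{\top} \bH \bv = \sum_{i=1}^{n} \sum_{j=1}^{n} H_{i,j} v_{j} = 0$ for every $\bv$.

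The only genuine subtlety is bookkeeping around the pseudoinverse when $\hat{\bSigma}$ is singular, so the trace equals $\mathrm{rank}(\bX_{c})$ rather than $p$ outright. The cleanest remedy is to recognize $\bH$ as the symmetric idempotent orthogonal projector onto $\mathrm{col}(\bX_{c})$, whereupon all three claims reduce to standard facts about projections, with $\trace(\bH) = \mathrm{rank}(\bX_{c}) = p \wedge n$. Under the paper's working Assumption~\ref{as:regularity conditions} this complication evaporates, and each of the three identities is a one-line computation as displayed above; I therefore expect no real obstacle, only the need to state the invertibility caveat cleanly.
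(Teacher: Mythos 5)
Your proof is correct and follows essentially the same route as the paper, which simply cites the standard facts about projection matrices for the first two claims and notes $\sum_{i}(\bx_{i}-\bar{\bx})=0$ (equivalently $\mathbbm{1}^{\top}\bX_{c}=\mathbf{0}^{\top}$) for the third. Your explicit handling of the pseudoinverse and the rank caveat is a welcome bit of extra care, but under the paper's standing assumption that $\hat{\bSigma}$ is invertible and $p<n$ it changes nothing.
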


\begin{proof}
The first two assertions are simple facts of projection ``hat'' matrices. The third claim follows by observing that $\sum_{i = 1}^{n} \left( \bx_{i} - \bar{\bx} \right) \equiv 0$.
\end{proof}

The next lemma characterizes low-order moments of the treatment assignment indicators under CRE.
\begin{lemma}
\label{lem:CRE}
Under CRE, we have
\begin{equation}
\label{cre cov}
\bbE \left[ \prod_{i = 1}^{j} t_{i} \right] = \pi_{1} \prod_{i = 1}^{j - 1} \frac{n \pi_{1} - i}{n - i} = \pi_{1} \prod_{i = 1}^{j - 1} \frac{n_{1} - i}{n - i}.
\end{equation}
\end{lemma}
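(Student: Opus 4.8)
The plan is to exploit the fact that each treatment indicator is binary, so that the product $\prod_{i=1}^{j} t_{i}$ is itself an indicator of the event that subjects $1, \ldots, j$ are \emph{all} assigned to treatment. Hence $\bbE[\prod_{i=1}^{j} t_{i}] = \bbP(t_{1} = \cdots = t_{j} = 1)$, and the entire task reduces to computing this probability under the CRE sampling scheme, which selects the $n_{1} = n \pi_{1}$ treated units uniformly at random among all $\binom{n}{n_{1}}$ possibilities. This is a textbook sampling-without-replacement computation, so I do not anticipate any genuine obstacle; the only care required is in bookkeeping the index shifts when converting between the $n_{1}$ and $n \pi_{1}$ forms of the statement.

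First I would compute the probability by sequential conditioning, which avoids binomial-coefficient algebra entirely. Before any assignment there are $n_{1}$ treatment slots among $n$ units, so $\bbP(t_{1} = 1) = n_{1}/n = \pi_{1}$. Conditioning on $t_{1} = \cdots = t_{i-1} = 1$, exactly $i-1$ treatment slots have been used up among $i-1$ units already placed, leaving $n_{1} - (i-1)$ treatment slots among the remaining $n - (i-1)$ units; by exchangeability the conditional probability that the next unit is treated is $\bbP(t_{i} = 1 \mid t_{1} = \cdots = t_{i-1} = 1) = \frac{n_{1} - (i-1)}{n - (i-1)}$. Multiplying these conditional probabilities through the chain rule gives
\begin{align*}
\bbE \left[ \prod_{i = 1}^{j} t_{i} \right] = \prod_{i = 1}^{j} \frac{n_{1} - (i - 1)}{n - (i - 1)} = \prod_{k = 0}^{j - 1} \frac{n_{1} - k}{n - k}.
\end{align*}

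Finally I would split off the $k = 0$ factor, which equals $n_{1}/n = \pi_{1}$, and reindex the remaining product by $i = k$ running from $1$ to $j - 1$, yielding $\bbE[\prod_{i=1}^{j} t_{i}] = \pi_{1} \prod_{i=1}^{j-1} \frac{n_{1} - i}{n - i}$. Substituting $n_{1} = n \pi_{1}$ in the numerators delivers the equivalent form $\pi_{1} \prod_{i=1}^{j-1} \frac{n \pi_{1} - i}{n - i}$, establishing both expressions in \eqref{cre cov} simultaneously. (One could equally count directly, writing the probability as $\binom{n-j}{n_{1}-j}/\binom{n}{n_{1}}$ and simplifying the factorials, but the conditioning argument is cleaner and makes the telescoping structure transparent.)
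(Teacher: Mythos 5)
Your proof is correct and is essentially the same argument as the paper's: both recognize $\prod_{i=1}^{j} t_{i}$ as the indicator that units $1,\ldots,j$ are all treated and compute $\bbP(t_{1}=\cdots=t_{j}=1)$ by sequential conditioning under sampling without replacement, giving the telescoping product $\pi_{1}\prod_{i=1}^{j-1}\frac{n_{1}-i}{n-i}$. Your bookkeeping of the index shift and the substitution $n_{1}=n\pi_{1}$ are both handled correctly.
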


The following result on the Bernoulli sampling is elementary. We state it nonetheless for completeness.
\begin{lemma}
\label{lem:Bernoulli}
Under the Bernoulli sampling, we have, $t_{i} \overset{\rm i.i.d.}{\sim} \mathrm{Bernoulli} (\pi_{1})$, so
\begin{equation}
\bbE \left[ \prod_{i = 1}^{j} t_{i} \right] \equiv \pi_{1}^{j}.
\end{equation}
\end{lemma}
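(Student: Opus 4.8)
The plan is to invoke the mutual independence of the treatment indicators, which is the defining property of Bernoulli sampling and sharply distinguishes it from the completely randomized design treated in Lemma~\ref{lem:CRE}. First I would record the single-coordinate fact: since $t_{i} \overset{\rm i.i.d.}{\sim} \mathrm{Bernoulli} (\pi_{1})$ is $\{0,1\}$-valued and equals $1$ with probability $\pi_{1}$, we have $\bbE[t_{i}] = \pi_{1}$. This is the only distributional input needed.

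The key (and essentially only) step is then to factorize the expectation of the product over the distinct indices $i = 1, \ldots, j$ using independence:
\begin{align*}
\bbE \left[ \prod_{i = 1}^{j} t_{i} \right] = \prod_{i = 1}^{j} \bbE[t_{i}] = \pi_{1}^{j}.
\end{align*}
I do not anticipate any genuine obstacle, as the identity is immediate once independence is invoked. The only conceptual point worth flagging is the contrast with Lemma~\ref{lem:CRE}: there the fixed-margin constraint $\sum_{i=1}^{n} t_{i} = n_{1}$ induces dependence among the indicators and forces the finite-population correction factors $\frac{n\pi_{1} - i}{n - i}$ to appear; under Bernoulli sampling these corrections are absent (each collapses to $1$), which is precisely why the product reduces to the clean monomial $\pi_{1}^{j}$.
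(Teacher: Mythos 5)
Your proof is correct and is the standard (indeed the only natural) argument; the paper itself treats this lemma as elementary and states it without proof, so your factorization via independence, $\bbE[\prod_{i=1}^{j} t_i] = \prod_{i=1}^{j}\bbE[t_i] = \pi_1^j$, supplies exactly what is needed. The contrast you draw with the finite-population correction factors in Lemma~\ref{lem:CRE} is accurate and consistent with the paper's discussion.
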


As an almost immediate corollary of Lemma \ref{lem:CRE}, we have the following result useful for deriving the bias and variance of $\hat{\tau}_{\adj, 2}^{\dag}$ (or equivalently $\hat{\tau}_{\db}$).

\begin{lemma}
\label{lem:cre_db}
Under CRE, the following assertions hold:
\begin{align*}
    \bbE \left[ \prod_{i=1}^{m} t_{i} \hat{\tau}_{\unadj} \right] = \prod_{i=1}^{m} \frac{n_{1} - i}{n - i} \cdot \bar{\tau} + \prod_{i=1}^{m-1} \frac{n_{1} - i}{n - i} \frac{\pi_{0}}{n - m} \left( \sum_{j=1}^{m} y_{j} (1) \right),
\end{align*}
and
\begin{align*}
  \bbE \left[ \prod_{i=1}^{m} t_{i} (\hat{\tau}_{\unadj})^{2} \right] = & \ \frac{1}{\pi_{1}} \prod_{i=1}^{m+1} \frac{n\pi_{1}-i}{n-i} \cdot \bar{\tau}^{2} + \frac{\pi_{0}}{\pi_{1}} \prod_{i=1}^{m} \frac{n\pi_{1}-i}{n-i} \frac{1}{n-(m+1)} \cdot \bar{\tau}^{(2)} \\
   & \ + \frac{\pi_{0}}{\pi_{1}} \prod_{i=1}^{m} \frac{n\pi_{1}-i}{n-i} \frac{2}{n-(m+1)} \cdot \bar{\tau} \left( \sum_{i=1}^{n} y_{i}(1) \right)\\
   & \ - \frac{\pi_{0}}{\pi_{1}} \prod_{i=1}^{m} \frac{n\pi_{1}-i}{n-i} \frac{2}{n(n-(m+1))} \left (\sum_{i,j\in\{1,\cdots,m \}} y_{i}(1) y_{j}(1) \right)\\
   & \ + \frac{\pi_{0}}{\pi_{1}} \prod_{i=1}^{m-1} \frac{n\pi_{1}-i}{n-i} \frac{1}{n(n-m)} \left( \sum_{i=1}^{m} y_{i}(1) \right)^{2},
\end{align*}
where $i = 1, \cdots , m$ represents different subscripts.
\end{lemma}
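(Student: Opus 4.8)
The plan is to expand $\hat{\tau}_{\unadj}$ in both identities and reduce everything to joint moments of the treatment indicators, which are supplied by Lemma~\ref{lem:CRE}. By Assumption~\ref{as:consistency} we have $t_k y_k = t_k y_k(1)$, so throughout I may write $\hat{\tau}_{\unadj} = \frac{1}{n \pi_1} \sum_{k=1}^n t_k y_k(1)$, using $n \pi_1 = n_1$.

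For the first identity I would write $\bbE[\prod_{i=1}^m t_i\, \hat{\tau}_{\unadj}] = \frac{1}{n \pi_1} \sum_{k=1}^n y_k(1)\, \bbE[t_k \prod_{i=1}^m t_i]$ and split the sum according to whether $k \in \{1,\dots,m\}$ or not. When $k \le m$ the factor $t_k$ is absorbed (since $t_k^2 = t_k$) and the moment equals $\bbE[\prod_{i=1}^m t_i] = \pi_1 \prod_{i=1}^{m-1} \frac{n_1-i}{n-i}$; when $k > m$ the index is fresh and the moment equals $\pi_1 \prod_{i=1}^{m} \frac{n_1-i}{n-i}$. Collecting the fresh-index terms against $\sum_{k=1}^n y_k(1) = n \bar{\tau}$ yields the leading term $\prod_{i=1}^m \frac{n_1-i}{n-i}\, \bar{\tau}$, while the discrepancy between the two moments on the overlap $\{1,\dots,m\}$ is governed by the elementary identity $1 - \frac{n_1-m}{n-m} = \frac{n_0}{n-m} = \frac{n \pi_0}{n-m}$, which produces exactly the second term $\prod_{i=1}^{m-1} \frac{n_1-i}{n-i} \frac{\pi_0}{n-m} \sum_{j=1}^m y_j(1)$.

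The second identity follows the same philosophy but with the double expansion $(\hat{\tau}_{\unadj})^2 = \frac{1}{n^2 \pi_1^2} \sum_{k,l} t_k t_l y_k(1) y_l(1)$, so that I must evaluate $\bbE[t_k t_l \prod_{i=1}^m t_i]$ for every pair $(k,l)$. The value of this moment depends only on the number of \emph{distinct} indices among $\{1,\dots,m\} \cup \{k,l\}$, which is $m$, $m+1$, or $m+2$; accordingly I partition the pairs $(k,l)$ into five regimes -- (i) $k = l \le m$, (ii) $k = l > m$, (iii) $k \ne l$ both $\le m$, (iv) $k \ne l$ with exactly one index $\le m$, and (v) $k \ne l$ both $> m$ -- and apply Lemma~\ref{lem:CRE} to attach the appropriate factor $E_d \coloneqq \pi_1 \prod_{i=1}^{d-1} \frac{n_1-i}{n-i}$ to each. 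Writing the regime-(v) contribution against the complete off-diagonal sum $(\sum_k y_k(1))^2 - \sum_k y_k(1)^2$ isolates the dominant $\bar{\tau}^2$ term, the diagonal regimes (i)--(ii) generate the second-moment terms $\bar{\tau}^{(2)}$, and regimes (iii)--(iv) generate the correction terms supported on $\{1,\dots,m\}$.

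The remaining work, and the only genuinely laborious step, is the algebraic consolidation. I would express the partial sums $\sum_{k \le m} y_k(1)$ and $\sum_{k \le m} y_k(1)^2$ as corrections to the complete sums, and then use the multiplicative relations $E_{m+1} = E_m \frac{n_1-m}{n-m}$ and $E_{m+2} = E_{m+1} \frac{n_1-m-1}{n-m-1}$, together with the same identity $1 - \frac{n_1-j}{n-j} = \frac{n \pi_0}{n-j}$, to factor out $\prod_{i=1}^m \frac{n_1-i}{n-i}$ (or $\prod_{i=1}^{m-1}$) and surface the denominators $n-m$, $n-(m+1)$ and the $\pi_0/\pi_1$ prefactors in the stated form. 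I expect the bookkeeping across the five regimes -- in particular the factor of two for the ordered pairs in regime (iv), and keeping the diagonal/coincidence corrections mutually consistent -- to be the main source of potential error, but each step is mechanical once the case split is fixed.
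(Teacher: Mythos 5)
Your proposal is correct and follows essentially the same route as the paper: both expand $\hat{\tau}_{\unadj}$ (and its square) into sums over fresh and overlapping indices, classify each term by how many distinct indices it shares with the conditioning set $\{1,\dots,m\}$, apply the CRE product-moment formula of Lemma~\ref{lem:CRE}, and consolidate via the identity $1-\frac{n_1-j}{n-j}=\frac{n\pi_0}{n-j}$; the only difference is that you organize the second identity as a uniform five-regime argument in general $m$, whereas the paper carries out the cases $m=1,2,3$ separately by the same method. One caveat: completing your consolidation yields the third term with $\sum_{i=1}^{m}y_i(1)$ rather than the $\sum_{i=1}^{n}y_i(1)$ printed in the lemma statement (the paper's explicit $m=1,2,3$ computations confirm the former), so the discrepancy you would encounter there is a typo in the statement, not an error in your bookkeeping.
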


\begin{proof}

\begin{align*}
    & \bbE \left[ \prod_{i=1}^{m} t_{i} \left( \frac{1}{n} \sum_{k=1}^{n} \frac{t_{k}}{\pi_{1}} y_{k} (1) \right) \right] = \frac{1}{\pi_{1}} \frac{1}{n} \sum_{k=1}^{n} y_{k} (1) \bbE \left[ t_{k} \prod_{i=1}^{m} t_{i} \right] \\
    = & \ \frac{1}{\pi_{1}} \frac{1}{n} \left\{ \sum_{k \neq \{1,\cdots,m\}} y_{k} (1) \bbE \left[ t_{k} \prod_{i=1}^{m} t_{i} \right] + \sum_{j=1}^{m} y_{j} (1) \bbE \left[ t_{j} \prod_{i=1}^{m} t_{i} \right] \right\} \\
    = & \ \frac{1}{\pi_{1}} \frac{1}{n} \left\{ \sum_{k \neq \{1,\cdots,m\}} y_{k} (1) \bbE \left[ t_{k} \prod_{i=1}^{m} t_{i} \right] + \sum_{j=1}^{m} y_{j} (1) \bbE \left[\prod_{i=1}^{m} t_{i} \right] \right\} \\
    = & \ \frac{1}{n} \left\{ \prod_{i=1}^{m} \frac{n_{1} - i}{n - i} \left( \sum_{k=1}^{n} y_{k} (1) - \sum_{j=1}^{m} y_{j} (1) \right) + \prod_{i=1}^{m-1} \frac{n_{1} - i}{n - i} \sum_{j=1}^{m} y_{j} (1) \right\} \\
    = & \ \prod_{i=1}^{m} \frac{n_{1} - i}{n - i} \left( \frac{1}{n}\sum_{i=1}^{n} y_{k} (1) \right) + \prod_{i=1}^{m-1} \frac{n_{1} - i}{n - i} \left( 1 - \frac{n_{1} - m}{n - m} \right) \frac{1}{n}\sum_{j=1}^{m} y_{j} (1) \\
    = & \ \prod_{i=1}^{m} \frac{n_{1} - i}{n - i} \cdot \bar{\tau} + \prod_{i=1}^{m-1} \frac{n_{1} - i}{n - i} \frac{\pi_{0}}{n - m} \left( \sum_{j=1}^{m} y_{j} (1) \right).
\end{align*}

\begin{align*}
& \ \bbE [t_{j} \hat{\tau}_{\unadj}^{2}] = \bbE \left[ t_{j} \frac{1}{n^{2}} \sum_{k = 1}^{n} \sum_{l = 1}^{n} \frac{t_{k} t_{l}}{\pi_{1}^{2}} y_{k} (1) y_{l} (1) \right] \\
= & \ \frac{1}{\pi_{1}^{2}} \frac{1}{n^{2}} \sum_{1 \leq k \neq l \leq n} y_{k} (1) y_{l} (1) \bbE [t_{j} t_{k} t_{l}] + \frac{1}{\pi_{1}^{2}} \frac{1}{n^{2}} \sum_{k = 1}^{n} y_{k} (1)^{2} \bbE [t_{j} t_{k}] \\
= & \ \frac{1}{\pi_{1}} \frac{1}{n^{2}} \sum_{k \neq l \neq j} y_{k} (1) y_{l} (1) \left( \pi_{1} - \frac{\pi_{0}}{n - 1} \right) \left( \pi_{1} - \frac{2 \pi_{0}}{n - 2} \right) + \frac{2}{\pi_{1}} \frac{1}{n^{2}} \sum_{k \neq j} y_{k} (1) y_{j} (1) \left( \pi_{1} - \frac{\pi_{0}}{n - 1} \right) \\
& + \frac{1}{\pi_{1}} \frac{1}{n^{2}} \sum_{k \neq j} y_{k} (1)^{2} \left( \pi_{1} - \frac{\pi_{0}}{n - 1} \right) + \frac{1}{\pi_{1}} \frac{1}{n^{2}} y_{j} (1)^{2} \\
= & \ \frac{1}{\pi_{1}} \left( \pi_{1} - \frac{\pi_{0}}{n - 1} \right) \left( \pi_{1} - \frac{2 \pi_{0}}{n - 2} \right) \left\{ \bar{\tau}^{2} - \frac{2}{n^{2}} \sum_{k \neq j} y_{k} (1) y_{j} (1) - \frac{1}{n^{2}} \sum_{k \neq j} y_{k} (1)^{2} - \frac{1}{n^{2}} y_{j} (1)^{2} \right\} \\
& + \frac{2}{\pi_{1}} \left( \pi_{1} - \frac{\pi_{0}}{n - 1} \right) \left( \bar{\tau} \frac{1}{n} y_{j} (1) - \frac{1}{n^{2}} y_{j} (1)^{2} \right) + \frac{1}{\pi_{1}} \left( \pi_{1} - \frac{\pi_{0}}{n - 1} \right) \left( \frac{1}{n^{2}} \sum_{k = 1}^{n} y_{k} (1)^{2} - \frac{1}{n^{2}} y_{j} (1)^{2} \right) \\
& + \frac{1}{\pi_{1}} \frac{1}{n^{2}} y_{j} (1)^{2} \\
= & \ \frac{1}{\pi_{1}} \left( \pi_{1} - \frac{\pi_{0}}{n - 1} \right) \left( \pi_{1} - \frac{2 \pi_{0}}{n - 2} \right) \left\{ \bar{\tau}^{2} - 2 \bar{\tau} \frac{1}{n} y_{j} (1) + \frac{2}{n^{2}} y_{j} (1)^{2} - \frac{1}{n^{2}} \sum_{k = 1}^{n} y_{k} (1)^{2} \right\} \\
& + \frac{2}{\pi_{1}} \left( \pi_{1} - \frac{\pi_{0}}{n - 1} \right) \left( \bar{\tau} \frac{1}{n} y_{j} (1) - \frac{1}{n^{2}} y_{j} (1)^{2} \right) + \frac{1}{\pi_{1}} \left( \pi_{1} - \frac{\pi_{0}}{n - 1} \right) \left( \frac{1}{n^{2}} \sum_{k = 1}^{n} y_{k} (1)^{2} - \frac{1}{n^{2}} y_{j} (1)^{2} \right)\\
& + \frac{1}{\pi_{1}} \frac{1}{n^{2}} y_{j} (1)^{2} \\
= & \ \frac{1}{\pi_{1}} \left( \pi_{1} - \frac{\pi_{0}}{n - 1} \right) \left( \pi_{1} - \frac{2 \pi_{0}}{n - 2} \right) \bar{\tau}^{2} + 2 \frac{\pi_{0}}{\pi_{1}} \left( \pi_{1} - \frac{\pi_{0}}{n - 1} \right) \bar{\tau} \frac{1}{n - 2} y_{j} (1) \\
& + \frac{\pi_{0}}{\pi_{1}} \left( \pi_{1} - \frac{\pi_{0}}{n - 1} \right) \frac{1}{n (n - 2)} \sum_{k = 1}^{n} y_{k} (1)^{2} \\
& + \frac{1}{\pi_{1}} \left\{ 2 \left( \pi_{1} - \frac{\pi_{0}}{n - 1} \right) \left( \pi_{1} - \frac{2 \pi_{0}}{n - 2} \right) - 3 \left( \pi_{1} - \frac{\pi_{0}}{n - 1} \right) + 1 \right\} \frac{1}{n^{2}} y_{j} (1)^{2}\\
= & \ \frac{1}{\pi_{1}} \left( \pi_{1} - \frac{\pi_{0}}{n - 1} \right) \left( \pi_{1} - \frac{2 \pi_{0}}{n - 2} \right) \bar{\tau}^{2} + \frac{\pi_{0}}{\pi_{1}} \left( \pi_{1} - \frac{\pi_{0}}{n - 1} \right) \frac{1}{n - 2} \bar{\tau}^{(2)} + \frac{\pi_{0}}{\pi_{1}} \left( \pi_{1} - \frac{\pi_{0}}{n - 1} \right) \frac{2}{n - 2} \bar{\tau} y_{j} (1) \\
& \ - \frac{\pi_{0}}{\pi_{1}} \left( \pi_{1} - \frac{\pi_{0}}{n - 1} \right) \frac{2}{n(n-2)} y_{j} (1)^{2} + \frac{\pi_{0}}{\pi_{1}} \frac{1}{n(n-1)} y_{j} (1)^{2} 
\end{align*}

\begin{align*}
& \ \bbE [t_{i} t_{j} \hat{\tau}_{\unadj}^{2}] = \bbE \left[ t_{i} t_{j} \frac{1}{n^{2}} \sum_{k = 1}^{n} \sum_{l = 1}^{n} \frac{t_{k} t_{l}}{\pi_{1}^{2}} y_{k} (1) y_{l} (1) \right] \\
= & \ \frac{1}{n^{2}} \frac{1}{\pi_{1}^{2}} \left( \sum_{1 \leq k \neq l \leq n} y_{k} (1) y_{l} (1) \bbE [t_{i} t_{j} t_{k} t_{l}] + \sum_{k = 1}^{n} y_{k} (1)^{2} \bbE [t_{i} t_{j} t_{k}] \right) \\
= & \ \frac{1}{n^{2}} \frac{1}{\pi_{1}^{2}} \left( \sum_{k \neq l \neq i \neq j} y_{k} (1) y_{l} (1) \bbE [t_{i} t_{j} t_{k} t_{l}] + 2 \sum_{k \neq i \neq j} y_{k} (1) (y_{i} (1) + y_{j} (1)) \bbE [t_{i} t_{j} t_{k}] + 2 y_{i} (1) y_{j} (1) \bbE [t_{i} t_{j}] \right) \\
& + \frac{1}{n^{2}} \frac{1}{\pi_{1}^{2}} \left( \sum_{k \neq i \neq j} y_{k} (1)^{2} \bbE [t_{i} t_{j} t_{k}] + (y_{i} (1)^{2} + y_{j} (1)^{2}) \bbE [t_{i} t_{j}] \right) \\
= & \ \frac{1}{n^{2}} \frac{1}{\pi_{1}} \sum_{k \neq l \neq i \neq j} y_{k} (1) y_{l} (1) \left( \pi_{1} - \frac{\pi_{0}}{n - 1} \right) \left( \pi_{1} - \frac{2 \pi_{0}}{n - 2} \right) \left( \pi_{1} - \frac{3 \pi_{0}}{n - 3} \right) \\
& + \frac{1}{n^{2}} \frac{1}{\pi_{1}} \sum_{k \neq i \neq j} \{2 y_{k} (1) (y_{i} (1) + y_{j} (1)) + y_{k} (1)^{2}\} \left( \pi_{1} - \frac{\pi_{0}}{n - 1} \right) \left( \pi_{1} - \frac{2 \pi_{0}}{n - 2} \right) \\
& + \frac{1}{n^{2}} \frac{1}{\pi_{1}} (y_{i} (1) + y_{j} (1))^{2} \left( \pi_{1} - \frac{\pi_{0}}{n - 1} \right) \\
= & \ \frac{1}{\pi_{1}} \left( \bar{\tau}^{2} - 2 \bar{\tau} \frac{y_{i} (1) + y_{j} (1)}{n} + \frac{2 (y_{i} (1)^{2} + y_{i} (1) y_{j} (1) + y_{j} (1)^{2})}{n^{2}} - \frac{1}{n^{2}} \sum_{k = 1}^{n} y_{k} (1)^{2} \right) \\
& \times \left( \pi_{1} - \frac{\pi_{0}}{n - 1} \right) \left( \pi_{1} - \frac{2 \pi_{0}}{n - 2} \right) \left( \pi_{1} - \frac{3 \pi_{0}}{n - 3} \right) \\
& + \frac{1}{\pi_{1}} \left\{ 2 \bar{\tau} \frac{y_{i} (1) + y_{j} (1)}{n} - \frac{2 (y_{i} (1)^{2} + y_{i} (1) y_{j} (1) + y_{j} (1)^{2}) + (y_{i} (1) + y_{j} (1))^{2}}{n^{2}} + \frac{1}{n^{2}} \sum_{k = 1}^{n} y_{k} (1)^{2} \right\} \\
& \times \left( \pi_{1} - \frac{\pi_{0}}{n - 1} \right) \left( \pi_{1} - \frac{2 \pi_{0}}{n - 2} \right) \\
& + \frac{1}{\pi_{1}} \left( \frac{y_{i} (1) + y_{j} (1)}{n} \right)^{2} \left( \pi_{1} - \frac{\pi_{0}}{n - 1} \right) \\
= & \ \frac{1}{\pi_{1}} \left( \pi_{1} - \frac{\pi_{0}}{n - 1} \right) \left( \pi_{1} - \frac{2 \pi_{0}}{n - 2} \right) \left( \pi_{1} - \frac{3 \pi_{0}}{n - 3} \right) \bar{\tau}^{2}  + \frac{\pi_{0}}{\pi_{1}} \left( \pi_{1} - \frac{\pi_{0}}{n - 1} \right) \left( \pi_{1} - \frac{2 \pi_{0}}{n - 2} \right) \frac{1}{n - 3} \bar{\tau}^{(2)}\\
& + \frac{\pi_{0}}{\pi_{1}} \left( \pi_{1} - \frac{\pi_{0}}{n - 1} \right) \left( \pi_{1} - \frac{2 \pi_{0}}{n - 2} \right) \frac{2}{n - 3} \bar{\tau} \left( y_{i} (1) + y_{j} (1) \right)\\
& - \frac{\pi_{0}}{\pi_{1}} \left( \pi_{1} - \frac{\pi_{0}}{n - 1} \right) \left( \pi_{1} - \frac{2 \pi_{0}}{n - 2} \right) \frac{2}{n (n - 3)} (y_{i} (1)^{2} + y_{i} (1) y_{j} (1) + y_{j} (1)^{2}) \\
& + \frac{\pi_{0}}{\pi_{1}} \left( \pi_{1} - \frac{\pi_{0}}{n - 1} \right) \frac{(y_{i} (1) + y_{j} (1))^{2}}{n (n - 2)}.
\end{align*}

\begin{align*}
& \ \bbE [t_{i} t_{j} t_{k} \hat{\tau}_{\unadj}^{2}] = \frac{1}{n^{2} \pi_{1}^{2}} \sum_{l = 1}^{n} \sum_{m = 1}^{n} y_{l} (1) y_{m} (1) \bbE \left[ t_{i} t_{j} t_{k} t_{l} t_{m} \right] \\
= & \ \frac{1}{n^{2} \pi_{1}^{2}} \sum_{1 \leq l \neq m \leq n} y_{l} (1) y_{m} (1) \bbE [t_{i} t_{j} t_{k} t_{l} t_{m}] + \frac{1}{n^{2} \pi_{1}^{2}} \sum_{l = 1}^{n} y_{l} (1)^{2} \bbE [t_{i} t_{j} t_{k} t_{l}] \\
= & \ \frac{1}{n^{2} \pi_{1}} \sum_{l \neq m \neq i \neq j \neq k} y_{l} (1) y_{m} (1) \left( \pi_{1} - \frac{\pi_{0}}{n - 1} \right) \left( \pi_{1} - \frac{2 \pi_{0}}{n - 2} \right) \left( \pi_{1} - \frac{3 \pi_{0}}{n - 3} \right) \left( \pi_{1} - \frac{4 \pi_{0}}{n - 4} \right) \\
& + \frac{2}{n^{2} \pi_{1}} \sum_{l \neq i \neq j \neq k} y_{l} (1) (y_{i} (1) + y_{j} (1) + y_{k} (1)) \left( \pi_{1} - \frac{\pi_{0}}{n - 1} \right) \left( \pi_{1} - \frac{2 \pi_{0}}{n - 2} \right) \left( \pi_{1} - \frac{3 \pi_{0}}{n - 3} \right) \\
& + \frac{2}{n^{2} \pi_{1}} (y_{i} (1) y_{j} (1) + y_{i} (1) y_{k} (1) + y_{j} (1) y_{k} (1)) \left( \pi_{1} - \frac{\pi_{0}}{n - 1} \right) \left( \pi_{1} - \frac{2 \pi_{0}}{n - 2} \right) \\
& + \frac{1}{n^{2} \pi_{1}} \sum_{l \neq i \neq j \neq k} y_{l} (1)^{2} \left( \pi_{1} - \frac{\pi_{0}}{n - 1} \right) \left( \pi_{1} - \frac{2 \pi_{0}}{n - 2} \right) \left( \pi_{1} - \frac{3 \pi_{0}}{n - 3} \right) \\
& + \frac{1}{n^{2} \pi_{1}} (y_{i} (1)^{2} + y_{j} (1)^{2} + y_{k} (1)^{2}) \left( \pi_{1} - \frac{\pi_{0}}{n - 1} \right) \left( \pi_{1} - \frac{2 \pi_{0}}{n - 2} \right) \\
= & \ \frac{1}{\pi_{1}} \left( \pi_{1} - \frac{\pi_{0}}{n - 1} \right) \left( \pi_{1} - \frac{2 \pi_{0}}{n - 2} \right) \left( \pi_{1} - \frac{3 \pi_{0}}{n - 3} \right) \left( \pi_{1} - \frac{4 \pi_{0}}{n - 4} \right) \left\{ \begin{array}{c}
\bar{\tau}^{2} - 2 \bar{\tau} \frac{y_{i} (1) + y_{j} (1) + y_{k} (1)}{n} \\
+ \, \frac{2}{n^{2}} \sum\limits_{r, s \in \{i, j, k\}} y_{r} (1) y_{s} (1)\\
- \, \frac{1}{n^{2}} \sum\limits_{l = 1}^{n} y_{l} (1)^{2}
\end{array} \right\} \\
& + \frac{1}{\pi_{1}} \left( \pi_{1} - \frac{\pi_{0}}{n - 1} \right) \left( \pi_{1} - \frac{2 \pi_{0}}{n - 2} \right) \left( \pi_{1} - \frac{3 \pi_{0}}{n - 3} \right) \left\{ \begin{array}{c} 
2 \bar{\tau} \cdot \frac{y_{i} (1) + y_{j} (1) + y_{k} (1)}{n} - 2 \frac{(y_{i} (1) + y_{j} (1) + y_{k} (1))^{2}}{n^{2}} \\
- \, \frac{y_{i} (1)^{2} + y_{j} (1)^{2} + y_{k} (1)^{2}}{n^{2}} + \frac{1}{n^{2}} \sum\limits_{l = 1}^{n} y_{l} (1)^{2}
\end{array} \right\} \\
& + \frac{1}{\pi_{1}} \left( \pi_{1} - \frac{\pi_{0}}{n - 1} \right) \left( \pi_{1} - \frac{2 \pi_{0}}{n - 2} \right) \left( \frac{y_{i} (1) + y_{j} (1) + y_{k} (1)}{n} \right)^{2} \\
= & \ \frac{1}{\pi_{1}} \left( \pi_{1} - \frac{\pi_{0}}{n - 1} \right) \left( \pi_{1} - \frac{2 \pi_{0}}{n - 2} \right) \left( \pi_{1} - \frac{3 \pi_{0}}{n - 3} \right) \left( \pi_{1} - \frac{4 \pi_{0}}{n - 4} \right) \bar{\tau}^{2} \\
& + \frac{\pi_{0}}{\pi_{1}} \left( \pi_{1} - \frac{\pi_{0}}{n - 1} \right) \left( \pi_{1} - \frac{2 \pi_{0}}{n - 2} \right) \left( \pi_{1} - \frac{3 \pi_{0}}{n - 3} \right) \frac{1}{n - 4} \bar{\tau}^{(2)} \\
& + \frac{\pi_{0}}{\pi_{1}} \left( \pi_{1} - \frac{\pi_{0}}{n - 1} \right) \left( \pi_{1} - \frac{2 \pi_{0}}{n - 2} \right) \left( \pi_{1} - \frac{3 \pi_{0}}{n - 3} \right) \frac{2}{n - 4}  \bar{\tau} ( y_{i} (1) + y_{j} (1) + y_{k} (1) ) \\
& - \frac{\pi_{0}}{\pi_{1}} \left( \pi_{1} - \frac{\pi_{0}}{n - 1} \right) \left( \pi_{1} - \frac{2 \pi_{0}}{n - 2} \right) \left( \pi_{1} - \frac{3 \pi_{0}}{n - 3} \right) \frac{2}{n (n - 4)} \sum\limits_{r, s \in \{i, j, k\}} y_{r} (1) y_{s} (1) \\
& + \frac{\pi_{0}}{\pi_{1}} \left( \pi_{1} - \frac{\pi_{0}}{n - 1} \right) \left( \pi_{1} - \frac{2 \pi_{0}}{n - 2} \right) \frac{1}{n (n - 3)} (y_{i} (1) + y_{j} (1) + y_{k} (1))^{2}.
\end{align*}

\end{proof}

\section{Realistic Simulations}
\label{app:sim}
In this section, parallel to the data generating process in Section \ref{sec:simulations}, we conduct realistic simulations by actually drawing treatment assignments based on CRE. Once $\{\bx_i, y_i(1)\}_{i = 1}^{n}$ are generated, they are fixed and random treatment assignments are drawn repeatedly from CRE, with the Monte Carlo repetition size $K = 20000$. 
Figures \ref{fig:main abs rela bias} to \ref{fig:main sd_infl_ratio} respectively report the empirical relative absolute bias, root mean squared error (RMSE), coverage probabilities by the associated large-sample nominal 95\% Wald confidence interval, average length of the confidence intervals and the standard deviation inflation ratio. Overall, the main message is very similar to the oracle simulations, but the differences between $\hat{\tau}_{\adj, 2}$ and $\hat{\tau}_{\adj, 2}^{\dag}$ in terms of the evaluation metrics are not as pronounced as in the theoretical relative efficiency comparison shown in Figure \ref{fig:oracle_var_main}.

    

The mentioned evaluation metrics are defined as follows.
\begin{itemize}
    \item empirical relative absolute bias: $|\frac{1}{K}\sum_{k=1}^{K}\hat{\tau}_{k} - \tau|/\sqrt{\sigma_{n}^2/n}$, where $\sigma_n^2$ is the theoretical asymptotic variance of $\hat{\tau}$.
    
    \item empirical root mean squared error (RMSE): $\sqrt{\frac{1}{K}\sum_{k=1}^{K}(\hat{\tau}_k - \tau)^2}$.
    
    \item empirical coverage probabilities:$\frac{1}{K}\sum_{k=1}^{K}\mathbb{I}\left(\hat{\tau}_k - 1.96\sqrt{\hat{\sigma}_{n,k}^2/n}\leq \tau \leq \hat{\tau}_k + 1.96\sqrt{\hat{\sigma}_{n,k}^2/n}\right)$, where $\hat{\sigma}_{n,k}^{2}$ is the  estimate of the  asymptotic variance of $\hat{\tau}$. For $\hat{\tau}_{\adj, 2},\hat{\tau}_{\adj, 3}$ and $\hat{\tau}_{\adj, 2}^{\dag}(\hat{\tau}_{\db})$, we considered both conservative ($\hat{\nu}^{'\sff}$, in dashed lines) and non-conservative but consistent when $p = O (n)$ ($\hat{\nu}^{\sff}$, in solid lines)  variance estimators defined in Appendix \ref{app:variance estimators}. From Figures \ref{fig:main coverage} -- \ref{fig:main length}, we observe that the large-sample nominal 95\% confidence intervals based on the conservative variance estimators in general have higher coverage probabilities (Figure \ref{fig:main coverage}) than the non-conservative ones with similar coverage lengths (Figure \ref{fig:main length}).
    
    \item empirical average length of the confidence intervals: $2\times 1.96 \sqrt{\hat{\sigma}_{n,k}^2/n}$. It is evident that the average lengths increase as $p$ increases from Figure \ref{fig:main length}. Also, we can observe that, in the upper left panels of Figure \ref{fig:main length}, the average lengths of conservative confidence intervals can still be smaller than those of the confidence intervals associated with the unadjusted estimators. We encourage interested readers to refer to Appendix \ref{app:sim small p} for a more detailed comparison in this regard when $p$ is small compared to $n$.
    
    \item standard deviation inflation ratio: $\frac{1}{K}\sum_{k=1}^{K}\hat{\sigma}_{n,k} / \hat{\sigma}_{n,*}$, where $\hat{\sigma}_{n,*}^2\coloneqq \frac{n}{K-1}\sum_{k=1}^{K}\left(\hat{\tau}_k - \bar{\hat{\tau}}\right)^2$ is the unbiased estimate of the true sampling variance of $\sqrt{n}\hat{\tau}$. Again, it is interesting to note that the estimated variances are generally closer to the Monte Carlo variances as $p$ grows, from Figure \ref{fig:main sd_infl_ratio}.
\end{itemize}

\begin{figure}[H]
    \centering
    \includegraphics[width=\linewidth,page=1]{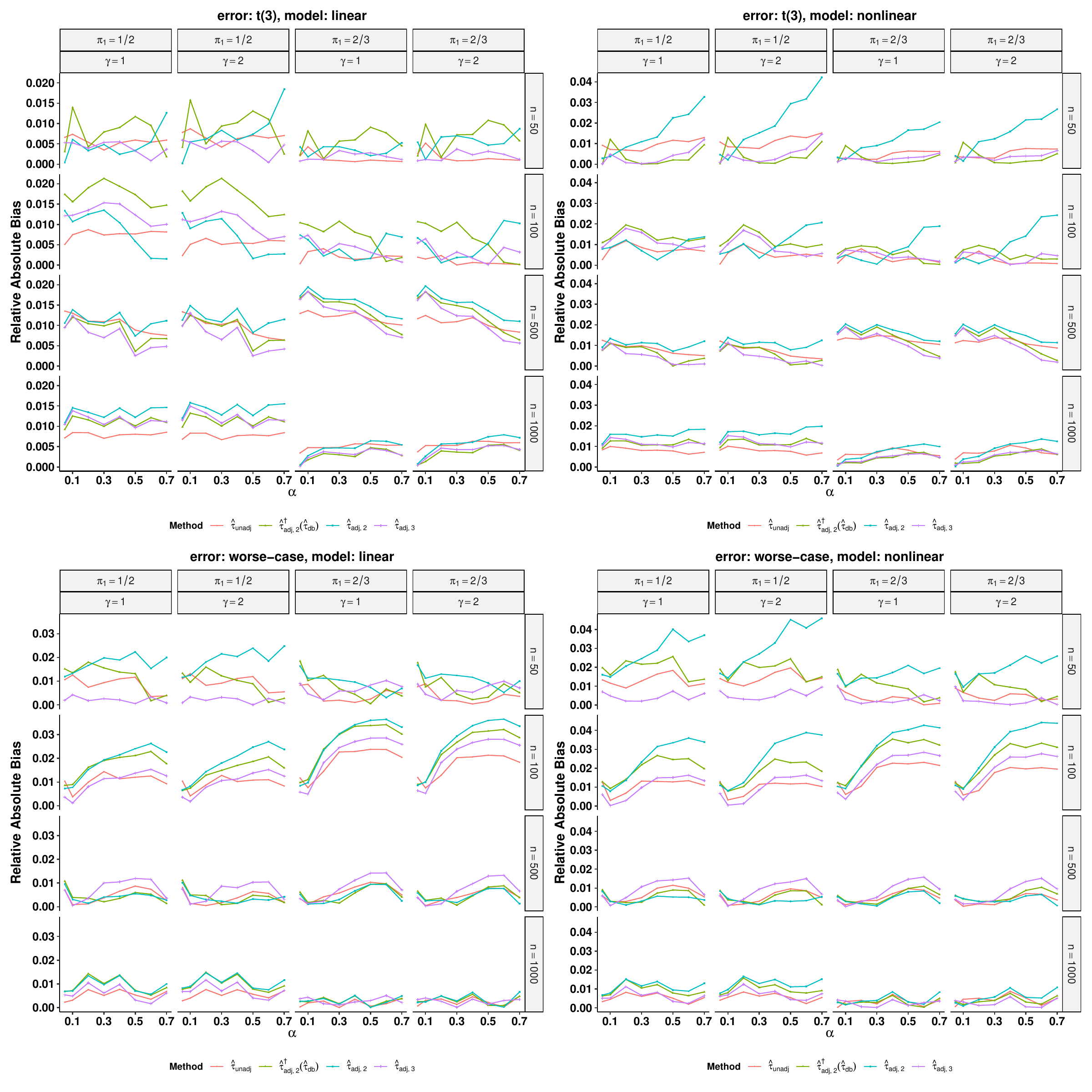}
    \caption{Empirical relative absolute bias for different $\pi_1$, $\gamma$, $\alpha$ and $n$ under the independent t error and the worst-case error. }
    \label{fig:main abs rela bias}
\end{figure}

\begin{figure}[H]
    \centering
    \includegraphics[width=\linewidth,page=2]{Figures_version2/results_measure_X_t_part_methods_V2.pdf}
    \caption{RMSE for different $\pi_1$, $\gamma$, $\alpha$ and $n$ under the independent t error and the worst-case error. }
    \label{fig:main rmse}
\end{figure}

\begin{figure}[H]
    \centering
    \includegraphics[width=\linewidth,page=4]{Figures_version2/results_measure_X_t_part_methods_V2.pdf}
    \caption{Empirical coverage probabilities for different  $\pi_1$, $\gamma$, $\alpha$ and $n$ under the independent t error and the worst-case error. }
    \label{fig:main coverage}
\end{figure}

\begin{figure}[H]
    \centering
    \includegraphics[width=\linewidth,page=5]{Figures_version2/results_measure_X_t_part_methods_V2.pdf}
    \caption{Empirical average length of the confidence intervals  for different $\pi_1$, $\gamma$, $\alpha$ and $n$ under the independent t error and the worst-case error. }
    \label{fig:main length}
\end{figure}

\begin{figure}[H]
    \centering
    \includegraphics[width=\linewidth,page=6]{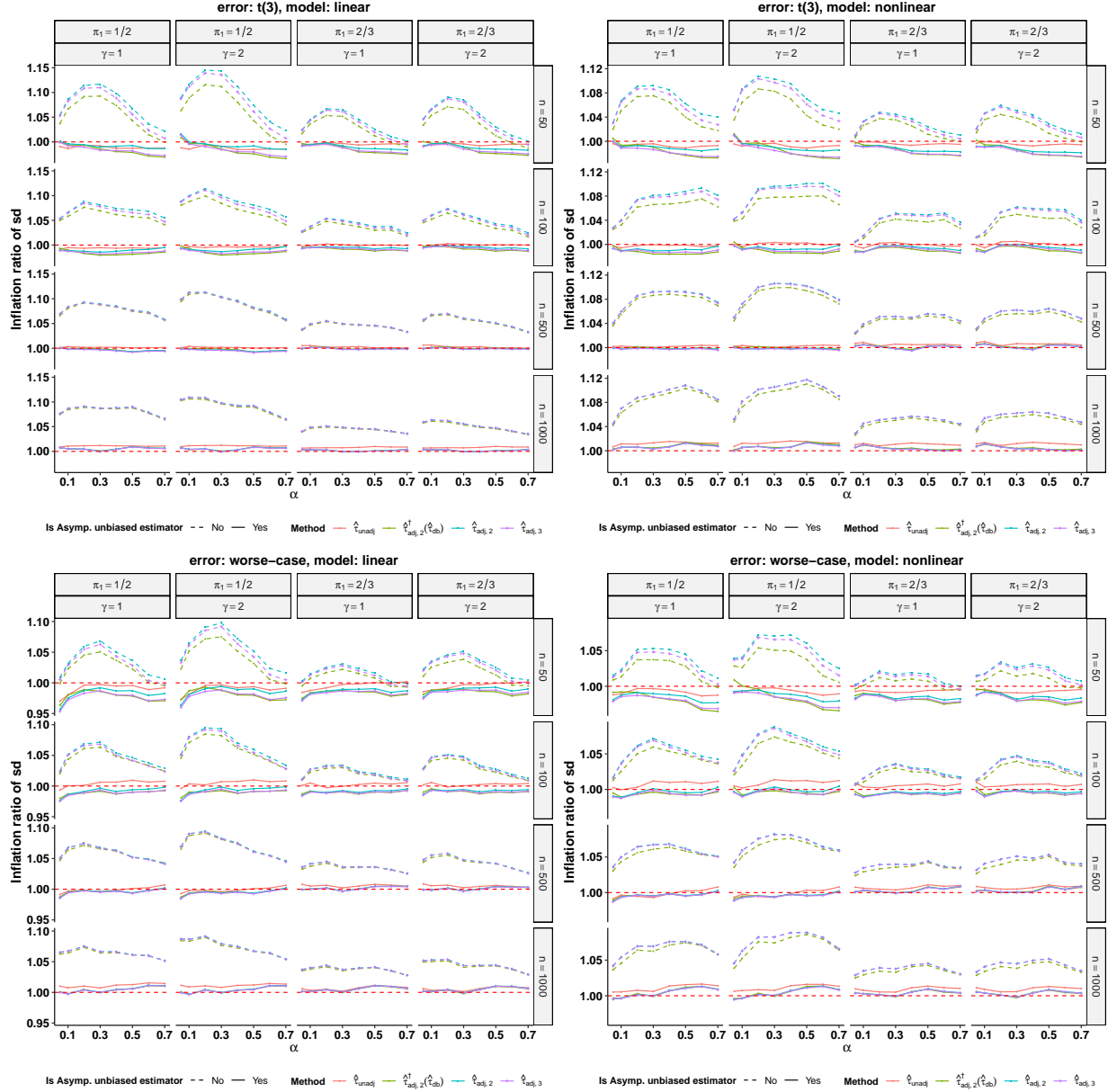}
    \caption{Empirical standard deviation inflation ratio for different $\pi_1$, $\gamma$, $\alpha$ and $n$ under the independent t error and the worst-case error. }
    \label{fig:main sd_infl_ratio}
\end{figure}

\section{More Simulation Experiments When \texorpdfstring{$p$}{p} is Small Compared to \texorpdfstring{$n$}{n}}
\label{app:sim small p}

In this section, we focus on the case when $p/n$ is small. We use the same covariates data $\mathcal{X}\in \bbR^{N\times N}$ and exogenous error $\epsilon_0$ as in Section \ref{sec:simulations}.
Parallel to the data generating process in Section \ref{sec:simulations}, we vary $p \in \{4,6,8,10,\cdots,20\}$, and $ n\in\{50,100,200,500\}$, while other scenarios are the same. Once $\{\bx_i, y_i(1)\}_{i = 1}^{n}$ are generated, they are fixed and random treatment assignments are drawn repeatedly from CRE, with the Monte Carlo repetition size $K = 20000$.

Figures \ref{fig:main abs rela bias small p} to \ref{fig:main sd_infl_ratio small p} respectively report the empirical relative absolute bias, RMSE, coverage probabilities by the associated large-sample nominal 95\% Wald confidence interval, average length of the confidence intervals, and the standard deviation inflation ratio. 
Overall, the main message is that even conservative variance estimators outperform $\hat{\tau}_{\unadj}$ in terms of these  evaluation metrics, particularly regarding the coverage probabilities in Figure \ref{fig:main coverage small p} and the average length of the confidence intervals in Figure \ref{fig:main length small p}. Notably, we do not use any conservative variance estimator for $\hat{\tau}_{\unadj}$; this explains why the nominal large-sample 95\% confidence interval centered around $\hat{\tau}_{\unadj}$ may  under-cover when the sample size is small. Since the conservative confidence intervals centered around the HOIF-motivated adjusted estimators are narrower on average than those centered around $\hat{\tau}_{\unadj}$ based on non-conservative variance estimators, the intervals centered around  $\hat{\tau}_{\unadj}$  using any conservative variance estimators should only be even wider.

\begin{figure}[H]
    \centering
    \includegraphics[width=\linewidth,page=1]{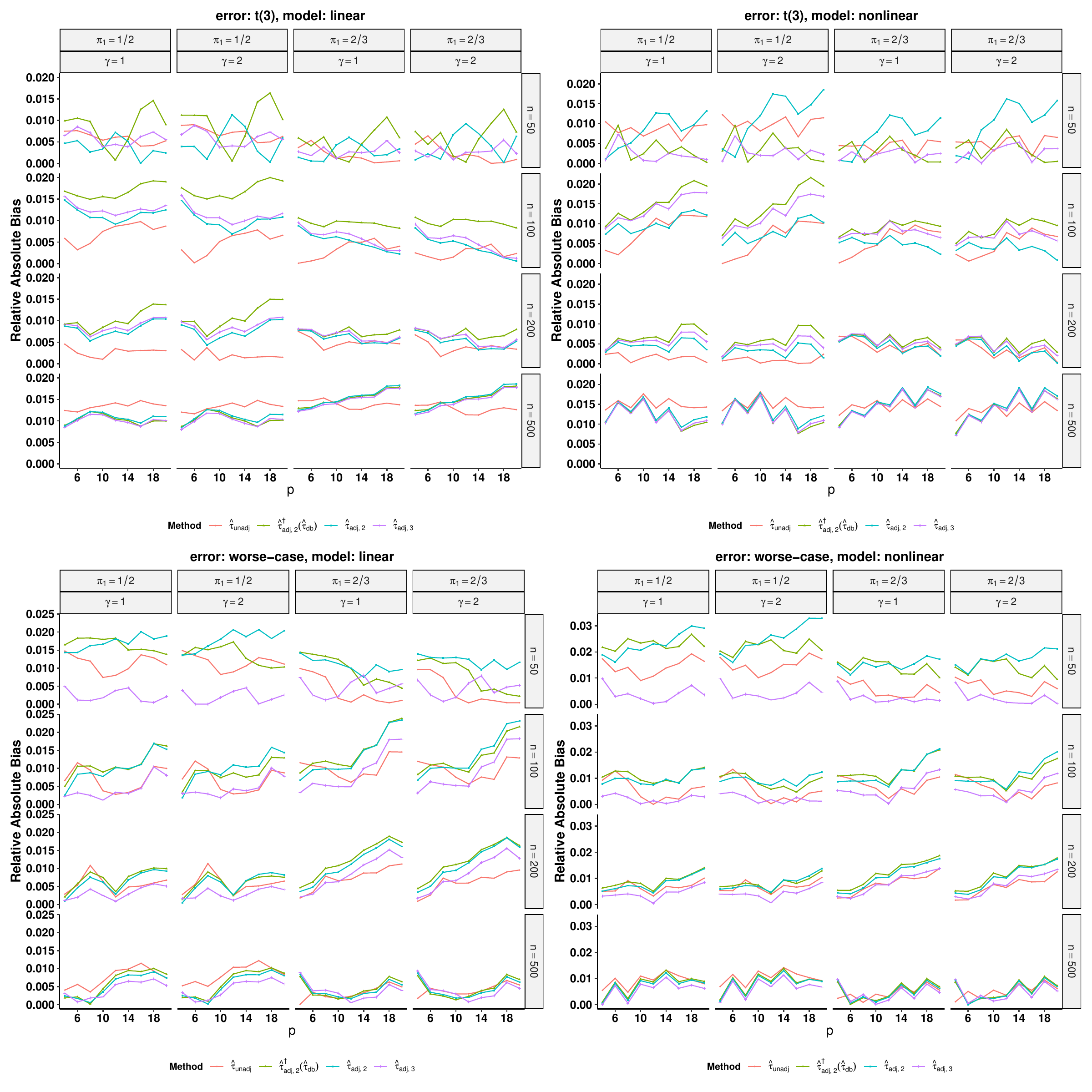}
    \caption{Empirical relative absolute bias for different $\pi_1$, $\gamma$, $\alpha$ and $n$ under the independent t error and the worst-case error. }
    \label{fig:main abs rela bias small p}
\end{figure}

\begin{figure}[H]
    \centering
    \includegraphics[width=\linewidth,page=2]{Figures_version2/results_measure_X_t_part_methods_V2_small_p.pdf}
    \caption{RMSE for different $\pi_1$, $\gamma$, $\alpha$ and $n$ under the independent t error and the worst-case error. }
    \label{fig:main rmse small p}
\end{figure}

\begin{figure}[H]
    \centering
    \includegraphics[width=\linewidth,page=4]{Figures_version2/results_measure_X_t_part_methods_V2_small_p.pdf}
    \caption{Empirical coverage probabilities for different  $\pi_1$, $\gamma$, $\alpha$ and $n$ under the independent t error and the worst-case error. }
    \label{fig:main coverage small p}
\end{figure}

\begin{figure}[H]
    \centering
    \includegraphics[width=\linewidth,page=5]{Figures_version2/results_measure_X_t_part_methods_V2_small_p.pdf}
    \caption{Empirical average length of the confidence intervals for different $\pi_1$, $\gamma$, $\alpha$ and $n$ under the independent t error and the worst-case error. }
    \label{fig:main length small p}
\end{figure}

\begin{figure}[H]
    \centering
    \includegraphics[width=\linewidth,page=6]{Figures_version2/results_measure_X_t_part_methods_V2_small_p.pdf}
    \caption{Empirical standard deviation inflation ratio for different $\pi_1$, $\gamma$, $\alpha$ and $n$ under the independent t error and the worst-case error. }
    \label{fig:main sd_infl_ratio small p}
\end{figure}

\putbib[Master_appendix]

\end{bibunit}
\end{appendices}



\begin{thebibliography}{36}
\providecommand{\natexlab}[1]{#1}
\providecommand{\url}[1]{\texttt{#1}}
\expandafter\ifx\csname urlstyle\endcsname\relax
  \providecommand{\doi}[1]{doi: #1}\else
  \providecommand{\doi}{doi: \begingroup \urlstyle{rm}\Url}\fi

\bibitem[Abadie et~al.(2025)Abadie, Ghadiri, Jadbabaie, and
  JafariNodeh]{abadie2025unbiased}
Alberto Abadie, Mehrdad Ghadiri, Ali Jadbabaie, and Mahyar JafariNodeh.
\newblock Unbiased regression-adjusted estimation of average treatment effects
  in randomized controlled trials.
\newblock \emph{arXiv preprint arXiv:2511.03236}, 2025.

\bibitem[Aronow et~al.(2025)Aronow, Robins, Saarinen, S{\"a}vje, and
  Sekhon]{aronow2025nonparametric}
Peter~M Aronow, James~M Robins, Theo Saarinen, Fredrik S{\"a}vje, and Jasjeet
  Sekhon.
\newblock Nonparametric identification is not enough, but randomized controlled
  trials are.
\newblock \emph{Observational Studies}, 11\penalty0 (1):\penalty0 3--16, 2025.

\bibitem[Bannick et~al.(2025)Bannick, Shao, Liu, Du, Yi, and
  Ye]{bannick2025general}
Marlena~S Bannick, Jun Shao, Jingyi Liu, Yu~Du, Yanyao Yi, and Ting Ye.
\newblock A general form of covariate adjustment in clinical trials under
  covariate-adaptive randomization.
\newblock \emph{Biometrika}, 112\penalty0 (3):\penalty0 asaf029, 2025.

\bibitem[Bhattacharya and Ghosh(1992)]{bhattacharya1992class}
Rabi~N Bhattacharya and Jayanta~K Ghosh.
\newblock A class of ${U}$-statistics and asymptotic normality of the number of
  $k$-clusters.
\newblock \emph{Journal of Multivariate Analysis}, 43\penalty0 (2):\penalty0
  300--330, 1992.

\bibitem[Chang et~al.(2024)Chang, Middleton, and Aronow]{chang2024exact}
Haoge Chang, Joel~A Middleton, and Peter~M Aronow.
\newblock Exact bias correction for linear adjustment of randomized controlled
  trials.
\newblock \emph{Econometrica}, 92\penalty0 (5):\penalty0 1503--1519, 2024.

\bibitem[Freedman(2008{\natexlab{a}})]{freedman2008regressiona}
David~A Freedman.
\newblock On regression adjustments in experiments with several treatments.
\newblock \emph{The Annals of Applied Statistics}, 2\penalty0 (1):\penalty0
  176--196, 2008{\natexlab{a}}.

\bibitem[Freedman(2008{\natexlab{b}})]{freedman2008regressionb}
David~A Freedman.
\newblock On regression adjustments to experimental data.
\newblock \emph{Advances in Applied Mathematics}, 40\penalty0 (2):\penalty0
  180--193, 2008{\natexlab{b}}.

\bibitem[Harshaw et~al.(2022)Harshaw, Wang, and Savje]{harshaw2022design}
Christopher Harshaw, Yitan Wang, and Fredrik Savje.
\newblock A design-based {R}iesz representation framework for randomized
  experiments.
\newblock In \emph{NeurIPS 2022 Workshop on Causality for Real-world Impact},
  2022.

\bibitem[Jiang et~al.(2025)Jiang, Li, Miao, and Zhang]{jiang2025adjustments}
Liang Jiang, Liyao Li, Ke~Miao, and Yichong Zhang.
\newblock Adjustments with many regressors under covariate-adaptive
  randomizations.
\newblock \emph{Journal of Econometrics}, 249:\penalty0 105991, 2025.

\bibitem[Koike(2023)]{koike2023high}
Yuta Koike.
\newblock High-dimensional central limit theorems for homogeneous sums.
\newblock \emph{Journal of Theoretical Probability}, 36\penalty0 (1):\penalty0
  1--45, 2023.

\bibitem[Lei and Ding(2021)]{lei2021regression}
Lihua Lei and Peng Ding.
\newblock Regression adjustment in completely randomized experiments with a
  diverging number of covariates.
\newblock \emph{Biometrika}, 108\penalty0 (4):\penalty0 815--828, 2021.

\bibitem[Lin(2013)]{lin2013agnostic}
Winston Lin.
\newblock Agnostic notes on regression adjustments to experimental data:
  Reexamining {F}reedman's critique.
\newblock \emph{The Annals of Statistics}, 7\penalty0 (1):\penalty0 295--318,
  2013.

\bibitem[Liu and Li(2023)]{liu2023hoif}
Lin Liu and Chang Li.
\newblock New $\sqrt{n}$-consistent, numerically stable empirical higher-order
  influence function estimators.
\newblock \emph{arXiv preprint arXiv:2302.08097}, 2023.

\bibitem[Liu et~al.(2017)Liu, Mukherjee, Newey, and
  Robins]{liu2017semiparametric}
Lin Liu, Rajarshi Mukherjee, Whitney~K Newey, and James~M Robins.
\newblock Semiparametric efficient empirical higher order influence function
  estimators.
\newblock \emph{arXiv preprint arXiv:1705.07577}, 2017.

\bibitem[Liu et~al.(2020{\natexlab{a}})Liu, Mukherjee, and
  Robins]{liu2020nearly}
Lin Liu, Rajarshi Mukherjee, and James~M Robins.
\newblock On nearly assumption-free tests of nominal confidence interval
  coverage for causal parameters estimated by machine learning.
\newblock \emph{Statistical Science}, 35\penalty0 (3):\penalty0 518--539,
  2020{\natexlab{a}}.

\bibitem[Liu et~al.(2020{\natexlab{b}})Liu, Mukherjee, and
  Robins]{liu2020rejoinder}
Lin Liu, Rajarshi Mukherjee, and James~M Robins.
\newblock Rejoinder: On nearly assumption-free tests of nominal confidence
  interval coverage for causal parameters estimated by machine learning.
\newblock \emph{Statistical Science}, 35\penalty0 (3):\penalty0 545--554,
  2020{\natexlab{b}}.

\bibitem[Lu et~al.(2025)Lu, Yang, and Wang]{lu2025debiased}
Xin Lu, Fan Yang, and Yuhao Wang.
\newblock Debiased regression adjustment in completely randomized experiments
  with moderately high-dimensional covariates.
\newblock \emph{The Annals of Statistics}, 53\penalty0 (4):\penalty0
  1535--1558, 2025.

\bibitem[Ma et~al.(2020)Ma, Qin, Li, and Hu]{ma2020statistical}
Wei Ma, Yichen Qin, Yang Li, and Feifang Hu.
\newblock Statistical inference for covariate-adaptive randomization
  procedures.
\newblock \emph{Journal of the American Statistical Association}, 115\penalty0
  (531):\penalty0 1488--1497, 2020.

\bibitem[Ma et~al.(2022)Ma, Tu, and Liu]{ma2022regression}
Wei Ma, Fuyi Tu, and Hanzhong Liu.
\newblock Regression analysis for covariate-adaptive randomization: A robust
  and efficient inference perspective.
\newblock \emph{Statistics in Medicine}, 41\penalty0 (29):\penalty0 5645--5661,
  2022.

\bibitem[Ma et~al.(2024)Ma, Li, Zhang, and Hu]{ma2024new}
Wei Ma, Ping Li, Li-Xin Zhang, and Feifang Hu.
\newblock A new and unified family of covariate adaptive randomization
  procedures and their properties.
\newblock \emph{Journal of the American Statistical Association}, 119\penalty0
  (545):\penalty0 151--162, 2024.

\bibitem[Moore and van~der Laan(2009)]{moore2009covariate}
Kelly~L Moore and Mark~J van~der Laan.
\newblock Covariate adjustment in randomized trials with binary outcomes:
  Targeted maximum likelihood estimation.
\newblock \emph{Statistics in Medicine}, 28\penalty0 (1):\penalty0 39--64,
  2009.

\bibitem[Morgan and Rubin(2012)]{morgan2012rerandomization}
Kari~Lock Morgan and Donald~B Rubin.
\newblock Rerandomization to improve covariate balance in experiments.
\newblock \emph{The Annals of Statistics}, 40\penalty0 (2):\penalty0
  1263--1282, 2012.

\bibitem[Pocock and Simon(1975)]{pocock1975sequential}
Stuart~J Pocock and Richard Simon.
\newblock Sequential treatment assignment with balancing for prognostic factors
  in the controlled clinical trial.
\newblock \emph{Biometrics}, 31\penalty0 (1):\penalty0 103--115, 1975.

\bibitem[Richardson and Rotnitzky(2014)]{richardson2014causal}
Thomas~S Richardson and Andrea Rotnitzky.
\newblock Causal etiology of the research of {J}ames {M}. {R}obins.
\newblock \emph{Statistical Science}, 29\penalty0 (4):\penalty0 459--484, 2014.

\bibitem[Robins et~al.(2008)Robins, Li, Tchetgen~Tchetgen, and van~der
  Vaart]{robins2008higher}
James Robins, Lingling Li, Eric Tchetgen~Tchetgen, and Aad van~der Vaart.
\newblock Higher order influence functions and minimax estimation of nonlinear
  functionals.
\newblock In \emph{Probability and Statistics: Essays in Honor of David A.
  Freedman}, pages 335--421. Institute of Mathematical Statistics, 2008.

\bibitem[Robins et~al.(1994)Robins, Rotnitzky, and Zhao]{robins1994estimation}
James~M Robins, Andrea Rotnitzky, and Lue~Ping Zhao.
\newblock Estimation of regression coefficients when some regressors are not
  always observed.
\newblock \emph{Journal of the American Statistical Association}, 89\penalty0
  (427):\penalty0 846--866, 1994.

\bibitem[Song(2025)]{song2025neumann}
Dogyoon Song.
\newblock Neumann-series corrections for regression adjustment in randomized
  experiments.
\newblock \emph{arXiv preprint arXiv:2511.08539}, 2025.

\bibitem[Tropp(2015)]{tropp2015introduction}
Joel~A Tropp.
\newblock An introduction to matrix concentration inequalities.
\newblock \emph{Foundations and Trends{\textregistered} in Machine Learning},
  8\penalty0 (1-2):\penalty0 1--230, 2015.

\bibitem[Tsiatis et~al.(2008)Tsiatis, Davidian, Zhang, and
  Lu]{tsiatis2008covariate}
Anastasios~A Tsiatis, Marie Davidian, Min Zhang, and Xiaomin Lu.
\newblock Covariate adjustment for two-sample treatment comparisons in
  randomized clinical trials: A principled yet flexible approach.
\newblock \emph{Statistics in Medicine}, 27\penalty0 (23):\penalty0 4658--4677,
  2008.

\bibitem[{US Food and Drug Administration}(2023)]{FDA2023}
{US Food and Drug Administration}.
\newblock Adjusting for covariates in randomized clinical trials for drugs and
  biological products, 2023.

\bibitem[Wang et~al.(2023)Wang, Susukida, Mojtabai, Amin-Esmaeili, and
  Rosenblum]{wang2023model}
Bingkai Wang, Ryoko Susukida, Ramin Mojtabai, Masoumeh Amin-Esmaeili, and
  Michael Rosenblum.
\newblock Model-robust inference for clinical trials that improve precision by
  stratified randomization and covariate adjustment.
\newblock \emph{Journal of the American Statistical Association}, 118\penalty0
  (542):\penalty0 1152--1163, 2023.

\bibitem[Wu and Gagnon-Bartsch(2018)]{wu2018loop}
Edward Wu and Johann~A Gagnon-Bartsch.
\newblock The {LOOP} estimator: Adjusting for covariates in randomized
  experiments.
\newblock \emph{Evaluation Review}, 42\penalty0 (4):\penalty0 458--488, 2018.

\bibitem[Yang and Tsiatis(2001)]{yang2001efficiency}
Li~Yang and Anastasios~A Tsiatis.
\newblock Efficiency study of estimators for a treatment effect in a
  pretest--posttest trial.
\newblock \emph{The American Statistician}, 55\penalty0 (4):\penalty0 314--321,
  2001.

\bibitem[Ye et~al.(2023)Ye, Shao, Yi, and Zhao]{ye2023toward}
Ting Ye, Jun Shao, Yanyao Yi, and Qingyuan Zhao.
\newblock Toward better practice of covariate adjustment in analyzing
  randomized clinical trials.
\newblock \emph{Journal of the American Statistical Association}, 118\penalty0
  (544):\penalty0 2370--2382, 2023.

\bibitem[Zhang et~al.(2008)Zhang, Tsiatis, and Davidian]{zhang2008improving}
Min Zhang, Anastasios~A Tsiatis, and Marie Davidian.
\newblock Improving efficiency of inferences in randomized clinical trials
  using auxiliary covariates.
\newblock \emph{Biometrics}, 64\penalty0 (3):\penalty0 707--715, 2008.

\bibitem[Zhao and Ding(2021)]{zhao2021covariate}
Anqi Zhao and Peng Ding.
\newblock Covariate-adjusted {F}isher randomization tests for the average
  treatment effect.
\newblock \emph{Journal of Econometrics}, 225\penalty0 (2):\penalty0 278--294,
  2021.

\end{thebibliography}


\begin{thebibliography}{10}
\providecommand{\natexlab}[1]{#1}
\providecommand{\url}[1]{\texttt{#1}}
\expandafter\ifx\csname urlstyle\endcsname\relax
  \providecommand{\doi}[1]{doi: #1}\else
  \providecommand{\doi}{doi: \begingroup \urlstyle{rm}\Url}\fi

\bibitem[Aronow et~al.(2014)Aronow, Green, and Lee]{aronow2014sharp}
Peter~M Aronow, Donald~P Green, and Donald~KK Lee.
\newblock Sharp bounds on the variance in randomized experiments.
\newblock \emph{The Annals of Statistics}, 42\penalty0 (3):\penalty0 850--871,
  2014.

\bibitem[Athey et~al.(2024)Athey, Imbens, Metzger, and Munro]{athey2024using}
Susan Athey, Guido~W Imbens, Jonas Metzger, and Evan Munro.
\newblock Using {W}asserstein generative adversarial networks for the design of
  {M}onte {C}arlo simulations.
\newblock \emph{Journal of Econometrics}, 240\penalty0 (2):\penalty0 105076,
  2024.

\bibitem[H{\'a}jek(1960)]{hajek1960limiting}
Jaroslav H{\'a}jek.
\newblock Limiting distributions in simple random sampling from a finite
  population.
\newblock \emph{A Magyar Tudom{\'a}nyos Akad{\'e}mia Matematikai Kutat{\'o}
  Int{\'e}zet{\'e}nek K{\"o}zlemenyei}, 5\penalty0 (3):\penalty0 361--374,
  1960.

\bibitem[Johnson(2007)]{NLopt}
Steven~G Johnson.
\newblock The {NLopt} nonlinear-optimization package, 2007.

\bibitem[Lei and Ding(2021)]{lei2021regression}
Lihua Lei and Peng Ding.
\newblock Regression adjustment in completely randomized experiments with a
  diverging number of covariates.
\newblock \emph{Biometrika}, 108\penalty0 (4):\penalty0 815--828, 2021.

\bibitem[Lu et~al.(2025)Lu, Yang, and Wang]{lu2025debiased}
Xin Lu, Fan Yang, and Yuhao Wang.
\newblock Debiased regression adjustment in completely randomized experiments
  with moderately high-dimensional covariates.
\newblock \emph{The Annals of Statistics}, 53\penalty0 (4):\penalty0
  1535--1558, 2025.

\bibitem[Robins(1988)]{robins1988confidence}
James~M Robins.
\newblock Confidence intervals for causal parameters.
\newblock \emph{Statistics in Medicine}, 7\penalty0 (7):\penalty0 773--785,
  1988.

\bibitem[Robins(2002)]{robins2002covariance}
James~M Robins.
\newblock Covariance adjustment in randomized experiments and observational
  studies: Comment.
\newblock \emph{Statistical Science}, 17\penalty0 (3):\penalty0 286--327, 2002.

\bibitem[Rosenbaum(2002)]{rosenbaum2002covariance}
Paul~R Rosenbaum.
\newblock Covariance adjustment in randomized experiments and observational
  studies.
\newblock \emph{Statistical Science}, 17\penalty0 (3):\penalty0 286--327, 2002.

\bibitem[Tropp(2015)]{tropp2015introduction}
Joel~A Tropp.
\newblock An introduction to matrix concentration inequalities.
\newblock \emph{Foundations and Trends{\textregistered} in Machine Learning},
  8\penalty0 (1-2):\penalty0 1--230, 2015.

\end{thebibliography}
\end{document}